\newtheorem{theorem}{Theorem}[section] 
\newtheorem{definition}[theorem]{Definition}
\newtheorem{lemma}[theorem]{Lemma}
\newtheorem{proposition}[theorem]{Proposition}
\newtheorem{example}[theorem]{Example}
\numberwithin{equation}{section}
\newcommand \bz {\begin{itemize}}
\newcommand \ez {\end{itemize}}
\newcommand \ben {\begin{enumerate}}
\newcommand\een {\end{enumerate}} 
\newcommand \N {\mathbb N}
\newcommand \R {\mathbb R} 
\newcommand{\Eqref}[1]{\eqref{#1}}
\newcommand{\Eqsref}[1]{\eqref{#1}}
\newcommand{\Sectionref}[1]{Section~\ref{#1}}  
\newcommand{\Lemref}[1]{Lemma~\ref{#1}}
\newcommand{\Propref}[1]{Proposition~\ref{#1}}
\newcommand{\Theoremref}[1]{Theorem~\ref{#1}}
\newcommand{\Figref}[1]{Figure~\ref{#1}}
\newcommand \auth {} 
\newcommand \jou {\textit}
\newcommand \muh M 
\newcommand \vv u
\newcommand \RR 		{\mathbb{R}}  
\newcommand \del 	\partial
\newcommand \eps 	\epsilon
\newcommand \lam 	\lambda 
\newcommand \be 		{\begin{equation}}
\newcommand\ee 		{\end{equation}}
\newcommand \alphat {\widetilde \alpha} 
\newcommand \what {\widehat w}
\newcommand \Xt {\widetilde X} 
\newcommand \Et {\widetilde E}
\newcommand \Pt {\widetilde P} 
\newcommand \Qt {\widetilde Q}
\begin{document}

\title{Second-order hyperbolic Fuchsian systems
\\
 and applications} 

\author{Florian Beyer$^1$ and Philippe G. LeFloch$^2$ 
}

\date{October 2010}

\maketitle 

\footnotetext[1]{
Department of Mathematics and Statistics, University of Otago, P.O.~Box 56, Dunedin 9054, New Zealand, E-mail: {fbeyer@maths.otago.ac.nz}.
\newline
$^2$Laboratoire Jacques-Louis Lions \& Centre National de la Recherche Scientifique, 
Universit\'e Pierre et Marie Curie (Paris 6), 
4 Place Jussieu, 75252 Paris, France. E-mail: {pgLeFloch@gmail.com.}
\newline 
Published in: {\tt Class. Quantum Grav. 27  (2010), 245012.}
\hfill
}

\begin{abstract}
  We introduce a new class of singular partial differential equations, referred to as 
  the
  {\sl second-order hyperbolic Fuchsian systems,} and we investigate the
  associated initial value problem when data are imposed on the
  singularity.  First, we establish a general existence theory of solutions
  with asymptotic behavior prescribed on the singularity, which
   relies on
  a new approximation scheme, suitable also for numerical purposes.
  Second, 
this theory is applied to the (vacuum) Einstein equations for
  Gowdy spacetimes, and allows us to recover, by more direct arguments, 
  {\sl well-posedness results} established earlier by Rendall and
  collaborators. Another main contribution in this paper is the proposed
  approximation scheme, which we refer to as the {\sl Fuchsian numerical
  algorithm} and is shown to provide highly accurate numerical
  approximations to the singular initial value problem. For the class
  of Gowdy spacetimes, the numerical experiments
  presented here show the interest and efficiency of the proposed
  method and demonstrate the existence of a class of Gowdy spacetimes containing a
  smooth, incomplete, and non-compact Cauchy horizon.
\end{abstract}

\tableofcontents


\section{Introduction}
 
In general relativity and, more generally, in the theory of  partial differential equations, 
singular solutions
play a central role in driving the development of the theory 
and interpretation of the results 
-- from the discovery of the
Schwarzschild solution to the Penrose and Hawking's celebrated singularity theorems
(cf.~for instance \cite{hawking}), 
as well as in the modern efforts to understand the
cosmic censorship and BKL (Belinsky�-Khalatnikov�-Lifshitz) conjectures 
\cite{Andersson,Ringstrom7}.

In fundamental and pioneer work, 
Choquet-Bruhat \cite{Choquet52}  established 
that the initial value problem associated with Einstein's field equations is
well-posed (in suitable Sobolev spaces) 
and, later together with Geroch \cite{Choquet69},
that for each choice of initial data consistent with
the constraints there exists a unique maximal globally hyperbolic
development. This theory allows to define a unique correspondence, between
the space
of solutions of the Einstein equations 
and the space of initial data set on a given 
(spacelike) hypersurface. In principle, solutions could be extended until 
a singularity forms and this theory provides the basis 
to tackle outstanding questions about the long-time behavior of solutions, 
in both future and past directions. Consider for instance the situation that 
singularities develop {\sl in the past} of the given hypersurface, and let us refer to the construction
based on \cite{Choquet52,Choquet69} 
as the ``backward approach''. 
In practice, however, revealing information about singular solutions requires 
additional techniques of analysis going much beyond \cite{Choquet52,Choquet69}. 
On one hand, there is no a-priori
information whether a given choice of initial data evolves into a
``singularity'' at all. On the other hand,
if a singularity arises, it is not a-priori clear at ``which location'' it will 
occur, nor what kind of ``singular solution'' it will be. 
This approach requires a global-in-time control of solutions which is  
hopeless in many cases due to the complexity of the nonlinear field equations, the
freedom of choice of gauge etc. In fact, the ``backward approach'' 
has so far led to successful conclusions only under strong symmetry assumptions; 
cf.~Ringstr\"om \cite{Ringstrom7} (and the references therein).

The so-called Fuchsian method, which we refer to here as a
``forward'' approach, provides an alternative to the above (backward) approach. It has the
advantage of being simpler to deal with in practice (both 
analytically and
numerically) in the class of problems it does apply, 
but has the disadvantage that it does not allow -- for essential reasons -- to handle the 
full solution space and the conclusions are stated in terms of ``generic data''. 
Nevertheless it allows to study particular
classes of singular solutions. The main idea is to give data ``on the
singularity'', that is, 
to prescribe the leading-order behavior
of solutions in a neighborhood of the singular time and, then, to evolve forward
in time away from the singularity -- in contrast to prescribing data
on a Cauchy surface and evolving towards the singularity. 
(We will make this idea precise below.) 
Our main objective in the present 
paper is precisely 
to further 
study this \textbf{singular initial value problem} (corresponding to the forward direction).

The Fuchsian method was introduced to general relativity by
Kichenassamy and Rendall \cite{KichenassamyRendall} who
covered a class of singular equations, the so-called ``Fuchsian partial
  differential equations'', 
  while
  establishing that Einstein's field equations under Gowdy
symmetry are included in this class. 
 Under suitable analyticity conditions on the solutions they
proved the well-posedness of the singular initial value
problem (without imposing any hyperbolicity condition).
 Later, Rendall \cite{Rendall00} took the hyperbolicity
property into account and generalized the theory
to the smooth solutions to the Gowdy equations. 
For further results about Fuchsian equations we refer to 
\cite{Kichenassamy,ChoquetIsenberg,Choquet08,Choquet09}; especially,
in \cite{Choquet08}, a generalization of the standard Fuchsian theory was recently introduced.

Our motivation in this
paper now is threefold. First, it is of general interest to find a
reliable numerical scheme for the singular initial value problem of
hyperbolic Fuchsian equations, which 
would allow us to construct
solutions to Einstein's field equations with prescribed singular
behavior. This was indeed the motivation of Amorim, Bernardi, and LeFloch 
in  
\cite{ABL}. 
However, 
the approximation scheme suggested by the theoretical works
above
is neither suitable nor natural for the numerical
treatment, and it is not easy to obtain error estimates which 
are necessary in practice to judge the quality of the numerical solutions. 
In this paper, we address this drawback by introducing a new
approximation scheme. 

Secondly, we want to perform these studies ---both theoretically and numerically---
as economically as possible. Importantly,
the Einstein equations for Gowdy spacetimes are naturally expressed in a 
second-order form.  This motivates us to develop here
a theory of {\bf second-order Fuchsian equations}
which is advantageous as it saves us from the hassle of turning
the system into a first-order form first, as required by the classical 
Fuchsian theory. 
In fact, there is no unique way of
turning a second-order system into a first-order form, 
and this issue can be
particularly problematic when the solutions are
expected to be singular. 
It has also been recognized in the numerical literature 
that the direct discretization in second-order form leads to more accurate results 
\cite{Kreiss}. 

Our 
third motivation for this paper is that the existing works make
no statement about how to guess the leading-order part.
This is a delicate issue and we prove here that, for a large class of systems
and in  way 
 compatible with the (already mentioned) BKL conjecture, it is 
 possible to make a ``canonical guess'', as we explain below. 

These three main issues are addressed in the present paper, which is 
organized as follows. 

Sections~\ref{sec:2ndorderfuchsiantheory} and
\ref{sec:existence} are the main theoretical part.
In \Sectionref{sec:2ndorderfuchsianterminology}, we introduce 
the
class of equations of interest in this work. 
We focus on a class of hyperbolic Fuchsian equations, the 
\textbf{second-order hyperbolic Fuchsian equations}, which
are systems
of semi-linear wave equations including certain singular terms.  
This section includes a rigorous definition of the singular
initial value problem for this class of
equations. \Sectionref{sec:canonicallot} is devoted to determine the
leading-order behavior of solutions and hence to make a ``canonical
guess''. 
We study the case when the principal
part of the equation dominates over the source-term at $t=0$ in a
certain sense and derive a \textbf{canonical two-term expansion}. The
reasoning is based on suitable heuristics compatible
with the BKL conjecture. (Later in this paper we derive
precise conditions under which this heuristics is justified.)

The singular initial value problem is discussed rigorously in \Sectionref{sec:existence}. We introduce our new
approximation scheme which, both, yields a simple and direct proof
of existence of solutions to
 the singular initial value problem, and a natural numerical algorithm
 (introduced
in \Sectionref{sec:numerical_scheme})
for which practical error
estimates can be obtained. 
The main idea is to approximate the solution of the
singular initial value problem by a sequence of solutions to the
standard (regular) initial value problem. 

Finally, in \Sectionref{sec:Gowdy}, 
we turn our attention to the class of Gowdy
spacetimes satisfying Einstein's field equations which is an application of particular interest, 
and we demonstrate the practical use of 
our theory. First, we recapitulate the standard heuristic arguments
for Gowdy spacetimes and demonstrate that these are consistent with the
heuristics introduced earlier (with some important subtleties discussed below). 
In \Sectionref{sec:gowdyequations2ndhypFuchs}, the now classical 
results by Rendall \cite{Rendall00} are recovered while 
our new approach shed some new
light on the Gowdy equations. 
In \Sectionref{sec:numericalsolGowdy} we present numerical
solutions to the singular initial value problem associated with the Gowdy
equations. After some test cases, we numerically construct solutions with
incomplete, non-compact Cauchy horizons.


\section{Second-order hyperbolic Fuchsian systems} 
\label{sec:2ndorderfuchsiantheory}

\subsection{A class of singular equations}
\label{sec:2ndorderfuchsianterminology}

\begin{definition}
A second-order hyperbolic Fuchsian system is a set of partial
differential equations of the form
\be
    \label{eq:secondorderFuchsianHyp}
      D^2 v+2A\, D v+B\, v 
      -t^2K^2\partial_x^2 v
      = f[v],
\ee
in which the function $v:(0,\delta]\times U\rightarrow \R^n$ is the
main unknown (defined for some $\delta>0$ and some interval $U$),
while the coefficients $A=A(x)$, $B=B(x)$, $K=K(t,x)$ are diagonal
$n\times n$ matrix-valued maps and are smooth in $x \in U$ and $t$ in
the half-open interval $(0,\delta]$, and $f=f[v](t, x)$ is an
$n$-vector-valued map of the following form
\begin{equation*} 
  f[v](t,x):=f\Big(t, x, v(t,x), Dv(t,x), t K(t,x) \del_x v(t,x)\Big).
\end{equation*}
\end{definition}

We assume that the time variable $t$ satisfies $t>0$ and use the
operator
$
D:=t\del_t
$ 
to write the equations. The equation is henceforth assumed to be
singular at $t=0$. The assumption that $U$ is a one-dimensional domain
makes the presentation simpler, but most results below remain valid
for arbitrary spatial dimensions.  For definiteness and without much
loss of generality, we assume throughout this paper that all functions
under consideration are periodic in the spatial variable $x$ and that
$U$ is the periodicity domain. All data and solutions are extended by
periodicity outside the interval $U$.  Moreover, we assume that the
coefficients $A$ and $B$ do not depend on $t$, see below.  We denote
the eigenvalues of $A$ and $B$ by $a^{(1)},\ldots,a^{(n)}$ and
$b^{(1)},\ldots,b^{(n)}$, respectively. When it is not necessary to
specify the superscripts, we just write $a, b$ to denote any
eigenvalues of $A,B$.  With this convention, we introduce:
\begin{equation}
  \label{eq:deflambda2}
  \lambda_{1}:=a+\sqrt{a^2-b},\quad \lambda_{2}:=a-\sqrt{a^2-b}.  
\end{equation}
It will turn out that these coefficients, which might be complex in
general, are important to describe the expected behavior at $t=0$ of
general solutions to \Eqref{eq:secondorderFuchsianHyp}.  Further
restrictions on the coefficients and on the right-hand side will be
imposed and discussed in the course of our investigation.

After a suitable reduction to a first-order form, our choice of
singular equations falls into the class of hyperbolic Fuchsian
equations
\cite{KichenassamyRendall,Rendall00,Kichenassamy,Choquet08}. We make
this particular choice of equations here for the following
reasons. First we restrict to hyperbolic equations because this is the
case of interest and allows us to control solutions in much greater
detail than without this assumption. Second, the equations are kept in
second-order form here because it is economic and efficient to do this
for the applications we have in mind both analytically and
numerically. Third, the particular class of equations allows us to
simplify the presentation of the general results obtained in this
paper.  However, all results presented here can be generalized to a
general class of symmetric hyperbolic Fuchsian equations -- compare to
the discussion in Rendall \cite{Rendall00} -- for arbitrary spatial
dimensions. Even beyond this it is possible to generalize the theory
to equations with time-dependent coefficients $A$ and $B$. Also the
restriction to spatial periodicity is not essential because these
equations obey the domain of dependence property just as usual
non-singular hyperbolic equations under suitable assumptions on $K$,
see below.

The eigenvalues of the matrix $K$ are denoted by $k^{(i)}$ and, in the
scalar case (or when there is no need to specify the index), we
simply write $k$. These quantities are interpreted as
characteristic speeds. Throughout this section, we assume that 
they have the form 
\begin{equation}
  \label{eq:behaviorofk}
  \begin{split}
    &k^{(i)}(t,x)=t^{\beta^{(i)}(x)}\nu^{(i)}(t,x),\\
    &\text{with }\beta^{(i)}:U\rightarrow (-1,\infty),\,
    \nu^{(i)}:[0,\delta]\times U\rightarrow (0,\infty)\text{ smooth functions.}
  \end{split}
\end{equation}
In particular, we assume that each derivative of $\nu^{(i)}$ has a
unique finite limit at $t=0$ for each $x\in U$.  Note that we allow
for the characteristic speeds to diverge at $t=0$. At a first glance,
this appears to conflict with the standard finite domain of dependence
property of hyperbolic equations. A closer look at the requirement
$\beta(x)>-1$, however, indicates that the characteristic curves are
{\sl integrable} at $t=0$ and, hence that the {\sl finite} domain of
dependence property is preserved under our assumptions.

The operator associated with the \textbf{principal part} of the system
is
\begin{equation}
  \label{eq:defLPDE}
  L:=D^2+2A \, D +B-t^2K^2\partial_x^2=:\tilde L-t^2K^2\partial_x^2.
\end{equation}
This is a linear wave operator for $t>0$ and, indeed,
\Eqref{eq:secondorderFuchsianHyp} is hyperbolic for all $t>0$. Later
on we will construct solutions where the first three terms $\tilde L$
of the principal part are of the same order at $t=0$ and ``dominant'',
while the \textbf{source-term} of the equation as well as the second spatial
derivative term are assumed to be of higher-order in $t$ at $t=0$ and
hence ``negligible''. Note that at this level generality, there is
some freedom in bringing terms from the principal part to the
right-hand side of the equation, and absorbing them into the
source-function $f$ (or vice-versa). This freedom has several
(interesting) consequences: roughly speaking, some normalization will
be necessary later, yet at this stage, we do not fix the behavior of
$f$ at $t=0$.

\subsection{Singular initial value problem} 

Consider any second-order hyperbolic Fuchsian system with coefficients
$a,b, \lambda_1, \lambda_2$, satisfying \Eqref{eq:deflambda2}.  To
simplify the presentation, we restrict attention to scalar equations
($n=1$) and shortly comment on the general case in the course of the
discussion.

Fix some integers $l,m \geq 0$ and constants $\alpha,\delta>0$.  For $w\in
C^{l}((0,\delta],H^m(U))$, we define the norm 
\begin{equation*} 
\|w\|_{\delta,\alpha,l,m}:=\sup_{0<t\le \delta} \left(
  \sum_{p=0}^l\sum_{q=0}^m\int_U t^{2(\Re\lambda_2(x)-\alpha)}
  \,\bigl|\del_x^q D^pw(t,x)\bigr|^2 \, dx \right)^{1/2},
\end{equation*}
and denote by $X_{\delta,\alpha,l,m}$ the space of all such functions
with finite norm $\|w\|_{\delta,\alpha,l,m}<\infty$.  Throughout,
$H^m(U)$ denotes the standard Sobolev space and we recall that all
functions are periodic in the variable $x$ with $U$ being a
periodicity domain.  To cover a system of $n\ge 1$ second-order
Fuchsian equations, the norm above is defined by summing over all
vector components with different exponents used for different
components. Recall that each equation in the system will have a
different root function $\lambda_2$.  We allow that
$\alpha=(\alpha^{(1)},\ldots,\alpha^{(n)})$ is a vector of different
positive constants for each equation. The constant $\delta$, however,
is assumed to be common for all equations in the system. With this
modification, all results in the present section remain valid for
systems of equations. We comment later that in fact, $\alpha$ is not
required to be a constant, but in most of the following results it
will be treated like a constant for simplicity. The reason to include
the quantity $\lambda_2$ into the definition of the norms is motivated
by the canonical choice of the leading-order term introduced later.
Throughout it is assumed that $\Re\lambda_2$ is continuous and it is
then easy to check that
$(X_{\delta,\alpha,l,m},\|\cdot\|_{\delta,\alpha,l,m})$ is a Banach
space.

For the discussion of hyperbolic equations, it makes sense to also
introduce the following Banach spaces. For each non-negative integer
$l$ and real numbers $\delta,\alpha>0$, we define
$X_{\delta,\alpha,l}:=\bigcap_{p=0}^lX_{\delta,\alpha,p,l-p},
$
and introduce the norm
\[\|f\|_{\delta,\alpha,l}:=\Biggl(\sum_{p=0}^l
  \|f\|_{\delta,\alpha,p,l-p}^2\Biggr)^{1/2}, \qquad 
   f\in X_{\delta,\alpha,l}.
\]
As we will see in the course of the following, however, it is not
possible to control solutions of our equations in the spaces
$X_{\delta,\alpha,l}$ directly. It turns out that we must use spaces
$(\tilde X_{\delta,\alpha,l}, \|\cdot\|_{\delta,\alpha,l}^\sim)$
instead. These are defined as earlier, but in the norm
$\|f\|_{\delta,\alpha,l}^\sim$ of some function $f$, the highest
spatial derivative term $\partial^l_x f$ is weighted with the
additional factor $t^{\beta+1}$. Here $\beta$ is the exponent of the
characteristic speed given by \Eqref{eq:behaviorofk}.  It is easy to
see under the earlier conditions that also $(\tilde
X_{\delta,\alpha,l}, \|\cdot\|_{\delta,\alpha,l}^\sim)$ are Banach
spaces.  We also note that $X_{\delta,\alpha,l}\subset\tilde
X_{\delta,\alpha,l}$. Let us also define
$X_{\delta,\alpha,\infty}:=\bigcap_{l=0}^\infty X_{\delta,\alpha,l}, $
and note that $X_{\delta,\alpha,\infty}=\bigcap_{l=0}^\infty \tilde
X_{\delta,\alpha,l}$.

For $w\in \tilde X_{\delta,\alpha,l}$ we set
\[ 
  w_\eta(t,x)=t^{-\lambda_2(x)}\int_{-\infty}^\infty\int_0^\infty 
  t^{\lambda_2(y)}w(s,y)k_{\eta}\Bigl(\log\frac st\Bigr) k_\eta(x-y) 
  \frac 1s dsdy.
\]
Here, $k_\eta:\R\rightarrow\R_+$ is a smooth kernel supported in
$[-\eta,\eta]$, satisfying $\int_\R k_\eta(x)dx=1$ for all positive
$\eta$. Then $w_\eta$ is an element of $\tilde
X_{\delta,\alpha-\epsilon,l}\cap C^\infty((0,\delta]\times U)$ for
every $\epsilon>0$. Furthermore, the sequence of such mollified
functions $w_\eta$ in the limit $\eta\rightarrow 0$ converges to $w$
in the norm $\|\cdot\|_{\delta,\alpha-\epsilon,l}^\sim$. Hence any
element in $\tilde X_{\delta,\alpha,l}$ can be approximated by smooth
functions.

We mentioned earlier that we are interested in solving the
``forward problem'', 
 referred to as the \textbf{singular
  initial value problem} (SIVP) -- in this paper. More precisely, we
need to guess a leading-order term $u$ of solutions $v$ to
\Eqref{eq:secondorderFuchsianHyp} so that the \textbf{remainder}
\begin{equation*} 
  w(t,x):=v(t,x)-u(t,x),
\end{equation*}
can be interpreted as ``higher order'' in $t$ at $t=0$. By this we
mean that $w$ is an element in $X_{\delta,\alpha,l}$ for some
(sufficiently large) $\alpha>0$ on a small time interval
$(0,\delta]$. If for a given $u$ such a solution $v$ exists then we
say that \textbf{$v$ obeys the leading-order behavior given by
  $u$}. Often $u$ will be parametrized by certain free functions which
we call \textbf{asymptotic data}, see below.  For later convenience,
we introduce the operator $F$ as
\begin{equation}
  \label{eq:definitionF}
  F[w](t,x):=f[u+w](t,x).
\end{equation}

\subsection{Canonical leading-order term}
\label{sec:canonicallot}

The main first step for solving the singular initial value problem is
to guess a leading-order term $u$. In some applications this can be
very tricky, but in many situations, which we will be most interested
in in this paper, one can make a canonical guess. These situations are
described heuristically as follows.

\subsubsection*{Canonical two-term expansion}

Consider the principal part operator $\tilde L$ in \Eqref{eq:defLPDE}
and note that it incorporates certain lower derivative terms. The
reason for writing $\tilde L$ like this is that we expect in many
cases that these terms are significant and of leading-order at the
singularity $t=0$. In contrast, the source-term and spatial
derivatives can often be anticipated as negligible in some sense under
suitable assumptions below. This is motivated by the BKL conjecture in
general relativity. In order to make this more
concrete, let us assume that the leading-order term is an exact
solution of the system of ordinary differential equations
(parametrized by $x$), which is obtained when all terms in the
equation, except for those given by $\tilde L$, are set to zero. We
refer to this leading-order term $u$ as the 
``canonical
  leading-order term'' or the \textbf{canonical two-term expansion}:
\begin{equation}
  \label{eq:splitv}
  u(t,x)=
  \begin{cases}
    u_*(x)\,t^{-a(x)}\log t+u_{**}(x)\,t^{-a(x)},
    & \quad (a(x))^2=b(x),\\
    u_*(x)\,t^{-\lambda_1(x)}+u_{**}(x)\,t^{-\lambda_2(x)},
    & \quad (a(x))^2 \neq b(x),
  \end{cases}    
\end{equation}
for some freely prescribed \textbf{asymptotic data} $u_*$ and $u_{**}
\in H^{m'}(U)$, where $m'$ is some non-negative integer. We refer to
this as the ``Fuchsian heuristics'' because the leading-order
behavior will be determined by Fuchsian ordinary differential
equations.

We clearly see the dependence of the expected leading-order behavior
at $t=0$ on the coefficients of the principal part of the equation. If
the roots $\lambda_1$ and $\lambda_2$ are real and distinct, i.e.\ if
$a^2>b$, we expect a {\sl power-law} behavior. In the degenerate case
$\lambda_1=\lambda_2$, i.e.\ if $a^2=b$, we expect a {\sl logarithmic}
behavior. Finally, when $\lambda_1$ and $\lambda_2$ are complex for
$a^2<b$, the solution is expected to have an {\sl oscillatory}
behavior at $t=0$ of the form
\[u(t,x)=t^{-a(x)}\bigl(\tilde u_*\cos(\lambda_I(x)\log
t)+\tilde u_{**}\sin(\lambda_I(x)\log t)\bigr)+\ldots
\] 
for some real coefficient functions $\tilde u_*(x)$ and $\tilde u_{**}(x)$;
note that in this case, $\lambda_1=\bar\lambda_2=a+i\lambda_I$ with
$\lambda_I:=\sqrt{b^2-a}$.

If the coefficients of the equations are such that there is a
continuous transition between the two cases in \Eqref{eq:splitv}, then
the asymptotic data functions $u_*$ and $u_{**}$ must be renormalized
as follows. Define $\Gamma(x):=\sqrt{a(x)^2-b(x)}$ which might be real
or imaginary dependent on the values of the coefficients. If there are
points $x_0\in U$ so that $\Gamma(x_0)=0$ and other points $x_1\in U$
with $\Gamma(x_1)\not=0$, then let us set
\begin{equation} 
  \label{eq:renormalizeddata}
  u_*(x)=\frac{\hat u_*(x)-\hat u_{**}(x)/\Gamma(x)}2,
  \qquad
  \quad
  u_{**}(x)=\frac{\hat u_*(x)+\hat u_{**}(x)/\Gamma(x)}2,
\end{equation} 
and choose $\hat u_*(x)$, $\hat u_{**}(x)$ as asymptotic data
functions. This guarantees that $u(t,x)$ given by \Eqref{eq:splitv} is
smooth for all $t>0$.

\subsubsection*{Higher-order canonical expansions}

For some applications we require expansions of the solutions at $t=0$
with more than two terms in order to describe the leading-order
behavior. A particular important example is the Gowdy case
in \Sectionref{sec:Gowdy}. Following Rendall \cite{Rendall00}, those
can be constructed as follows, without going into the
details. Consider the Fuchsian ODE case of
\Eqref{eq:secondorderFuchsianHyp}, written here for the scalar case
only,
\[D^2 v(t,x)+2a(x) Dv(t,x)+b(x)v(t,x)=f[v](t,x),
\] 
where $x$ is interpreted as a parameter. Let first
$f[v](t,x)=f_0(t,x)$ be a given function. Under suitable decay
assumption\footnote{Details can be found in
  \cite{BeyerLeFloch1,BeyerLeFloch2}.} on $f_0$ at $t=0$, there exists
a unique solution $v$ of this equation obeying the canonical two-term
expansion $u$ given by \Eqref{eq:splitv} for given asymptotic data
functions $u_*$ and $u_{**}$. Let $H$ be the operator mapping $f_0$ to
the remainder $w=v-u$ of the solution $v$. Now consider an arbitrary
source-term $f[v]$ and let the operator $F$ be as defined in
\Eqref{eq:definitionF} for the given function $u$. Let $w_1\equiv 0$
and
\[w_{j+1}:=H\circ F[w_j],\quad j\in\N.
\] 
Finally, set $v_j=u+w_j$ for all $j\in\N$. Clearly, $v_1=u$. One finds
that the order of $v_{j+1}-v_j$ in $t$ at $t=0$ increases with $j$.
Hence $v_j$ can be interpreted as an expansion of the solution at
$t=0$ whose order in $t$ increases with $j$.  Moreover, it turns out
that the order of the residual in $t$ at $t=0$, obtained when $v_j$ is
plugged into the equation, increases with $j$. Thus $v_j$ can be
considered as an asymptotic solution of the Fuchsian ODE. In many
situations it is thus meaningful to use $v_j$ as the canonical
leading-order term $u$ for any given $j\in\N$.

\subsubsection*{Limitations of this heuristics}

At this stage it is of course highly unclear under which conditions
there exist solutions of the equations which obey the leading-order
term $u$ given by \Eqref{eq:splitv} or any of its higher-order version
$v_j$ constructed before. The resolution of this problem will be
central to this paper. In many applications in general relativity, the
canonical two-term expansion is the correct guess for the
leading-order term, if the asymptotic data are consistent with
constraint equations implied by Einstein's field equations. However,
we know of several cases when the operator $\tilde L$ does not give
rise to the dominant term at $t=0$. For example for the Gowdy case,
nonlinear terms from the source-term need to be taken into
account. As we discuss there, however, the problem can be reduced to
the canonical case by adding a certain term to the equation. For Gowdy
solutions with spikes
\cite{BergerMoncrief,RendallWeaver,Lim,Andersson2,Andersson3,Lim2,Uggla,Ringstrom4,Ringstrom7}
the situation is significantly more complicated because then, other
nonlinear terms and spatial derivative terms can become
significant. Another important example is given by the mixmaster
dynamics
\cite{Damour,Andersson,Andersson3,Heinzle,Uggla,Ringstrom7}. There,
one has to control a complicated interplay between nonlinear terms in
the source-term in order to fix the leading-order term.  In general
when the equations are generalized to time-dependent coefficients in
the principal part or to the quasi-linear case, the notion of
canonical two-term expansions will apply only under suitable
conditions.


\section{Well-posedness theory and the Fuchsian numerical algorithm}  
\label{sec:existence}

\subsection{An approximation scheme}  

We begin with some notation. 
For $w\in \tilde X_{\delta,\alpha,1}$, the operator $L$ in \Eqref{eq:defLPDE}
is defined in the sense of distributions, only, via:  
$$
\aligned
   & \langle  \mathcal L[w],\phi\rangle
    \\
    &:=\int_{0}^\delta\int_\R
    t^{\Re\lambda_2(x)-\alpha} \Big(
    && -Dw(t,x) D\phi(t,x)
    +(2A(x)-\Re\lambda_2(x)+\alpha-1)\,Dw(t,x)\phi(t,x)\\
  &  &&+B(x)\,w(t,x)\phi(t,x)  
    + t K(t,x)\partial_x w(t,x) tK(t,x)\partial_x\phi(t,x)\\
  &  &&+(2t\partial_xK(t,x)+\partial_x\Re\lambda_2(x) K(t,x)t\log t)
    t K(t,x)\partial_xw(t,x)\,\phi(t,x)\Big)\,\,dx dt,
\endaligned
$$
where $\phi$ is any test function, i.e.~a real-valued
$C^\infty$-function on $(0,\delta]\times\R$ together with some $T\in
(0,\delta)$ and a compact (i.e.~closed and bounded) set $K\in\R$ so that $\phi(t,x)=0$ for all
$t>T$ and $x\not\in K$, and each derivative of $\phi$ has a finite
(not necessarily vanishing) limit at $t=0$ for every $x\in U$. 
For our later discussion, we note that for any
given test function $\phi$, the linear functional $\langle\mathcal
L[\cdot],\phi\rangle: \tilde X_{\delta,\alpha,1}\rightarrow\R$ is continuous
with respect to the norm $\|\cdot\|_{\delta,\alpha,1}^\sim$. This is the
main reason to include the factor $t^{\Re\lambda_2(x)-\alpha}$ in the
definition of $\mathcal L$.

If the operator $F$ defined by \Eqref{eq:definitionF} for a given
leading-order term $u$ gives rise to a map $\tilde
X_{\delta,\alpha,1}\rightarrow X_{\delta,\alpha,0}$, where $w\mapsto
F[w]$, it is meaningful to define its weak form by (for all test
functions $\phi$)
\[\langle\mathcal F[w],\phi\rangle:=
\int_{0}^\delta\int_\R t^{\Re\lambda_2(x)-\alpha}
F[w](t,x)\phi(t,x)dxdt. 
\]

\begin{definition}[Weak solutions of second-order hyperbolic Fuchsian
  systems] 
  Let $u$ be a given function and $\delta, \alpha>0$ be constants. 
  Then, one says 
  that $w\in \tilde X_{\delta,\alpha,1}$ is a weak solution to the second-order
  hyperbolic Fuchsian equation \Eqref{eq:secondorderFuchsianHyp},
  provided
  \begin{equation*}     
    \mathcal P[w]:=
    \mathcal L[w]+\mathcal L[u]-\mathcal F[w]
    =0.
  \end{equation*}
\end{definition} 


Let us now start our discussion with the linear case of second-order
hyperbolic Fuchsian equations and introduce our new approximation
scheme.  The following conditions are assumed: 

\begin{enumerate}
\item Vanishing leading-order part: $u\equiv 0$.
\item Linear source-term:
  \begin{equation}
    \label{eq:linearinhom2}
    F[w](t,x)=f_0(t,x)+f_1(t,x) w+f_2(t,x) Dw+f_3(t,x) t k \partial_x
    w,
  \end{equation}  
  with given functions $f_0$, $f_1$, $f_2$, $f_3$, so that
  $f_1$,
  $f_2$, $f_3$ are smooth spatially periodic on $(0,\delta]\times U$, and
  near $t=0$
  \begin{equation}
    \label{eq:linearinhomDecay}
    \sup_{x\in\bar U}f_a(t,x)=O(t^\mu), \qquad a=1,2,3,
  \end{equation}
  for some constant $\mu>0$. 
\end{enumerate} 
We have not made any assumptions for the function $f_0$ yet, since in
the following discussion this function will play a different role than
$f_1$, $f_2$, $f_3$.  Moreover, no loss of generality is implied by
the condition $u\equiv 0$, since the general case can be recovered by
absorbing $L[u]$ into the function $f_0$.

Under these assumptions, we pose the question whether there exists a
unique weak solution $w$ in $\tilde X_{\delta,\alpha,1}$ for some
$\delta,\alpha>0$ of the given second-order hyperbolic equation.  The
main idea here is our new approximation scheme. We approximate a
solution of the {\sl singular} value problem by a sequence of
solutions of the {\sl regular} initial value problem.

\begin{definition}[Regular initial value problem (RIVP)]
  Fix $t_0\in (0,\delta]$ and some smooth periodic functions
  $g,h:U\rightarrow\R$, and suppose that the right-hand side is of the
  form \Eqref{eq:linearinhom2} with given smooth spatially periodic
  functions $f_0$, $f_1$, $f_2$, $f_3$ on $[t_0,\delta]\times U$.
  Then, $w:[t_0,\delta]\times U\rightarrow\R$ is called a solution of
  the {\bf regular initial value problem} associated with the
  ``regular data'' $g, h$, if \Eqref{eq:secondorderFuchsianHyp}
  holds everywhere on $(t_0,\delta]\times U$ and, moreover, the
  remainder $w:=v-u$, for some function $u$, satisfies
  \[w(t_0,x)=g(x), \quad \partial_t w(t_0,x)=h(x).\]
\end{definition}
For the regular initial value problem, we indeed assume that $f_0$ is
smooth, just as $f_1$, $f_2$ and $f_3$. By the general theory of
linear hyperbolic equations, the regular initial value problem is
well-posed, in the sense that there exists a unique smooth solution
$w$ defined on $[t_0,\delta]$ for any choice of smooth regular
data. Let $u$ be a choice of leading-order term. Let $(t_n)$ be a
sequence of positive times converging to zero so that each $t_n$ is
smaller than some $\delta>0$. For each $n\in N$, we construct an
approximate solutions $w_n$ of the singular initial value problem as
follows. Let $w_n\equiv 0$ on the time interval $(0,t_n]$. On the time
interval $[t_n,\delta]$ we set $w_n$ to be the solution of the regular
initial value problem with initial time $t_n$ and zero regular
data. Hence, $w_n$ is in $C^1( (0,\delta]\times U)\cap \tilde
X_{\delta,\alpha,1}$.

The central result of this section is that this sequence of
approximate solutions converges to the solution of the singular
initial value problem under suitable conditions.

\begin{proposition}[Existence of solutions of the linear singular
  initial value problem in $\tilde X_{\delta,\alpha,1}$]
  \label{prop:existenceSVIPlinear}
  Under the assumptions \Eqref{eq:linearinhom2} and
  \eqref{eq:linearinhomDecay}, the sequence $(w_n)$ of approximate
  solutions with initial times $(t_n)\rightarrow 0$ converges to the
  unique solution $w\in \tilde X_{\delta,\alpha,1}$ of the singular
  initial value problem for given $\delta,\alpha>0$ provided:
  \begin{enumerate}
  \item The matrix
    \begin{equation}
      \label{eq:defNfinal3}
      N:=
      \begin{pmatrix}
        \Re(\lambda_1-\lambda_2)+\alpha & ((\Im\lambda_1)^2/\eta-\eta)/2 & 0 \\
        ((\Im\lambda_1)^2/\eta-\eta)/2 
        & \alpha 
        & t\partial_x{k}-\partial_x\Re(\lambda_1-\lambda_2)(t k\log t) \\
        0 & t\partial_x{k}-\partial_x\Re(\lambda_1-\lambda_2)(t k\log t) 
        & \Re(\lambda_1-\lambda_2)+\alpha-1-Dk/k
      \end{pmatrix}
    \end{equation}
    is positive semidefinite at each
    $(t,x)\in (0,\delta)\times U$ for a constant $\eta>0$.
  \item The source-term function $f_0$ is in
    $X_{\delta,\alpha+\epsilon,0}$ for some $\epsilon>0$.
  \end{enumerate}
  Then, the solution operator
  $\mathbb H: X_{\delta,\alpha+\epsilon,0}
  \rightarrow \tilde X_{\delta,\alpha,1},\quad f_0\mapsto w,
  $ 
  is
  continuous and there exists a finite constant $C_\epsilon>0$ so that
  \begin{equation*}
    \|\mathbb H[f_0]\|_{\delta,\alpha,1}^\sim \le \delta^{\epsilon}
    C_\epsilon \|f_0\|_{\delta,\alpha+\epsilon,0},
  \end{equation*}
  for all such $f_0$.  The constant $C_\epsilon$ can depend on $\delta$,
  but is bounded for all small $\delta$. The approximate solutions $w_n$
  satisfy the following error estimate for all $n,m\in\N$:
  \begin{equation*}
    \|w_m-w_n\|_{\delta,\alpha,1}^\sim\le C|G(t_n)-G(t_m)|,
  \end{equation*}
  where $C>0$ is a constant and
  \begin{equation*}
    G(t):=\int_{0}^{t} s^{-1}\|s^{\Re\lambda_2-\alpha}f_0(s,\cdot)\|_{L^2(U)}ds.
  \end{equation*}
\end{proposition}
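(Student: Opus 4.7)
The plan is to derive a weighted a-priori energy estimate for solutions of the RIVP and apply it in three ways: to bound each $w_n$ uniformly in $\tilde X_{\delta,\alpha,1}$, to show that $(w_n)$ is Cauchy with the quantitative bound involving $G$, and to obtain uniqueness of the weak limit. Since the RIVP solutions are smooth on $[t_n,\delta]\times U$ by the well-posedness of the regular problem, every manipulation below is rigorous on that interval; only the passage $t_n\to 0$ requires that the bounds stay finite, and this is exactly what hypothesis~(i) and the integrability of $G$ built into hypothesis~(ii) are designed to secure.

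I would introduce the weighted energy
\[
E[w](t):=\int_U t^{2(\Re\lambda_2(x)-\alpha)}\Bigl(|Dw|^2+\eta^{-1}(\Im\lambda_1(x))^2|w|^2+|tk(t,x)\partial_x w|^2\Bigr)\,dx,
\]
with $\eta>0$ the parameter from hypothesis~(i). Multiplying \Eqref{eq:secondorderFuchsianHyp} by $2\,Dw\cdot t^{2(\Re\lambda_2-\alpha)}$ and integrating over $U$ with integration by parts in $x$, and handling the indefinite cross-term $2Bw\cdot Dw$ via a Young splitting (this is precisely where the off-diagonal entry $((\Im\lambda_1)^2/\eta-\eta)/2$ comes from), I expect the identity
\[
D\,E[w](t)=-2\int_U t^{2(\Re\lambda_2-\alpha)}\,\mathbf v^T N\,\mathbf v\,dx+\mathcal S(t),\qquad \mathbf v:=(w,\,Dw,\,tk\partial_x w)^T,
\]
in which $N$ is the matrix of \Eqref{eq:defNfinal3} and $\mathcal S$ collects the contributions of the source terms in \Eqref{eq:linearinhom2}. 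The algebraic matching goes as follows: using $\lambda_1+\lambda_2=2a$, the diagonal $\Re(\lambda_1-\lambda_2)+\alpha$ slot comes from $D$ hitting the weight $t^{2(\Re\lambda_2-\alpha)}$ in the kinetic term $|Dw|^2$; the bottom-right $\Re(\lambda_1-\lambda_2)+\alpha-1-Dk/k$ slot collects the same contribution from $|tk\partial_x w|^2$ together with the derivative of $t^2k^2$; and the $\partial_x\Re(\lambda_1-\lambda_2)(tk\log t)$ cross-term arises from the $x$-derivative of the weight after integration by parts against $tk\partial_x w\cdot Dw$. By hypothesis~(i), the quadratic form is nonnegative pointwise, so $D\,E[w](t)\le \mathcal S(t)$.

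From \Eqref{eq:linearinhomDecay} I obtain $|\mathcal S(t)|\le Ct^\mu E[w](t)+Ct^{-1}\|t^{\Re\lambda_2-\alpha}f_0(t,\cdot)\|_{L^2(U)}\sqrt{E[w](t)}$; the first piece is absorbed for small $\delta$ by Gronwall, and using $E[w_n](t_n)=0$ and integrating from $t_n$ yields $\sqrt{E[w_n](t)}\le C(G(\delta)-G(t_n))$. The extra margin $\epsilon$ in hypothesis~(ii), combined with H\"older in $s$, converts $G(\delta)-G(t_n)$ into $C_\epsilon\delta^\epsilon\|f_0\|_{\delta,\alpha+\epsilon,0}$, giving the continuity bound for $\mathbb H$. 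Applying the same estimate to $w_m-w_n$ ($m\ge n$) on $[t_n,\delta]$ (where it solves the homogeneous equation with initial data whose energy is bounded by $E[w_m](t_n)\le C(G(t_n)-G(t_m))^2$), and using that $w_m-w_n=w_m$ on $(0,t_n]$ satisfies the analogous bound, yields $\|w_m-w_n\|_{\delta,\alpha,1}^\sim\le C|G(t_n)-G(t_m)|$. The limit $w\in\tilde X_{\delta,\alpha,1}$ then satisfies $\mathcal P[w]=0$ by the continuity of $\mathcal L$ and $\mathcal F$ on that space, and uniqueness follows by applying the same estimate, after mollification using the density statement noted after the definition of $\tilde X_{\delta,\alpha,l}$, to the difference of any two weak solutions, for which $f_0\equiv 0$ and hence $G\equiv 0$.

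The hard part will be the algebraic identification of the quadratic form with the specific matrix $N$ in \Eqref{eq:defNfinal3}. Every contribution has to be tracked: the derivatives of the weight $t^{2(\Re\lambda_2(x)-\alpha)}$ in both $t$ and $x$; the Young splitting of $Bw\cdot Dw$, which must accommodate complex values of $\lambda_{1,2}$ when $b>a^2$ and therefore introduces the free parameter $\eta$; and the integration by parts of $t^2k^2\partial_x^2 w\cdot Dw$, which simultaneously delivers the off-diagonal $t\partial_x k$ and the diagonal $-1-Dk/k$. A secondary subtlety is the absorption of the $f_1,f_2,f_3$ terms on the right-hand side, which requires $C\delta^\mu<1$; this is automatic since $\mu>0$ and explains why the constants $C_\epsilon$ can be kept bounded as $\delta\to 0$, as claimed.
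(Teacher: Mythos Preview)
Your overall strategy matches the paper's exactly: a weighted energy estimate for solutions of the RIVP (stated in the paper as the ``Fundamental energy estimate'' lemma), applied to obtain uniform bounds on $w_n$, the Cauchy property with the quantitative $G$-bound, passage to the weak limit via continuity of $\mathcal L$ and $\mathcal F$, and uniqueness by the same estimate after mollification. The paper only sketches this and defers details to the companion reference, so your outline is in fact more explicit than what appears here.

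Two discrepancies are worth flagging. First, the paper's energy density is $\tfrac12\bigl((\eta w)^2+(Dw)^2+(tk\partial_x w)^2\bigr)$, i.e.\ the first component of the vector is $\eta w$, not $w$; with that choice the matrix $N$ of \Eqref{eq:defNfinal3} drops out directly, and the cross entry $((\Im\lambda_1)^2/\eta-\eta)/2$ arises from the combination of $-b\,w\,Dw$ (from the equation) and $\eta^2\,w\,Dw$ (from $D$ of the $(\eta w)^2$ term), not from a Young splitting. Your coefficient $\eta^{-1}(\Im\lambda_1)^2$ on $|w|^2$ vanishes in the real-root case $a^2\ge b$, so your energy would fail to be equivalent to $\|\cdot\|_{\delta,\alpha,1}^\sim$ precisely in the situations of main interest (Gowdy); the paper stresses explicitly that $\eta>0$ is what secures this equivalence. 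Second, rather than invoking Gronwall after the fact to absorb the $f_1,f_2,f_3$ contributions, the paper builds the damping factor $e^{-\kappa t^\gamma}$ directly into the energy; this is where the constants $\kappa,\gamma>0$ in the fundamental estimate come from and why $\mu>0$ in \Eqref{eq:linearinhomDecay} suffices. These are corrections of detail, not of method.
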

We call $N$ the \textbf{energy dissipation matrix}. We have assumed
that $\alpha$ is a positive constant. If, however, $\alpha$ is a
positive spatially periodic function in $C^1(U)$, the definition of
the spaces $X_{\delta,\alpha,k}$ and $\tilde X_{\delta,\alpha,k}$
remains the same, and only the $(2,3)$- and $(3,2)$-components of the
energy dissipation matrix $N$ change to
$t\partial_x{k}-\partial_x(\Re(\lambda_1-\lambda_2)+\alpha)(t k\log
t)$. In the following, we continue to assume that $\alpha$ is a
constant in order to keep the presentation as simple as possible, but
we stress that all following results hold (with this slight change of
$N$) if $\alpha$ is a function, and hence no new difficulties arise.

The proof is based on controlling the energy of the approximate
solutions.  Let us restrict our presentation here to the scalar case
$n=1$ for this whole section; the general case can be obtained with
the same ideas.  Choose $\delta,\alpha>0$ and let $w\in
C^{1}((0,\delta]\times U)$ be a spatially periodic
function. Then, we define its {\bf energy} at the time $t\in
(0,\delta]$ by \be
\label{eq:defenergy}
\aligned
E[w](t):=&  e^{-\kappa t^\gamma} 
\int_U t^{2(\lambda_2(x)-\alpha)} \, e[w](t,x) \, dx,
\\
e[w](t,x) :=&  
\frac 12\left((\eta\, w(t,x))^2+(Dw(t,x))^2 
  +(t k(t,x)\partial_x w(t,x))^2\right),
\endaligned
\ee 
for some constants $\kappa\ge 0$, $\gamma>0$ and $\eta>0$.  
For convenience, we also introduce the following notation. For any
scalar-valued function $w$, we define the vector-valued function
\be
  \label{eq:hatw}
  \widehat w(t,x):=t^{\Re\lambda_2(x)-\alpha}
  (\eta w(t,x),Dw(t,x),t k(t,x)\partial_x w(t,x)),
\ee
involving the same constants as in the energy.
Then, we can write
\begin{equation*}
  E[w](t)=\frac 12 e^{-\kappa t^\gamma}\|\widehat
  w(t,\cdot)\|^2_{L^2(U)},
\end{equation*}
the norm here being the Euclidean $L^2$-norm for vector-valued
functions in $x$.  It is important to realize that, provided $\eta>0$,
the expression $\sup_{0<t\le\delta} \|\widehat w(t,\cdot)\|_{L^2(U)}$
for functions of the form \Eqref{eq:hatw} yields a norm which is
equivalent to $\|\cdot\|_{\delta,\alpha,1}^\sim$, thanks to
\Eqref{eq:behaviorofk}. Therefore, the energy \Eqref{eq:defenergy} is
of relevance for the space $\tilde X_{\delta,\alpha,1}$.

Of central importance for the proof of
\Propref{prop:existenceSVIPlinear} are energy estimates for the
regular initial value problem.
\begin{lemma}[Fundamental energy estimate for the regular initial
  value problem]
  Suppose that the source term is of the form \Eqref{eq:linearinhom2}
  with the condition \Eqref{eq:linearinhomDecay} and that the energy
  dissipation matrix \Eqref{eq:defNfinal3} is positive semidefinite on
  $(0,\delta]\times U$ for given constants $\alpha,\eta>0$ and a
  sufficiently small $\delta>0$. Then there exist constants
  $C,\kappa,\gamma>0$, independent of the choice of $t_0\in
  (0,\delta]$, so that for all solutions
  $w$ of the regular initial value problem with smooth regular data at
  $t=t_0$, we have 
  \begin{equation*}
   \aligned
    &      \|\widehat{w}(t,\cdot)\|_{L^2(U)}
    \leq  
    C \, e^{\frac 12\kappa(t^\gamma-t_0^\gamma)}
    \Bigl(\|\widehat{w}(t_0,\cdot)\|_{L^2(U)}
    +\int_{t_0}^t
    s^{-1}\|s^{\Re\lambda_2-\alpha}
    f_0(s,\cdot)\|_{L^2(U)}ds\Bigr),
   \endaligned
 \end{equation*}
 for all $t\in [t_0,\delta]$.
\end{lemma}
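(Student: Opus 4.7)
The plan is to derive a differential inequality for $E[w](t)$ defined in \Eqref{eq:defenergy} and then integrate it in time. I would start by computing $DE[w](t) = t\,\partial_t E[w](t)$ by differentiating under the integral. This produces three contributions: a favorable term $-\kappa\gamma t^\gamma E[w](t)$ from $D(e^{-\kappa t^\gamma})$; a contribution $2(\Re\lambda_2-\alpha)$ times the energy density from differentiating the weight $t^{2(\Re\lambda_2-\alpha)}$; and the contribution of $De[w]$ itself, in which the $Dw\,D^2 w$ term is eliminated using the PDE in the form $D^2 w = -2A\,Dw - Bw + (tK)^2\partial_x^2 w + F[w]$, while $D[\tfrac12(tk\partial_xw)^2]$ produces $(1+Dk/k)(tk\partial_xw)^2 + (tk\partial_xw)(tk)\partial_x Dw$.

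The critical step is the spatial integration by parts. The two second-order spatial contributions, namely $Dw\,(tK)^2\partial_x^2 w$ (from the PDE substitution) and $(tk\partial_xw)(tk)\partial_x Dw$ (from the time derivative), combine to the $x$-divergence $\partial_x[(tk)^2\,\partial_x w\,Dw]$, which under periodicity survives only via the derivative of the product $t^{2(\Re\lambda_2-\alpha)}(tk)^2$. Using $\partial_x t^{2(\Re\lambda_2-\alpha)} = 2(\log t)\,\partial_x\Re\lambda_2\,t^{2(\Re\lambda_2-\alpha)}$ and $\partial_x (tk)^2 = 2tk\,t\partial_x k$, this produces the off-diagonal $(2,3)$- and $(3,2)$-entries of $N$ after substituting $\partial_x\Re(\lambda_1-\lambda_2) = 2\partial_x a - 2\partial_x\Re\lambda_2$ from $\lambda_1+\lambda_2=2a$. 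Collecting all the quadratic terms, re-expressing them in the vector $\widehat w$ via \Eqref{eq:hatw}, using $\lambda_1\lambda_2 = b$ to process the $w\,Dw$ cross term through an $\eta$-symmetrized coupling in which $(\Im\lambda_1)^2 = b - a^2$ arises in the complex regime $a^2<b$, and rewriting the diagonal coefficients through $2a = \lambda_1+\lambda_2$, one identifies the resulting quadratic form with $-\widehat w^T N\widehat w$ weighted by $e^{-\kappa t^\gamma}$. The assumed positive semidefiniteness of $N$ then yields
\begin{equation*}
DE[w](t) \leq -\kappa\gamma t^\gamma E[w](t) + R_1(t) + R_2(t),
\end{equation*}
where $R_1,R_2$ are the source contributions coming from $f_1,f_2,f_3$ and $f_0$, respectively.

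By the decay \Eqref{eq:linearinhomDecay} together with $\|\widehat w(t,\cdot)\|_{L^2(U)}^2 = 2e^{\kappa t^\gamma}E[w](t)$, one has $|R_1(t)| \leq C_1 t^\mu e^{\kappa t^\gamma}\,E[w](t)$. Choosing $\gamma := \mu$ and then $\delta$ small with $\kappa$ large enough that $C_1 e^{\kappa\delta^\gamma} \leq \kappa\gamma/2$, the term $R_1$ is absorbed into $-\kappa\gamma t^\gamma E[w]$. The remaining contribution is estimated by Cauchy-Schwarz as $|R_2(t)| \leq C_2 e^{-\kappa t^\gamma/2}\|t^{\Re\lambda_2-\alpha}f_0(t,\cdot)\|_{L^2(U)}\sqrt{E[w](t)}$. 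Setting $\mathcal E(t) := \sqrt{E[w](t)}$ yields $D\mathcal E(t) \leq C_3 e^{-\kappa t^\gamma/2}\|t^{\Re\lambda_2-\alpha}f_0(t,\cdot)\|_{L^2(U)}$, which, integrated against $ds/s$ from $t_0$ to $t$ and translated back via $\|\widehat w(t,\cdot)\|_{L^2(U)} = \sqrt{2}\,e^{\kappa t^\gamma/2}\mathcal E(t)$, gives the stated estimate with the prefactor $e^{\frac12\kappa(t^\gamma - t_0^\gamma)}$ emerging from the difference of the exponential weights at $t$ and $t_0$, and with the uniformly bounded factor $e^{\kappa\delta^\gamma/2}$ absorbed into the final constant.

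In my view the main obstacle lies in the second paragraph, namely carrying out the algebraic bookkeeping that reproduces exactly the matrix $N$ of \Eqref{eq:defNfinal3}; in particular the appearance of $(\Im\lambda_1)^2/\eta - \eta$ in the $(1,2)$- and $(2,1)$-entries reflects the Young-type symmetrization of the $\eta^2 w\,Dw$ versus $Bw\,Dw$ imbalance and the passage between the real ($a^2\geq b$) and complex ($a^2<b$) regimes of the roots $\lambda_1,\lambda_2$. A secondary subtlety is ensuring that the constants $\kappa,\gamma,\delta$ arising from the absorption of $R_1$ can be chosen independently of $t_0 \in (0,\delta]$, so that the estimate holds uniformly in $t_0$; this uniformity is what permits the subsequent limit $t_0 = t_n \to 0$ in the proof of \Propref{prop:existenceSVIPlinear}.
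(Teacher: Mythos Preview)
Your approach is essentially correct and matches the paper's own: the paper states only that ``the proof of this lemma is standard'' and refers to \cite{BeyerLeFloch1} for details, emphasizing precisely the two points you flag---that positive semidefiniteness of $N$ is what makes the quadratic form in $DE[w]$ have the right sign, and that the constants must be uniform in $t_0$. Your computation of $DE[w]$, the spatial integration by parts producing the off-diagonal entries of $N$, and the splitting of the source into an absorbable part $R_1$ and an integrated part $R_2$ are exactly the standard steps.

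One minor bookkeeping slip: in your bound for $R_1$ the factor $e^{\kappa t^\gamma}$ should cancel. The source contribution to $DE[w]$ already carries the prefactor $e^{-\kappa t^\gamma}$, and the remaining integral is bounded by $C t^\mu$ times $\int_U t^{2(\Re\lambda_2-\alpha)} e[w]\,dx = e^{\kappa t^\gamma}E[w](t)$, so $|R_1(t)|\le C_1 t^\mu E[w](t)$ without the extra exponential. With this correction the absorption condition is simply $C_1\le \kappa\gamma/2$ once $\gamma=\mu$, which is manifestly independent of $t_0$ and does not require shrinking $\delta$; your version still works but obscures the uniformity you are trying to highlight.
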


The proof of this lemma is standard. However, one has to confirm that
the energy stays uniformly finite at $t=0$, and this is guaranteed by
the positivity condition for $N$ in \Eqref{eq:defNfinal3}.  Moreover,
this results demonstrates the importance of the assumption
$\beta(x)>-1$ in \Eqref{eq:behaviorofk}. Namely, if $\beta(x)\le -1$
at a point $x\in U$, then for any choice of $\alpha$ and $\eta$, the
matrix $N$ would not be positive semidefinite for small $t$ at
$x$. While the energy estimate would still be true for a given $t_0$,
we would nevertheless lose uniformity of the constants in the
estimates with respect to $t_0$. We stress that this uniformity is
crucial in the proof of \Propref{prop:existenceSVIPlinear}. All proofs
and more details can be found in \cite{BeyerLeFloch1}.

\subsection{Nonlinear theory}
In turns out that the space $\tilde X_{\delta, \alpha,1}$ is too large
for our nonlinear theory. Namely, we will need to require a Lipschitz
property of the source-term for which this space rules out natural
nonlinearities, for instance quadratic ones.  Moreover, the statement
that the solution of the Fuchsian equation $w$ is an element of
$\tilde X_{\delta, \alpha,1}$ yields only weak information about the
behavior of the first spatial derivative at $t=0$, which is indeed not
sufficient to interpret $w$ and \textit{all} its first derivatives as
the \textit{remainder} of the solution. We resolve these problems by
going to $\tilde X_{\delta, \alpha,k}$ for some $k>1$. The first issue
above disappears for $k=2$ in one spatial dimension.  In some
applications, the spaces $\tilde X_{\delta, \alpha,k}$ still impose a
too strong restriction due to the weak control of the highest spatial
derivative. In such a case we are required to formulate the theory in
the space $X_{\delta,\alpha,\infty}$.

The first step is to reconsider the linear case and derive a result
analogous to \Propref{prop:existenceSVIPlinear} in the spaces $\tilde
X_{\delta, \alpha,k}$ for arbitrary $k$ by making suitable stronger
assumptions on $f_0$. It turns out that higher spatial derivatives of
the solutions may involve additional logarithmic terms in $t$. As a
consequence we find that the energy dissipation matrix must be assumed
to be positive definite instead of positive semidefinite. With this at
hand, the central result of this section, which we write for $k=2$ for
definiteness, is the following.

\begin{proposition}[Existence of solutions of the nonlinear singular
  initial value problem in $\tilde X_{\delta,\alpha,2}$]
  \label{prop:existenceSVIPnonlinear2}
  Suppose that we can choose $\alpha>0$ so that the energy dissipation
  matrix \Eqref{eq:defNfinal3} is positive definite at each
  $(t,x)\in (0,\delta)\times U$ for a constant $\eta>0$.  Suppose that
  $u\equiv 0$ and that the operator $F$ has the following Lipschitz
  continuity property: For a constant $\epsilon>0$ and all
  sufficiently small $\delta$, the operator $F$ maps $\tilde
  X_{\delta,\alpha,2}$ into $X_{\delta,\alpha+\epsilon,1}$ and,
  moreover, for each $r>0$ there exists $\widehat C>0$ (independent of
  $\delta$) so that 
  \be
  \label{eq:lipschitzFPDE2}
  \|F[w]-F[\widetilde w]\|_{\delta,\alpha+\epsilon,1} \le \widehat C \, 
  \|w-\widetilde w\|_{\delta,\alpha,2}^\sim
  \ee
  for all $w,\widetilde w\in \overline{B_r(0)}\subset
  \tilde X_{\delta,\alpha,2}$.
  Then,    
  there exists a unique
  solution $w\in \tilde X_{\delta,\alpha,2}$ of the singular initial value
  problem.
\end{proposition}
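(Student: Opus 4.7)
The plan is to reduce the existence of a nonlinear solution to a contraction-mapping argument on a closed ball of $\tilde X_{\delta,\alpha,2}$. The crucial input is the $k=2$ analogue of \Propref{prop:existenceSVIPlinear} announced just before the statement: for each $f_0\in X_{\delta,\alpha+\epsilon,1}$ there is a unique $w\in \tilde X_{\delta,\alpha,2}$ solving the linear singular problem, and the associated solution operator $\mathbb H:X_{\delta,\alpha+\epsilon,1}\to \tilde X_{\delta,\alpha,2}$ satisfies an estimate of the form
\[
\|\mathbb H[f_0]\|_{\delta,\alpha,2}^{\sim}\le \delta^{\epsilon}\,C_{\epsilon}\,\|f_0\|_{\delta,\alpha+\epsilon,1},
\]
with $C_{\epsilon}$ bounded as $\delta\downarrow 0$; this relies on the positive definiteness of the energy-dissipation matrix $N$. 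Since $u\equiv 0$, the weak Fuchsian equation reduces to $\mathcal L[w]=\mathcal F[w]$, and the hypothesis that $F$ maps $\tilde X_{\delta,\alpha,2}$ into $X_{\delta,\alpha+\epsilon,1}$ lets me define the nonlinear operator $T[w]:=\mathbb H[F[w]]$, whose fixed points are exactly the weak solutions sought.

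The first step is to produce an invariant ball. Fix $r>0$, let $\widehat C$ be the constant from \Eqref{eq:lipschitzFPDE2} associated with this $r$, and apply the Lipschitz bound with $\widetilde w=0$ to obtain
\[
\|F[w]\|_{\delta,\alpha+\epsilon,1}\le \|F[0]\|_{\delta,\alpha+\epsilon,1}+\widehat C\,\|w\|^{\sim}_{\delta,\alpha,2}.
\]
Composing with the linear estimate yields $\|T[w]\|^{\sim}_{\delta,\alpha,2}\le \delta^{\epsilon}C_{\epsilon}(\|F[0]\|_{\delta,\alpha+\epsilon,1}+\widehat C r)$ on $\overline{B_r(0)}$. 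Because $\widehat C$ is independent of $\delta$ and $\|F[0]\|_{\delta,\alpha+\epsilon,1}$ is controlled uniformly as $\delta$ shrinks (by applying the mapping hypothesis to $w=0$), the right-hand side is dominated by $r$ for all sufficiently small $\delta$, giving $T(\overline{B_r(0)})\subset \overline{B_r(0)}$.

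The second step is contractivity on the same ball. For $w,\widetilde w\in \overline{B_r(0)}$, the difference $T[w]-T[\widetilde w]=\mathbb H[F[w]-F[\widetilde w]]$ solves a linear singular problem with source $F[w]-F[\widetilde w]$, so the $\mathbb H$-bound and \Eqref{eq:lipschitzFPDE2} combine to
\[
\|T[w]-T[\widetilde w]\|^{\sim}_{\delta,\alpha,2}\le \delta^{\epsilon}\,C_{\epsilon}\,\widehat C\,\|w-\widetilde w\|^{\sim}_{\delta,\alpha,2}.
\]
Choosing $\delta$ small enough that $\delta^{\epsilon}C_{\epsilon}\widehat C<1$, the map $T$ is a strict contraction on the complete metric space $\overline{B_r(0)}$, and the Banach fixed-point theorem produces a unique fixed point in the ball.

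The most delicate point, and in my view the principal obstacle, is to upgrade uniqueness from the ball $\overline{B_r(0)}$ to all of $\tilde X_{\delta,\alpha,2}$, since \Eqref{eq:lipschitzFPDE2} is only a local Lipschitz bound. Given two candidate solutions $w_1,w_2$, I would set $R:=\max(r,\|w_1\|^{\sim}_{\delta,\alpha,2},\|w_2\|^{\sim}_{\delta,\alpha,2})$, invoke the Lipschitz hypothesis on $\overline{B_R(0)}$ (which still provides a $\delta$-independent constant $\widehat C_R$), and pass to a smaller $\delta'\le\delta$ on which the threshold $(\delta')^{\epsilon}C_{\epsilon}\widehat C_R<1$ is achieved. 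The restrictions of $w_1,w_2$ to $(0,\delta']$ are then both fixed points of $T$ on $\overline{B_R(0)}$, hence coincide, and agreement is propagated to $(0,\delta]$ by the standard Cauchy theory for the regular hyperbolic problem on $[\delta',\delta]$. The only real subtlety is to verify that all $\delta$-thresholds (for the mapping hypothesis on $F$, for self-mapping of $T$, and for contractivity) can be arranged simultaneously and that $C_{\epsilon}$ really stays bounded as $\delta\downarrow 0$; both follow from the stated hypotheses but must be tracked with some care.
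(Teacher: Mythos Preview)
Your approach is essentially the same as the paper's: define $\mathbb G:=\mathbb H\circ F$ (your $T$), verify it is a contraction on a closed ball of $\tilde X_{\delta,\alpha,2}$ for $\delta$ small, and invoke Banach's fixed-point theorem. Your treatment of global uniqueness---enlarging the ball to radius $R$, shrinking $\delta$ accordingly, and propagating by the regular Cauchy problem---is more careful than the paper's one-line argument, which simply asserts that any other solution is a fixed point of the same contraction.
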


\begin{proof}[Proof of \Propref{prop:existenceSVIPnonlinear2}:]  
  We define the operator $\mathbb G:=\mathbb H\circ F$. Here $\mathbb
  H$ is the solution operator as in \Propref{prop:existenceSVIPlinear}
  and $F$ is the source-term operator as before. We find that under
  the hypothesis, the operator $\mathbb G$ is a contraction on closed
  bounded subsets of $\tilde X_{\delta,\alpha,2}$ if $\delta$ is a
  sufficiently small. Hence the iteration sequence defined by
  $w_{j+1}=\mathbb G[w_j]$ for $j\ge 1$ and, say, $w_1=0$ converges to
  a fixed point $w\in \tilde X_{\delta,\alpha,2}$ with respect to the
  norm $\|\cdot\|_{\delta,\alpha,2}^\sim$. Because of the properties
  of $\mathbb H$, a fixed point of $\mathbb G$ is a solution of the
  SIVP. Hence, we have shown existence of solutions. Uniqueness can be
  shown as follows. Given any other solution $\tilde w$ in $\tilde
  X_{\delta,\alpha,2}$, it is a fixed point of the iteration
  $w_{j+1}=\mathbb G[w_j]$. Because $\mathbb G$ is a contraction,
  there, however, only exists one fixed point, and hence $\tilde w=w$.
\end{proof}

For the analogous result of infinite differentiability, we only need
to substitute the Lipschitz continuity property in the previous result
as follows. For a constant $\epsilon>0$, every sufficiently small
$\delta>0$ and every non-negative integer $k$, the operator $F$ maps
$X_{\delta,\alpha,k+1}$ into $X_{\delta,\alpha+\epsilon,k}$ and,
moreover, for each $r>0$, there exists $\widehat C>0$ (independent of
$\delta$) so that
\begin{equation}
  \|F[w]-F[\widetilde w]\|_{\delta,\alpha+\epsilon,k} \le \widehat C \, 
  \|w-\widetilde w\|_{\delta,\alpha,k+1}^\sim
\end{equation}
for all $w,\widetilde w\in \overline{B_r(0)}\cap
X_{\delta,\alpha,k+1}\subset \tilde X_{\delta,\alpha,k+1}$.  Then,
there exists a unique solution $w\in X_{\delta,\alpha,\infty}$ of the
singular initial value problem. In order to prove this result, we
first proceed as in the previous proposition for finitely many
derivatives. The only remaining task is to show that for
$k\rightarrow\infty$, we are allowed to choose some non-vanishing
$\delta$. This can be done with a standard argument for hyperbolic
equations. The set $\overline{B_r(0)}$ is defined with respect to the
norm $\|\cdot\|^\sim_{\delta,\alpha,k+1}$.  We note that the constant
$\widehat C$ is allowed to depend on $k$ and is not required to be
bounded for $k\rightarrow\infty$. Note that the Lipschitz estimate
involves the norm $\|\cdot\|_{\delta,\alpha,k+1}^\sim$, while the
elements for which this estimates needs to be satisfied are required
to be only in the subspace $X_{\delta,\alpha,k+1}$ of $\tilde
X_{\delta,\alpha,k+1}$. The main advantage of the $C^\infty$ result
over the finite differentiability case is that we only need to check
that $F$ maps $X_{\delta,\alpha,k+1}$ into
$X_{\delta,\alpha+\epsilon,k}$, instead of $\tilde
X_{\delta,\alpha,k+1}$ into $X_{\delta,\alpha+\epsilon,k}$.

\subsection{Standard singular initial value problem}
The following discussion is devoted to the case when the function $u$
is given by the canonical two-term expansion \Eqref{eq:splitv}.  In
this case, we speak of the \textbf{standard singular initial value
  problem}.

\begin{theorem}[Well-posedness of the standard singular initial value problem
  in $\tilde X_{\delta,\alpha,2}$]
  \label{th:well-posednessSIVP}
  Given arbitrary asymptotic data $u_*,u_{**}\in H^3(U)$, the
  standard singular initial value problem admits a unique solution
  $w\in \tilde X_{\delta,\alpha,2}$ for $\alpha,\delta>0$, provided
  $\delta$ is sufficiently small and the following conditions hold:
  \begin{enumerate}
  \item{Positivity condition.} Suppose that we can choose $\alpha>0$
    so that the energy dissipation matrix \Eqref{eq:defNfinal3} is
    positive definite at each $(t,x)\in (0,\delta)\times U$ for a
    constant $\eta>0$.
  \item{Lipschitz continuity property.} For the given $\alpha>0$, the
    operator $F$ satisfies the Lipschitz continuity property stated in
    \Propref{prop:existenceSVIPnonlinear2} for all asymptotic data
    $u_*,u_{**}\in H^3(U)$ for some $\epsilon>0$.
  \item{Integrability condition.} The constants $\alpha$ and
    $\epsilon$ satisfy
    \begin{equation}
      \label{eq:intbeta}
      \alpha+\epsilon<2(\beta(x)+1)-\Re(\lambda_1(x)-\lambda_2(x)), 
      \qquad  x \in U.
    \end{equation}  
  \end{enumerate}
\end{theorem}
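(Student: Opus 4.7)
The plan is to reduce this theorem to \Propref{prop:existenceSVIPnonlinear2}, which handles the case of a vanishing canonical leading-order term. Let $u$ be the canonical two-term expansion \Eqref{eq:splitv} corresponding to the prescribed asymptotic data $u_*,u_{**}\in H^3(U)$, and introduce the reduced source operator
\[
\widetilde F[w]:=F[w]-L u.
\]
In the weak formulation $\mathcal P[w]=\mathcal L[w]+\mathcal L[u]-\mathcal F[w]=0$, solving the standard SIVP is equivalent to seeking $w\in\tilde X_{\delta,\alpha,2}$ with $\mathcal L[w]=\widetilde{\mathcal F}[w]$, i.e., to an SIVP with trivial canonical expansion and modified source $\widetilde F$.

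The structural observation that makes this reduction effective is that $u$ is, by construction, an \emph{exact} solution of the reduced ODE system $\widetilde L\,u=0$, and hence
\[
L u = -t^{2}K^{2}\partial_{x}^{2}u,
\]
so the only nontrivial content of the modification is the spatial-derivative term acting on $u$. The Lipschitz hypothesis of \Propref{prop:existenceSVIPnonlinear2} is inherited trivially, since $\widetilde F[w]-\widetilde F[\widetilde w]=F[w]-F[\widetilde w]$ and $Lu$ is independent of $w$. The remaining task is to verify that $-t^{2}K^{2}\partial_{x}^{2}u$ belongs to $X_{\delta,\alpha+\epsilon,1}$, so it may play the role of $f_{0}$ in the linear theory.

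To establish this, differentiate \Eqref{eq:splitv} twice (respectively three times) in $x$, using $u_{*},u_{**}\in H^{3}(U)$ to control $\partial_{x}^{q}u$ for $q\le 3$ in $L^{2}(U)$, with coefficients decaying at most like $t^{-\Re\lambda_{1}}$ in the non-degenerate case and like the same power with an additional harmless $\log t$ factor in the degenerate and oscillatory cases; the same logarithmic correction appears whenever $\partial_{x}$ hits the $x$-dependent exponents, via $\partial_{x}t^{-\lambda_{i}(x)}=-(\log t)\,\partial_{x}\lambda_{i}(x)\,t^{-\lambda_{i}(x)}$. By \Eqref{eq:behaviorofk}, the prefactor $t^{2}K^{2}$ contributes $t^{2(\beta+1)}$, so $Lu$ and its $\partial_{x}$- and $D$-derivatives up to order one are bounded in $L^{2}(U)$ by a constant times $t^{2(\beta+1)-\Re\lambda_{1}}$, up to logarithms. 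Weighting by $t^{\Re\lambda_{2}-(\alpha+\epsilon)}$ as in $\|\cdot\|_{\delta,\alpha+\epsilon,1}$ and demanding integrability as $t\to 0$ produces precisely the exponent inequality $2(\beta+1)-\Re(\lambda_{1}-\lambda_{2})>\alpha+\epsilon$, which is the integrability condition \Eqref{eq:intbeta}.

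With both hypotheses verified for $\widetilde F$, a direct application of \Propref{prop:existenceSVIPnonlinear2} yields a unique $w\in\tilde X_{\delta,\alpha,2}$ for all sufficiently small $\delta$ (the smallness being used to turn $\mathbb H\circ\widetilde F$ into a contraction via the $\delta^{\epsilon}$ factor from the linear estimate). The principal obstacle in this plan is the careful bookkeeping of powers of $t$ and logarithmic factors in the estimate of $Lu$, in all three regimes of \Eqref{eq:splitv} and with proper accounting for the $x$-dependence of $\lambda_{1},\lambda_{2}$; once that calculation is carried out, the assumption $u_{*},u_{**}\in H^{3}(U)$ and the integrability condition \Eqref{eq:intbeta} emerge as the sharp thresholds making the reduction to \Propref{prop:existenceSVIPnonlinear2} possible.
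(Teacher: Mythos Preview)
Your proposal is correct and follows essentially the same route as the paper: reduce to \Propref{prop:existenceSVIPnonlinear2} by absorbing $L[u]$ into the source, observe that this leaves the Lipschitz estimate unaffected, and then check that $L[u]\in X_{\delta,\alpha+\epsilon,1}$ under the integrability condition \Eqref{eq:intbeta}. Your write-up is in fact more detailed than the paper's, as you make explicit the key identity $L u=-t^{2}K^{2}\partial_{x}^{2}u$ (from $\tilde L u=0$) and trace through the exponent bookkeeping that produces \Eqref{eq:intbeta}.
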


We note that the regularity assumptions on the asymptotic data can
certainly be improved. An analogous theorem can be formulated for the
$C^\infty$-case.

\begin{proof}
  We can apply \Propref{prop:existenceSVIPnonlinear2} if we are able
  to control the additional contribution of the term $L[u]$ which has
  to be considered as part of the source-term. It has no contribution
  to the Lipschitz estimate \Eqref{eq:lipschitzFPDE2}, but we have to
  guarantee that under these hypotheses, $L[u]\in
  X_{\delta,\alpha+\epsilon,1}$ for the given constant
  $\epsilon$. This is indeed the case if \Eqref{eq:intbeta} holds.
\end{proof}

\begin{example}
  \label{ex:EPD}
  Consider the second-order hyperbolic Fuchsian equation
  \[D^2 v-\lambda Dv-t^2\partial_x^2 v=0,
  \] 
  with a constant $\lambda$. This is the Euler-Poisson-Darboux
  equation. In the standard notation it is
  \[
  \partial_t^2 v-\partial_x^2 v=\frac 1t(\lambda-1)\partial_t v.
  \]
  Note that $\lambda=1$ is the standard wave equation, and in
  this case, the standard singular initial value problem reduces to
  the standard Cauchy problem.
  \begin{enumerate}
  \item Case $\lambda\ge 0$. With our notation, we have $\lambda_1=0$,
    $\lambda_2=-\lambda$, $\beta\equiv 0$, $\nu\equiv 1$ and $f\equiv
    0$. Thus the leading-order term for the standard singular initial value
    problem is
    \[u(t,x)=
    \begin{cases}
      u_*(x)+u_{**}(x) t^{\lambda} & \lambda>0,\\
      u_*(x)\log t+u_{**}(x) & \lambda=0.
    \end{cases}
    \]
    The positivity condition of the energy dissipation matrix
    \Eqref{eq:defNfinal3} is satisfied precisely for $\alpha\ge
    1-\lambda$ and all sufficiently small $\eta>0$. The integrability
    condition \Eqref{eq:intbeta} is satisfied precisely for $\lambda<
    2-\alpha$. Hence, our previous proposition implies that the
    singular initial value problem is well-posed, provided \be
      \label{eq:wellposednessEPD}
      0\le\lambda<2.
    \ee
    Namely, in this case there
    exists a solution $w$ in $\tilde X_{\delta,\alpha,2}$ for some $\alpha>0$ for
    arbitrary asymptotic data in $H^3(U)$.    
  \item Case $\lambda<0$. With our notation, we have
    $\lambda_1=|\lambda|$, $\lambda_2=0$, $\beta\equiv 0$,
    $\nu\equiv 1$ and $f\equiv 0$. The positivity condition of the
    energy dissipation matrix \Eqref{eq:defNfinal3} is satisfied
    precisely for $\alpha\ge 1-|\lambda|$ and all sufficiently
    small $\eta>0$. The integrability condition \Eqref{eq:intbeta} is
    satisfied precisely for $|\lambda|< 2-\alpha$. Hence, our
    previous proposition implies that the singular initial value
    problem is well-posed, provided
    \begin{equation*} 
      -2<\lambda<0.
    \end{equation*}
    Namely, in this case there exists a solution $w$ in
    $\tilde X_{\delta,\alpha,2}$ for some $\alpha>0$ for arbitrary asymptotic
    data in $H^3(U)$.
  \end{enumerate}
\end{example}
It turns out that general smooth solutions to the
Euler-Poisson-Darboux equation can be expressed explicitly by a
Fourier ansatz in $x$ and by Bessel functions in $t$. It is then easy
to check that \Eqref{eq:wellposednessEPD} (and similarly for
$\lambda<0$) is sharp: While for $0\le\lambda<2$, all solutions of the
equation behave consistently with the two-term expansion at $t=0$,
this is not the case for $\lambda\ge 2$ for general asymptotic
data. Hence the standard singular initial value problem is not
well-posed for $\lambda\ge 2$. This is completely consistent with our
heuristic discussion in \Sectionref{sec:canonicallot} underlying the
canonical guess for the leading-order term. If e.g.\ $\lambda=2$,
the assumption that the source-term $t^2\partial_x^2 v$ is negligible
at $t=0$ fails since it is of the same order in $t$ at $t=0$ as the
second leading-order term.  However, we can see in the proof of
\Theoremref{th:well-posednessSIVP} that in the special case $u_*=0$
(and arbitrary $u_{**}$), the integrability condition
\Eqref{eq:intbeta} can be relaxed.  For this special choice of data,
solutions to the singular initial value problem exist even for
$\lambda\ge 2$.

It turns out that, often in applications, the three conditions in
\Theoremref{th:well-posednessSIVP} cannot be satisfied
simultaneously. While it is often possible to find constants $\alpha$
and $\epsilon$ in accordance with the second and third condition, it
can turn out that the corresponding choice of $\alpha$ is too small to
make the energy dissipation matrix positive definite. In order to
circumvent this problem, the trick is to choose canonical expansions
of higher order $v_j$, see \Sectionref{sec:canonicallot}, as the
leading-order term $u$ for sufficiently large $j$. We refer to the
singular initial value problem based on this choice of leading-order
term as \textbf{singular initial value problem of order $j$}.
For $j=1$, it reduces to the standard singular initial value problem;
hence we will focus on the case $j\ge 2$ in the following. Note that,
if $w$ is a solution of the singular initial value problem of order
$j$, it is also a solution of the standard singular initial value
problem. However, if there is only one solution $w$ of the singular
initial value problem of order $j$ for given asymptotic data, it does
not necessarily mean that $w$ is the only solution of the standard
initial value problem for the same asymptotic data.

For the statement of the following theorem, we need the following notation. 
For all $w\in X_{\delta,\alpha,k}$ (or $w\in \tilde
X_{\delta,\alpha,k}$ respectively), we introduce the functions
$E_{\delta,\alpha,k}[w]:(0,\delta]\rightarrow\R$ (or $\tilde
E_{\delta,\alpha,k}[w]:(0,\delta]\rightarrow\R$ respectively) which
are defined in the same way as the respective norms, but the supremum
in $t$ has not been evaluated yet. In particular, this means that
$E_{\delta,\alpha,k}[w]$ (or $\tilde E_{\delta,\alpha,k}[w]$) is a
bounded continuous function on $(0,\delta]$.

\begin{theorem}[Well-posedness of the singular initial value problem
  of order $j$ in $\tilde X_{\delta,\alpha,2}$]
  \label{th:WellPosednessHigherOrderSIVP}
  Given any integer $j\ge 2$ and any asymptotic data
  $u_*,u_{**}\in H^{m_1}(U)$ with $m_1=2j+1$, there exists a unique
  solution $w\in \tilde X_{\delta,\alpha,2}$ of the singular initial
  value problem of order $j$ for some $\alpha>0$, provided
  \begin{enumerate}
  \item $F$ maps $\tilde X_{\delta,\tilde\alpha,m_1}$ into
    $X_{\delta,\tilde\alpha+\epsilon,m_1-1}$ for all asymptotic data
    $u_*,u_{**}\in H^{m_1}(U)$ for some $\epsilon>0$ and
    $\tilde\alpha$ given by 
    $
      \alpha:=\tilde\alpha+(j-2)\kappa\epsilon.
    $
    for an arbitrary $\kappa<1$.
  \item The characteristic speed satisfies
    \[2(\beta(x)+1)>\kappa\epsilon \quad\text{for all }x\in U\]
    for the same constant $\kappa$ chosen earlier.
  \item $F$ satisfies the following Lipschitz condition: for each
    $r>0$ there exists a constant $C>0$ (independent of $\delta$) so
    that
    \[E_{\delta,\tilde\alpha+\epsilon,1}[F[w]-F[\widetilde w]](t)
    \le C \tilde E_{\delta,\tilde\alpha,2}[w-\widetilde w](t)\]
    for all $t\in (0,\delta]$ and
    for all $w,\widetilde w\in \overline{B_r(0)}\subset
    \tilde X_{\delta,\tilde\alpha,2}$.
  \item The energy dissipation matrix \Eqref{eq:defNfinal3} (evaluated
    with $\alpha$) is positive definite at each $(t,x)\in
    (0,\delta)\times U$ for a constant $\eta>0$.
  \end{enumerate}
\end{theorem}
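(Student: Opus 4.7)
The plan is to rewrite the SIVP of order $j$ as a fixed-point problem for the new remainder $w := v - v_j$ and to run the contraction argument of \Propref{prop:existenceSVIPnonlinear2} in a small closed ball of $\tilde X_{\delta,\alpha,2}$. Substituting $v = v_j + w$ in \Eqref{eq:secondorderFuchsianHyp} and using the defining identity $L[w_{k+1}] = F[w_k] = f[v_k]$ of the iteration from \Sectionref{sec:canonicallot}, the equation becomes, for $j\ge 2$,
\[
L[w] \;=\; \bigl(f[v_j+w] - f[v_j]\bigr) + \bigl(f[v_j] - f[v_{j-1}]\bigr) - L[u] \;=:\; F_j[w] + R_j,
\]
where $F_j$ is Lipschitz with $F_j[0]=0$ and $R_j$ is independent of $w$. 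A fixed point of $\mathbb G[w] := \mathbb H\bigl[F_j[w] + R_j\bigr]$ then yields the solution, where $\mathbb H$ is the linear solution operator of \Propref{prop:existenceSVIPlinear} pushed up one derivative as in the discussion following \Propref{prop:existenceSVIPnonlinear2}.

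Three of the four hypotheses are tailored to driving this machinery at the shifted exponent $\alpha = \tilde\alpha + (j-2)\kappa\epsilon$. Condition 4 (positivity of the energy-dissipation matrix evaluated with this $\alpha$) gives $\mathbb H : X_{\delta,\alpha+\epsilon,1} \to \tilde X_{\delta,\alpha,2}$ with a $\delta^\epsilon$ prefactor; condition 3, being a \emph{pointwise-in-$t$} Lipschitz estimate, supplies the bound
\[
\|F_j[w] - F_j[\widetilde w]\|_{\delta,\alpha+\epsilon,1} \;\le\; \widehat C\, \|w - \widetilde w\|_{\delta,\alpha,2}^\sim;
\]
and condition 2, $2(\beta(x)+1) > \kappa\epsilon$, plays the role that \Eqref{eq:intbeta} played in \Theoremref{th:well-posednessSIVP}, ensuring that $L[u] \in X_{\delta,\alpha+\epsilon,1}$ for the shifted exponent. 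Composing these bounds yields a contraction on a small ball provided $\delta$ is small enough, and Banach's fixed-point theorem delivers the unique $w \in \tilde X_{\delta,\alpha,2}$, with uniqueness transferring from that of the fixed point exactly as in the proof of \Propref{prop:existenceSVIPnonlinear2}.

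The main obstacle, and the only part of the argument genuinely sensitive to $j$, is to show that the leftover source $f[v_j] - f[v_{j-1}]$ belongs to $X_{\delta,\alpha+\epsilon,1}$, so that the source of the $\mathbb G$-problem has the required regularity. I would handle this by induction on $k = 1,\ldots,j-1$, establishing that the successive differences of the iterates satisfy
\[
w_{k+1} - w_k \in \tilde X_{\delta,\,\tilde\alpha + (k-1)\kappa\epsilon,\, m_1 - 2(k-1)}
\]
and, correspondingly, that $F[v_{k+1}] - F[v_k]$ gains one further power $t^{\kappa\epsilon}$ in its weight. The inductive step combines the gain of $\epsilon$ provided by hypothesis 1 (mapping $\tilde X_{\delta,\tilde\alpha,m_1}$ into $X_{\delta,\tilde\alpha+\epsilon,m_1-1}$) with the pointwise Lipschitz bound of hypothesis 3 to control $F[v_{k+1}] - F[v_k]$ in the shifted space, after which an application of $\mathbb H$ spends the $\delta^\epsilon$ factor but, thanks to the slack $\kappa < 1$, transfers $\kappa\epsilon$ of the gain into the $t$-exponent while absorbing the residual $(1-\kappa)\epsilon$ into the constants and the logarithmic corrections produced by $L$. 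Iterating $j-1$ times places $f[v_j] - f[v_{j-1}]$ in $X_{\delta,\alpha+\epsilon,1}$, and the data regularity $m_1 = 2j+1$ is exactly calibrated so that, after the two derivatives consumed at each step (one for inverting $L$, one for the Lipschitz loss), the final iterate still has enough spatial regularity to close the $\tilde X_{\delta,\alpha,2}$ estimate.
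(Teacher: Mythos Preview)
The paper itself does not supply a detailed proof of this theorem; it only remarks that ``in effect we have obtained a value of $\alpha$ which increases with $j$'' and defers the details to \cite{BeyerLeFloch1,BeyerLeFloch2}. Your overall architecture---induction on the order, showing that each passage from $v_k$ to $v_{k+1}$ gains a factor $t^{\kappa\epsilon}$ in the weight, so that after $j-2$ steps the residual sits in $X_{\delta,\alpha+\epsilon,1}$ and the contraction of \Propref{prop:existenceSVIPnonlinear2} can be run at the shifted exponent $\alpha=\tilde\alpha+(j-2)\kappa\epsilon$---is the right one and matches the brief indication the paper gives.

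There is, however, a genuine slip in your decomposition that propagates into a misreading of hypothesis~2. The higher-order iterates $v_k$ of \Sectionref{sec:canonicallot} are built from the \emph{ODE} solution operator $H$, i.e.\ they satisfy $\tilde L[w_{k+1}]=F[w_k]$, not $L[w_{k+1}]=F[w_k]$ as you write. Consequently $L[v_j]=\tilde L[v_j]-t^2K^2\partial_x^2 v_j=f[v_{j-1}]-t^2K^2\partial_x^2 v_j$, and the correct residual is
\[
R_j=\bigl(f[v_j]-f[v_{j-1}]\bigr)+t^2K^2\,\partial_x^2 v_j,
\]
not $(f[v_j]-f[v_{j-1}])-L[u]$. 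The extra piece $t^2K^2\partial_x^2 w_j$ (beyond $-L[u]=t^2K^2\partial_x^2 u$) is precisely what condition~2 is there to control: at each inductive step the spatial-derivative contribution $t^2K^2\partial_x^2(\cdot)$ carries a factor $t^{2(\beta+1)}$, and the requirement $2(\beta(x)+1)>\kappa\epsilon$ guarantees that this term does not spoil the $\kappa\epsilon$ gain coming from the $F$-difference. Your claim that condition~2 ``plays the role of \Eqref{eq:intbeta}, ensuring $L[u]\in X_{\delta,\alpha+\epsilon,1}$'' cannot be right as stated, since \Eqref{eq:intbeta} reads $\alpha+\epsilon<2(\beta+1)-\Re(\lambda_1-\lambda_2)$ while condition~2 is merely $2(\beta+1)>\kappa\epsilon$; the latter says nothing about $L[u]$ on its own. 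Once you correct the residual and reinterpret condition~2 as bounding the spatial-derivative terms generated at each step of the ODE-based iteration, your inductive scheme closes. (A smaller point: the gain occurs $j-2$ times, matching $\alpha=\tilde\alpha+(j-2)\kappa\epsilon$, not $j-1$ times as you say at the end.)
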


The third condition above is meaningful since both sides of the
inequality are continuous and bounded functions on $(0,\delta]$. Note
that this theorem can be formulated without difficulty for the
$C^\infty$-case and indeed leads to a simpler statement.

In effect we have obtained a value of $\alpha$ which increases with
$j$ and henceforth improves the positivity of the energy dissipation
matrix. The main prize to pay here is that the asymptotic data must be
sufficiently regular and that we must live with a loss of regularity
in space.

\subsection{The Fuchsian numerical algorithm}

\label{sec:numerical_scheme}

We proceed with the numerical implementation of our approximation
scheme.  For linear source-terms we have shown that the solution of
the singular initial value problem can be approximated by solutions to
the regular initial value problem. We have established an explicit error
estimate for these approximate solutions. For the nonlinear case, an
additional fixed point argument was necessary for the proof, but the
Lipschitz continuity condition should allow us to extend the error
estimates to nonlinear source-terms.

The regular initial value problem for second-order hyperbolic
equations corresponds to the standard initial value problem of a
system of (nonlinear) wave equations with initial time $t_0>0$, and
there exists a huge amount of numerical techniques for computing
solutions \cite{Kreiss2,LeVeque} 
However, a second-order Fuchsian equation written out
with the standard time-derivative $\del_t$ (instead of $D$) clearly
involves factors $1/t$ or $1/t^2$. Although these are finite for the
regular initial value problem, they still can cause severe numerical
problems when the initial time $t_0$ approaches zero, due to the
finite representation of numbers in a computer. In order to solve this
problem, we introduce a new time coordinate $ \tau:=\log t, $ and
observe that $D=\del_\tau$. For instance, the Euler-Poisson-Darboux
equation becomes
\begin{equation*}
  \del_\tau^2v-\lambda\, \del_\tau v-e^{2\tau}\del_x^2 v=0,
\end{equation*}
where $v$ is the unknown and $\lambda$ is a constant.  We have
achieved that there is no singular term in this equation; the main
price to pay, however, is that the singularity $t=0$ has been
``shifted to'' $\tau=-\infty$. Another disadvantage is that the
characteristic speed of this equation (defined with respect to the
$\tau$-coordinate) is $e^\tau$ and hence increases exponentially with
time.  For any explicit discretization scheme, we can thus expect that
the CFL-condition\footnote{The Courant-Friedrichs-Lewy (CFL) condition
  for the discretization of hyperbolic equations with explicit schemes
  is discussed, for instance, in \cite{Kreiss2}.} is always violated from some
time on. We must either adapt the time step to the increasing
characteristic speeds in $\tau$, or, when we decide to work with a
fixed time resolution, accept the fact that the numerical solution
will eventually become instable. However, this is not expected to be a
severe problem since one can compute the numerical solution with
respect to the $\tau$-variable until some finite positive time when
the numerical solution is still stable and then, if necessary, switch
to a discretization scheme based on the original $t$-variable. For all
the numerical solutions presented in this paper, however, this was not
necessary.

We can simplify the following discussion slightly by writing (and
implementing numerically) the equation not for the function $v$ but
for the remainder $w=v-u$. Henceforth we assume that $u$ is the
canonical two-term expansion determined by given asymptotic data. We
have to solve the equation for $w$ on a time interval
$[\tau_0,\delta]$ for some $\tau_0\in\R$ successively going to
$-\infty$ with regular data
\[w(\tau_0,x)=0,\qquad \quad \del_\tau w(\tau_0,x)=0,\qquad x\in U.
\]

Inspired by Kreiss et al.\ in \cite{Kreiss} and by the general idea of
the ``method of lines'', see \cite{Kreiss2}, we proceed as follows to
discretize the equation. First we consider second-order Fuchsian
ordinary differential equations (written for a scalar equation now for
simplicity)
$$
\del_\tau^2 w+2 a\,\del_\tau w+b\, w=f(\tau),
$$ 
where $f$ is a given function and the coefficients $a$ and $b$ are
constants. We discretize the time variable $\tau$ so that
$\tau_n:=\tau_0+n \Delta\tau$, $w_n:=w(\tau_n)$ and $f_n:=f(\tau_n)$
for some time step $\Delta\tau>0$ and $n\in\N$. Then the equation is
discretized in second-order accuracy as
\begin{equation}
  \label{eq:numericalschemeODE}
  \frac{w_{n+1}-2w_n+w_{n-1}}{(\Delta\tau)^2}
  +2a \frac{w_{n+1}-w_{n-1}}{2\Delta\tau}+b w_n=f_n.
\end{equation}
Solving this for $w_{n+1}$ allows to compute the solution $w$ at the
time $\tau_{n+1}$ from the solution at the given and previous time
$\tau_n$ and $\tau_{n-1}$, respectively. At the initial two time steps
$\tau_0$ and $\tau_1$, we set, consistently with the initial data for
$w$ at $\tau_0$ above,
\begin{equation}
  \label{eq:initializescheme}
  w_0=0,\quad w_1=\frac 12 (\Delta\tau)^2 f(\tau_0).
\end{equation}
We will refer to this scheme as the \textit{Fuchsian ODE solver}.

The idea of the method of lines for Fuchsian partial differential
equations is to discretize also the spatial domain with the spatial
grid spacing $\Delta x$ and to use our Fuchsian ODE solver to
integrate one step forward in time at each spatial grid point. The
source-term function $f$, which might now depend on the unknown itself
and its first derivatives, is then computed from the data on the
current or the previous time levels. Here we understand that spatial
derivatives are part of the source-term and are discretized by means
of the standard second-order centered stencil using periodic boundary
conditions.  A problem is that $f$, besides spatial derivatives, can
also involve time derivatives of the unknown $w$ (in fact this can
also be the case for Fuchsian ordinary differential equations when the
source term depends on the time derivative of the unknown). In order
to compute those time derivatives in second-order accuracy without
changing the stencil of the Fuchsian ODE solver, we made the following
choice. In the code we store the numerical solution not only on two
time levels, as it is necessary up to now for the scheme given by
\Eqref{eq:numericalschemeODE} and \eqref{eq:initializescheme}, but on
a further third past time level. The time derivatives in the
source-term can then be computed in second-order accuracy from data
exclusively at the present and previous time steps as follows
\[\del_{\tau} w(\tau_n)=\frac{3 w_n-4 w_{n-1}+w_{n-2}}{2\Delta\tau}
+O((\Delta\tau)^2).
\]
For this, we need to initialize three time levels at $\tau=\tau_0$ and
hence we set $w_2=2 (\Delta\tau)^2 f(\tau_0)$, in addition to
\Eqref{eq:initializescheme}.

\subsection{An example: the Euler-Poisson-Darboux equation}
We present numerical test results for the Euler-Poisson-Darboux
equation now. Recall from Example~\ref{ex:EPD} that the singular
initial value problem with two-term asymptotic data for this equation
is well-posed in particular for $0\le \lambda<2$, and in general
becomes ill-posed for $\lambda\ge 2$. The singular initial value
problem for $\lambda>0$ considers solutions of the form
\[v(t,x)=u_*(x)+u_{**}(x)t^\lambda+w(t,x),
\] 
with remainder $w$. For the purposes of this test, we choose the
asymptotic data $u_*=\cos x$, $u_{**}=0$. Note that in this case, this
leading-order behavior is consistent even with the case
$\lambda=0$. But according to our previous discussion, it is not
consistent with $\lambda=2$, and we expect that this becomes visible
in the numerical solutions. For $u_{**}=0$ and $0<\lambda<2$, we can
show that the leading-order behavior of the remainder at $t=0$ is
\begin{equation}
  \label{eq:leadingorderexact}
  w(t,x)=u_*(x)\left(-\frac 1{2(2-\lambda)}t^2 +\frac
  1{8(2-\lambda)(4-\lambda)}t^4+\ldots\right).
\end{equation}

\begin{figure}[ht]
  \subfigure[$\lambda=0.01$.]{\includegraphics[width=0.49\linewidth]%
    {Figures/EPD_DeltaLambda001_psfrag}}
  \subfigure[$\lambda=1.00$.]{\includegraphics[width=0.49\linewidth]%
    {Figures/EPD_DeltaLambda10_psfrag}}\\
  \subfigure[$\lambda=1.90$.]{\includegraphics[width=0.49\linewidth]%
    {Figures/EPD_DeltaLambda19_psfrag}} 
  \subfigure[$\lambda=1.99$.]{\includegraphics[width=0.49\linewidth]%
    {Figures/EPD_DeltaLambda199_psfrag}}      
  \caption{Numerical solutions Euler-Poisson-Darboux equation (as
    explained in the text).}
\label{fig:EPDNumerics}
\end{figure}

First we confirm that the numerical solutions converge in second-order
when $\Delta\tau$ and $\Delta x$ are changed proportionally to each
other for a given fixed choice of initial time $\tau_0>0$. In the
following, we choose the resolution so that discretization errors are
negligible relative to other errors. In \Figref{fig:EPDNumerics} now,
we show the following results obtained with $N=20$,
$\Delta\tau=0.003$. Here, $N$ is the number of spatial grid points,
i.e.~one has $\Delta x=2\pi/N$. The CFL-parameter is $\Delta t/\Delta
x\approx 0.01$, and we find that the runs are stable for all
$\tau<5$. For each of the plots of \Figref{fig:EPDNumerics}, we fix a
value of $\lambda$ and study the convergence of the approximate
solutions to the (leading-order of the) exact solution
\Eqref{eq:leadingorderexact} for various values of the initial time
$\tau_0$. We plot the value at one spatial point $x=0$ only.  The
convergence rate for $\tau_0\to-\infty$ is fast if $\lambda=1$ or
$\lambda=0.01$, but becomes lower, the more $\lambda$ approaches the
value $2$, where it becomes zero.  This is in exact agreement with our
expectations and consistent with the error estimates in
\Propref{prop:existenceSVIPlinear}.  Hence the numerical results are
very promising and confirm the analytic expectations.

Let us comment on numerical round-off errors. All numerical runs in
this paper were done with double-precision (binary64 of IEEE
754-2008), where the real numbers are accurate for $16$ decimal
digits. However, for the case $\tau_0=-20$ for instance, the second
spatial derivative of the unknown in the equation is multiplied by
$\exp(-40)\approx 10^{-18}$ at the initial time which is not resolved
numerically and hence could possibly lead to a significant
error. This, however, does not seem to be the case since we obtained
virtually the same numerical solution with quadruple precision
(binary128 of IEEE 754-2008), i.e.~when the numbers in the computer
are represented with $34$ significant decimal digits.


\section{Application to Gowdy spacetimes} 
\label{sec:Gowdy}
\subsection{Background material}

Let us provide some background material on Gowdy spacetimes
\cite{Gowdy73,Chrusciel}. Introduce coordinates $(t,x,y,z)$ such that
$(x,y,z)$ describe spatial sections diffeomorphic to $T^3$ while $t$
is a timelike variable.  We can arrange that the Killing fields
associated with the Gowdy symmetry coincide with the coordinate
vector fields $\del_y$, $\del_z$ in a global manner so that the
spacetime metric reads
\begin{equation*}
  g=\frac 1{\sqrt t} \, e^{\Lambda/2}(-dt^2+dx^2)+t \, (e^P(dy+Qdz)^2+e^{-P}dz), 
  \qquad t >0. 
\end{equation*}
Hence, the metric depends on three coefficients $P=P(t,x)$, $Q=Q(t,x)$,
and $\Lambda=\Lambda(t,x)$. We also assume spatial periodicity with
periodicity domain $U:=[0,2\pi)$.

In the chosen gauge, the Einstein's
vacuum equations imply the following second-order wave equations for $P,Q,$ 
\begin{equation}
  \label{eq:originalgowdy}
  \aligned
    P_{tt} + \frac{P_t}{t} - P_{xx} & = e^{2P} ( Q^2_t - Q^2_x),
    \\
    Q_{tt} + \frac{Q_t}{t} - Q_{xx} & = -2(P_t Q_t - P_x Q_x),
  \endaligned 
\end{equation}
which are decoupled from the wave equation satisfied by the third coefficient $\Lambda$:  
\begin{equation}
  \label{eq:evolLambda}
  \Lambda_{tt}-\Lambda_{xx}
  = P_x^2-P_t^2+e^{2P}(Q_x^2-Q_t^2).
\end{equation}
Moreover, the Einstein equations also imply constraint equations, which read 
\begin{subequations}
  \label{eq:constraints}
  \begin{align}
    \label{eq:constraints1}
    \Lambda_x&=2t \, \big( P_x P_t+ e^{2P}Q_x Q_t\big),
    \\
    \label{eq:constraints2}
    \Lambda_t&=t \, \big( P_x^2+t e^{2P}Q_x^2 + P_t^2 + e^{2P}Q_t^2 \big).
  \end{align}
\end{subequations}

It turns out that
\Eqref{eq:evolLambda} can sometimes be ignored in the following sense. 
Given a time $t_0>0$, 
we can prescribe initial data $(P,Q)|_{t_0}$ for
the system \eqref{eq:originalgowdy} while assuming the condition
\begin{equation*}
\int_0^{2\pi}(P_x P_t+e^{2P}Q_x Q_t) \, dx = 0 \qquad \text{ at } t=t_0.
\end{equation*}
Then, the first constraint \Eqref{eq:constraints1} determines the
function $\Lambda$ at the initial time, up to a constant which we
henceforth fix. Next, one easily checks that the solution $(P,Q)$ of
\eqref{eq:originalgowdy} corresponding to these initial data does
satisfy the compatibility condition associated with
\eqref{eq:constraints} and, hence, \eqref{eq:constraints} determines
$\Lambda$ uniquely for {\sl all} times of the evolution.  Moreover,
one checks that \eqref{eq:evolLambda} is satisfied identically by the
constructed solution $(P,Q,\Lambda)$.  One can also consider the
alternative viewpoint which follows from the natural $3+1$-splitting
and treats the three equations
\eqref{eq:originalgowdy}-\eqref{eq:evolLambda} as an evolution system
for the unknowns $(P,Q,\Lambda)$, and \eqref{eq:constraints} as
constraints that propagate if they hold on an initial hypersurface. In
any case, equations \eqref{eq:originalgowdy} represent the essential
set of Einstein's field equations for Gowdy spacetimes. We refer to
\eqref{eq:originalgowdy} as the \textit{Gowdy equations} and focus our
attention on them in most of what follows. An alternative, more
geometrical formulation of Einstein's field equation for Gowdy
symmetry has been introduced in \cite{Andersson2}.

\subsection{Heuristics about singular solutions of the Gowdy
  equations}
\label{sec21} 

We provide here a formal discussion which motivates the (rigorous) analysis in subsequent sections.  
Based on extensive numerical experiments \cite{BergerMoncrief,Berger97,Andersson2}, it was first conjectured 
(and later established rigorously \cite{Ringstrom6,Ringstrom7}) that as one approaches the singularity
the spatial derivative of solutions $(P,Q)$ to \eqref{eq:originalgowdy} becomes negligible and $(P,Q)$  
should approach a solution of the {\sl ordinary} differential equations 
\be
\aligned
& P_{tt} + {P_t \over t} = e^{2P} Q^2_t, 
\qquad 
 Q_{tt} + {Q_t \over t} = - 2 \, P_t \, Q_t. 
\endaligned
\label{GS.ODE}
\ee
These equations are referred to, in the literature\footnote{To our knowledge, this terminology has been introduced in \cite{Eardley71}
and, in the context of Gowdy spacetimes,
was first used in \cite{IsenbergMoncrief}.}, as the {\sl velocity term dominated} (VTD) equations. 
Interestingly enough, they admit solutions given explicitly by 
\be
\label{PQ} 
\aligned 
& P(t,x) = \log \big( \alpha \, t^{k} (1 + \zeta^2 t^{-2k})\big), 
\qquad
 Q(t,x) = \xi - {\zeta \, t^{-2k} \over \alpha \, (1 + \zeta^2 t^{-2k})},
\endaligned
\ee
where $x$ plays simply the role of a parameter and 
$\alpha>0, \zeta, \xi, k$ are arbitrary $2\pi$-periodic functions of $x$. 

Based on \eqref{PQ}, it is a simple matter to determine the first terms in the expansion of the function $P$ near $t=0$, 
that is
$$
\lim_{t \to 0} {P(t,x) \over \log t} = \lim_{t \to 0} t \, P_t(t,x) = -|k|, 
$$
hence
$$
\lim_{t \to 0} \big( P(t,x) + |k (x)| \log t \big) = \varphi(x), 
        \qquad
\varphi :=\begin{cases}
        \log\alpha,     &  k < 0, 
        \\
        \log (\alpha(1 + \zeta^2)),   & k = 0, 
        \\
        \log (\alpha\zeta^2),   & k > 0.
        \end{cases}
$$
Similarly, for the function $Q$ we obtain if $\zeta\not=0$, 
$$
\aligned
& \lim_{t \to 0} Q(t,x) = q (x), 
        \qquad 
q := \begin{cases}
        \xi,   & k < 0, 
        \\
         \xi - \dfrac{\zeta}{\alpha(1+\zeta^2)},  &  k = 0,
        \\
         \xi - \dfrac{1}{\alpha \zeta}, & k > 0,
        \end{cases}
\\
& \lim_{t \to 0} t^{-2|k|} \, \big( Q(t,x) - q (x) \big) = \psi(x), 
        \qquad \psi :=\begin{cases}
        -\dfrac{\zeta}{\alpha}, &  k < 0, 
        \\
        \hskip.41cm 0, & k = 0,
        \\
        \hskip.18cm \dfrac{1}{\alpha\zeta^3}, & k >0.
        \end{cases}
\endaligned 
$$
If $\zeta=0$, then $Q\equiv \zeta$.

From \eqref{PQ}, we thus have the expansion 
\be
\aligned
& P = -|k| \log t + \varphi + o(1),
\qquad \quad
 Q = q + t^{2|k|} \psi+o(t^{2|k|}), 
\endaligned
\label{GS.expansion}
\ee
in which $k, \varphi, q, \psi$ are functions of $x$. 
In general, $P$ {\sl blows-up to $+\infty$} when one approaches the singularity, while $Q$ 
remains {\sl bounded.}   
Observe that the sign of $k$ is irrelevant as far the asymptotic expansion is concerned, and we are allowed to restrict 
attention to $k\ge 0$.

By plugging the explicit solution into the nonlinear terms arising in \eqref{GS.ODE} one sees that 
$e^{2P} \, Q_t^2$ is of order $t^{2(|k|-1)}$ which is negligible since the left-hand side of the 
$P$-equation is of order $t^{-2}$, at least when $k \neq 0$. 
On the other hand, the nonlinear term $P_t Q_t$ is of order $t^{2(|k|-1)}$, which is the same order as the left-hand side
of the $Q$-equation. It is not negligible, but we observe that $P_t Q_t$ has the same behavior as $-(|k|/t) \, Q_t$. 

In fact, observe that the homogeneous system deduced from \eqref{GS.ODE}: 
\be
P_{tt} + {P_t \over t} = 0,
\qquad \quad
Q_{tt} + {1 - 2 k \over t} Q_t = 0,
\label{GS.ODE-simple}
\ee
is solved precisely by the leading-order terms in \Eqref{GS.expansion}. 
This tells us that, as $t \to 0$, 
 the term $e^{2P} Q^2_t$ is negligible in the first equation in 
 \Eqref{GS.ODE}, 
 while $P_tQ_t + (|k|/t)Q_t$ is negligible at $t=0$.
This discussion hence allows us to conclude that 
as far as the behavior at the coordinate singularity $t=0$ is concerned, 
the nonlinear VTD equations \eqref{GS.ODE} are well approximated by the system \eqref{GS.ODE-simple}.

\ 

We return now to the nonlinear terms which were not included in the VTD equations, 
but yet are present in the full model \eqref{eq:originalgowdy}. 
 Allowing ourselves to differentiate the expansion \eqref{GS.expansion}, we
  get the following leading-order terms at $t=0$:
$$
e^{2P} \, Q_x^2 = \begin{cases}
        t^{-2|k|} e^{2\varphi}\, q_x^2 + \ldots,    & q_x \neq 0, 
        \\
        2e^{2\varphi}|k|_x \psi\log t + \ldots,    & q_x=0,\quad |k|_x\not=0, 
        \\
        e^{2\varphi}\psi_x^2 + \ldots,      & q_x = 0,\quad |k|_x=0, \quad \psi_x \neq 0,  
        \end{cases} 
$$
$$
P_x \, Q_x 
=  \begin{cases}
-\log t \, |k|_x \, q_x + \ldots,   &  |k|_x, q_x \neq 0, 
\\
\varphi_x \, q_x + \ldots,   &  |k|_x = 0, \quad \varphi_x, q_x \neq 0, 
\\
-2\log^2 t\, t^{2|k|}\, (|k|_x)^2 \psi + \ldots,   &  |k|_x\neq 0,\quad q_x=0
\\
t^{2|k|} \varphi_x \, \psi_x + \ldots,   &  |k|_x= q_x = 0, \quad \varphi_x, \psi_x \neq 0. 
\end{cases} 
$$


To check (formally) the validity of the expansion \eqref{GS.expansion} we now return 
to the full system. Consider the nonlinear term $e^{2P} \, Q_x^2$ in \eqref{eq:originalgowdy},
and observe the following: 
\begin{itemize}
\item Case $q_x \neq 0$ everywhere on an open subinterval of $[0,2\pi]$. Then, on the one hand,  
the left-hand side of the first equation in  \eqref{eq:originalgowdy}
 is of order $t^{-2}$, at most. 
On the other hand, the term $e^{2P} \, Q_x^2$ is negligible with respect to $t^{-2}$ 
if and only if $|k| <1$ and is of the same order if $|k|=1$.  
\item Case $q_x = 0$ on an open subinterval of $[0,2\pi]$. Then,  
$e^{2P} \, Q_x^2$ is negligible with respect to $t^{-2}$, and no condition on the velocity 
$k$ is required on that interval. 
\item Case $q_x(x_0) = 0$ at some isolated point $x_0$. Then, no definite conclusion can be obtained 
and a ``competition'' between $|k|$ (which may approach the interval $[0,1]$) 
and $q_x$ (which approaches zero) is expected. 
\end{itemize} 

Similarly, at least when $|k|_x \, q_x \neq 0$, 
the nonlinear term $P_x Q_x$ is of order $\log t$ and, therefore, 
negligible with respect to $t^{2(|k|-1)}$ (given by the left-hand side of the second equation in \eqref{eq:originalgowdy})
if and only if $|k| \leq 1$. Points where $|k|_x$ or $q_x$
vanish lead to a less singular behavior and condition on the velocity can also be relaxed.

The formal derivation above strongly suggests that we seek solutions to the full nonlinear equations 
admitting an asymptotic expansion of the form \eqref{GS.expansion}, that is 
$$
P = - k \, \log t + \varphi + o(1),\quad
\qquad 
Q = q + t^{2 k} \, ( \psi + o(1)),  
$$
where $k \geq 0$ and $\varphi, q, \psi$ are prescribed. 
In other words, these solutions asymptotically approach a solution of the VTD equations and, in consequence, such solutions
will be referred to as {\sl asymptotically velocity term dominated} (AVTD) solutions \cite{IsenbergMoncrief}.

Based on this analysis and extensive numerical experiments, it has
been conjectured that asymptotically as one approaches the coordinate
singularity $t=0$ the function $P(t,x)/\log t$ should approach some
limit $k=k(x)$, referred to as the \emph{asymptotic velocity,} and
that $k(x)$ should belong to $[0,1)$ with the exception of a zero
measure set of ``exceptional values''. The reason for this name of $k$
is the following. Based on the work by Geroch \cite{Geroch1,Geroch2},
it was noted by Moncrief \cite{Moncrief} that the evolution equations
\eqref{eq:originalgowdy} for $P$ and $Q$ can be considered as wave map
equations with the hyperbolic space as the target space. If a solution
of these equations has an expansion of the form \eqref{GS.expansion}
at $t=0$, then the velocity of the image points of this map, which
must be defined with the correct convention of the sign and must be
measured with respect to the hyperbolic metric,
approaches $k$ as $t\rightarrow 0$.

It was demonstrated in \cite{Andersson2} that solutions of the Gowdy
equations which are compatible with \eqref{GS.expansion} approach
certain Kasner solutions at $t=0$, with possibly different parameters
along each timeline to the singularity.

\subsection{Gowdy equations as a second-order hyperbolic Fuchsian
  system}
\label{sec:gowdyequations2ndhypFuchs}

The first step in our (rigorous) analysis of the Gowdy equations
\Eqsref{eq:originalgowdy} now is to write them as a system of
second-order hyperbolic Fuchsian equations. After multiplication by
$t^2$, the equations \eqref{eq:originalgowdy} immediately take the
second-order hyperbolic Fuchsian form
\begin{align*}
  D^2P&=t^2\del_x^2 P+e^{2P} (DQ)^2 - t^2 e^{2P}(\del_x Q)^2,
  \\
D^2Q&=t^2\del_x^2 Q-2 DP DQ + 2t^2 \del_x P \del_x Q.
\end{align*} 
The general canonical two-term expansion then reads  
$$
P(t,x)=P_*(x)\log t+P_{**}(x)+\ldots 
$$
for the function $P$ and, similarly, an expansion $Q_*(x)\log
t+Q_{**}(x)+\ldots$ for the function $Q$ with prescribed data $Q_*,
Q_{**}$.  At this stage, we do not make precise statements about the
(higher-order) remainders, yet. In any case, the Fuchsian theory does
not apply to this system due to the presence of the term $-2 DP DQ$
(with the exception of the cases $P_*=0$ or $Q_*=0$).  Namely, this
term does not behave as a positive power of $t$ at $t=0$ when we
substitute $P$ and $Q$ by their canonical two-term expansions, but
this is required by the theory. The reason for this problem is the
significance of the nonlinear term in the source-term as found
in \Sectionref{sec21}, cf.\ \eqref{GS.ODE-simple}.

We propose to add the term $-2k DQ$ to the equation for $Q$ where $k$
is a prescribed (smooth, spatially periodic) function depending on
$x$, only. The function $k$ will play the role of the asymptotic
velocity mentioned before.  This yields the system of equations
\begin{equation}
  \label{eq:Fuchsiangowdy}
  \begin{split}
    D^2P&=t^2\del_x^2 P+e^{2P} (DQ)^2 - t^2 e^{2P}(\del_x Q)^2,\\
    D^2Q-2k DQ&=t^2\del_x^2 Q-2 (k+DP) DQ 
    + 2t^2 \del_x P \del_x Q.
  \end{split}
\end{equation}
The resulting system is of second-order hyperbolic
Fuchsian form with two equations, corresponding to 
\begin{equation*}
\lambda_1^{(1)}=\lambda_2^{(1)}=0,
\quad 
\lambda_1^{(2)}=0,
\quad 
\lambda_2^{(2)}=-2k.
\end{equation*}
Here, the
superscript determines the respective equation of the system
\Eqsref{eq:Fuchsiangowdy}.  If we assume that $k$ is a strictly
positive function, as we will do in all of what follows, the expected
leading-order behavior at $t=0$ given by the canonical two-term
expansions is
\begin{equation}
  \label{eq:leadingorderPQ}
  \aligned
  P(t,x)& =P_*(x)\log t+P_{**}(x)+\ldots,
  \\
  Q(t,x)& =Q_*(x)+Q_{**}(x)t^{2k(x)}+\ldots. 
\endaligned
\end{equation}
One checks easily that the problem associated with the term $-2 DP DQ$
before does not arise if $P_*=-k$. Indeed, the canonical two-term
expansion \Eqref{eq:leadingorderPQ} is consistent with the heuristics
of the Gowdy equations above and we recover the singular initial value
problem studied rigorously in \cite{KichenassamyRendall,Rendall00} and
numerically in \cite{ABL}. We only mention here without further notice
that the case $k\equiv 0$ with the logarithmic canonical two-term
expansion for $Q$ is covered by the following discussion. Furthermore,
the case of $k$ vanishing at only certain points may be also included
via a suitable renormalization of the asymptotic data, see
\Eqref{eq:renormalizeddata}.

When $P_*=-k$, the function $k$ plays a two-fold role in
\Eqref{eq:Fuchsiangowdy}. On the one hand, it is an asymptotic data
for the function $P$ and, on the other hand, it is a coefficient of
the principal part of the second equation. In order to keep these two
roles of $k$ separated in a first stage, we consider the system
\begin{equation}  
  \label{eq:FuchsiangowdyManip}
  \begin{split}
    D^2P&=t^2\del_x^2 P+e^{2P} (DQ)^2 - t^2 e^{2P}(\del_x Q)^2,\\
    D^2Q-2k DQ&=t^2\del_x^2 Q-2 (-P_*+DP) DQ 
    + 2t^2 \del_x P \del_x Q,
  \end{split}
\end{equation}
instead of \Eqref{eq:Fuchsiangowdy}. Studying the singular initial
value problem with two-term asymptotic data means that we search for
solutions to \Eqref{eq:FuchsiangowdyManip} of the form (as $t \to 0$)
\begin{equation}
  \label{eq:expansionGowdy}
  \begin{split}
    P(t,x)&=P_*(x)\log t+P_{**}(x)+w^{(1)}(t,x), \\
    Q(t,x)&=Q_*(x)+Q_{**}(x)t^{2k(x)}+w^{(2)}(t,x),
  \end{split}
\end{equation}
for general asymptotic data $P_*$, $P_{**}$, $Q_*$, $Q_{**}$, and
remainders $w^{(1)}$, $w^{(2)}$. After studying the well-posedness for
this problem, we can always choose $P_*$ to coincide with $-k$ and,
therefore, recover our original Gowdy problem
\eqref{eq:Fuchsiangowdy}-\eqref{eq:leadingorderPQ}. For simplicity in
the presentation, we always assume that $k$ is a $C^\infty$ function.

In the following discussion, we write the vector-valued remainder as
$w:=(w^{(1)},w^{(2)})$, and we fix some asymptotic data $P_*$, $P_{**}$, $Q_*$, and
$Q_{**}$ and choose the leading-order term $u$ according to \Eqref{eq:expansionGowdy}. The source-term operator 
$F[w](t,x)=:(F_1[w](t,x),F_2[w](t,x))$  
reads
\begin{align*}
  F_1[w]=&\left(t^{P_*} e^{P_{**}} e^{w^{(1)}} \big( 2k\, t^{2k}Q_{**}+Dw^{(2)} \big) \right)^2
  \\
  &-\left(t^{P_*} e^{P_{**}} e^{w^{(1)}} \, \big( t \, \del_x Q_*
    +2\del_x k t^{2k}t\log t Q_{**}
    + t^{2k}t\del_xQ_{**}+t\del_xw^{(2)} \big) \right)^2,
\end{align*}
and
$$
\aligned
F_2&[w]
\\
= &-2 Dw^{(1)} \, \big( 2k\, t^{2k}Q_{**}+Dw^{(2)} \big)
\\
 & +2 \Big( t \del_xP_*\log t+t\del_x P_{**}+t\del_xw^{(1)} \Big)
     \Big( \big(t \del_x Q_* + 2 \, \del_x k t^{2k}t\log t Q_{**}
  + t^{2k}t\del_xQ_{**}+t\del_xw^{(2)} \Big).
\endaligned
$$

\subsection{Properties of the source-term operator}

To establish the well-posedness of the singular initial value
problem for the Gowdy equations,
 we need first to derive certain decay properties of the source-term
operator $F$ consistent with \Sectionref{sec:existence}.
Let us introduce some notation specific to the Gowdy equations.  Let
$X_{\delta,\alpha_1,k}^{(1)}$ be the space defined as above based on
the coefficients of the first equation in
\Eqsref{eq:FuchsiangowdyManip} and, similarly, let
$X_{\delta,\alpha_2,k}^{(2)}$ be the space associated with the second
equation. By definition, a vector-valued map $w= (w^{(1)}, w^{(2)})$
belongs to $X_{\delta,\alpha,k}$ precisely if $w^{(1)}\in
X_{\delta,\alpha_1,k}^{(1)}$ and $w^{(2)}\in
X_{\delta,\alpha_2,k}^{(2)}$, with $\alpha:=(\alpha_1,\alpha_2)$.  An
analogous notation is used for the spaces
$\Xt_{\delta,\alpha_1,k}^{(1)}$, $\Xt_{\delta,\alpha_2,k}^{(2)}$ and
$\Xt_{\delta,\alpha,k}$.

Now we are ready to state a first result about the source-term of \Eqsref{eq:FuchsiangowdyManip}. 

\begin{lemma}[Operator $F$ in the finite differentiability class]
  \label{lem:finitediffF}
Fix any $\delta>0$ and any asymptotic data 
$P_*, P_{**},Q_*,Q_{**}\in H^m(U)$, $m\ge 2$. 
Suppose there exist $\eps>0$ and a continuous function
  $\alpha=(\alpha_1,\alpha_2) : U \to (0,\infty)^2$ so that, at each $x\in U$, 
  \begin{subequations}
    \label{eq:finitediffF}
    \begin{gather}
      \label{eq:finitediffF1}
      \alpha_1(x)+\eps<
      \min\big( 2(P_*(x)+2k(x)),2(P_*(x)+1)\big),
      \\
      \label{eq:finitediffF2}
      \alpha_2(x)+\eps<2(1-k(x)),\\
      \label{eq:finitediffF3}
      \alpha_1(x)-\alpha_2(x)>\eps+\min\big( 0,2k(x)-1\big),
      \\
      \eps<1.
    \end{gather}
  \end{subequations}
  Then, the operator $F$ associated with the system \eqref{eq:FuchsiangowdyManip} and the 
  given asymptotic data 
  maps 
  $\Xt_{\delta,\alpha,m}$ into $X_{\delta,\alpha+\eps,m-1}$ 
and satisfies the
  following Lipschitz continuity condition: For each $r>0$ and for some 
  constant $C>0$ (independent of $\delta$), 
$$
E_{\delta,\alpha+\eps,m-1}\Big[ F[w]-F[\what] \Big](t)
  \le C \, \Et_{\delta,\alpha,m}[w-\what](t), \qquad  t\in (0,\delta]
$$
  for all $w,\what\in B_r\subset
  \Xt_{\delta,\alphat,m}$, where $B_r$ denotes the closed ball centered at the origin. 
\end{lemma}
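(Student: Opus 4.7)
The plan is to reduce both the mapping statement and the Lipschitz continuity to a term-by-term bookkeeping of the $t$-exponents in $F_1[w]$ and $F_2[w]$, using three standard tools: (i) the Banach algebra property of $H^{m-1}(U)$ for $m\ge 2$, valid in one spatial dimension via the embedding $H^1(U)\hookrightarrow L^\infty(U)$; (ii) Moser-type estimates for the composition with the exponential, giving $\|e^{P_{**}+w^{(1)}}\|_{H^{m-1}}\le C(r)$ and $\|e^{P_{**}+w^{(1)}}-e^{P_{**}+\what^{(1)}}\|_{H^{m-1}}\le C(r)\|w^{(1)}-\what^{(1)}\|_{H^{m-1}}$ whenever $w,\what\in B_r$; and (iii) the elementary bound $|\log t|^p\le C_\eta\, t^{-\eta}$ on $(0,\delta]$ for any $\eta>0$, used to absorb the logarithms produced by spatial differentiation of the $x$-dependent powers $t^{P_*(x)}$ and $t^{2k(x)}$. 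The underlying bound I will use throughout is that, for $w\in \Xt_{\delta,\alpha,m}$,
$$
\|D^p\partial_x^q w^{(i)}(t,\cdot)\|_{L^2(U)}\le t^{\,\alpha_i-\Re\lambda_2^{(i)}}\,\Et_{\delta,\alpha,m}[w](t),
$$
with an additional factor $t$ on the highest spatial derivative coming from the tilde weighting.

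First I would expand $F_1$ and $F_2$ as finite sums of products of three types of factors: smooth functions of $x$ only (made from the prescribed data $P_*,P_{**},Q_*,Q_{**},k$); pure time-powers $t^{P_*(x)}$, $t^{2k(x)}$ and logarithms $\log t$; and factors depending on $w$ itself through $e^{P_{**}+w^{(1)}}$ and the derivatives $Dw^{(i)}$, $\partial_x w^{(i)}$. For each such product I then track the total $t$-exponent and verify, term by term, that it meets the threshold required to belong to $X_{\delta,\alpha+\eps,m-1}$, namely $t^{\,(\alpha_i+\eps)-\Re\lambda_2^{(i)}}$ for the relevant component $i\in\{1,2\}$. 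The two squared contributions to $F_1$ are controlled by \eqref{eq:finitediffF1}: its first inequality matches the $t^{2(P_*+2k)}$ decay of $(t^{P_*}e^{P_{**}+w^{(1)}}\,2kt^{2k}Q_{**})^2$, while the second matches the $t^{2(P_*+1)}$ decay of the $t\partial_x Q_*$-contribution. For $F_2$, the pure spatial-product terms of the form $(t\partial_x P_{(*,**)}+t\partial_x w^{(1)})(t\partial_x Q_{(*,**)}+t\partial_x w^{(2)})$ produce $t^{2-\eta}$-behavior and are controlled by \eqref{eq:finitediffF2}, while the ``mixed'' terms of the form $Dw^{(1)}\cdot(2kt^{2k}Q_{**}+Dw^{(2)})$ are controlled by \eqref{eq:finitediffF3}; the appearance of $\min(0,2k-1)$ reflects the fact that, depending on whether $2k\ge 1$ or $2k<1$, the dominant cross-term is different (the $t^{2k}Q_{**}$-cross-term dominates when $k<1/2$, the $Dw^{(2)}$-cross-term dominates when $k\ge 1/2$), and one simply takes the weaker of the two resulting constraints. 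The final condition $\eps<1$ is used to absorb the logarithms from (iii) without touching the margin $\eps$ used above.

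The Lipschitz estimate is obtained by the same bookkeeping. Because $F_1,F_2$ are polynomial in $Dw$ and $t\partial_x w$ with smooth dependence on $w^{(1)}$ only through the exponential $e^{P_{**}+w^{(1)}}$, we can write $F[w]-F[\what]$ as a finite sum of products in which exactly one factor is of the form $w-\what$, $D(w-\what)$, $\partial_x(w-\what)$, or $e^{P_{**}+w^{(1)}}-e^{P_{**}+\what^{(1)}}$; the remaining factors lie in the same bounded balls as for the mapping property. Applying (i)--(iii) and the Moser estimate to each such product, the $t$-exponent is inherited from the non-difference factors, while the difference factor contributes a single pointwise-in-$t$ power of $\Et_{\delta,\alpha,m}[w-\what](t)$. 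Summing over terms yields the claimed inequality for $E_{\delta,\alpha+\eps,m-1}[F[w]-F[\what]](t)$ with a constant depending on $r$ but not on $\delta$.

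The main obstacle is not any single deep estimate but the combinatorial bookkeeping of exponents: several terms in $F_1,F_2$ contribute with similar $t$-decay and one must verify that none consumes more than the available margin $\eps$. Two points need particular care. First, the $x$-dependence of the powers $t^{P_*(x)}$ and $t^{2k(x)}$ means that differentiating in $x$ produces $\log t$ factors; these must be treated uniformly on $U$ and absorbed into $t^{-\eta}$ with small $\eta$, which is the reason for $\eps<1$. Second, the Lipschitz estimate is stated with the weaker tilde norm on the right-hand side, so whenever the top spatial derivative $\partial_x^{m-1}F$ in the target norm forces a factor $\partial_x^m w$ to appear (for example in differentiating $t\partial_x w^{(i)}$ in a $\partial_x^{m-1}$ of $F$), one must verify that it is automatically paired with an extra factor of $t$ coming from the $t\partial_x$ structure of the Gowdy source-term; this pairing is exactly what makes the tilde weighting sufficient and is the key compatibility underlying the choice of spaces in \Sectionref{sec:existence}.
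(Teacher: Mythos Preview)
Your overall strategy---term-by-term bookkeeping of $t$-exponents using the Banach algebra property of $H^{m-1}(U)$, Moser estimates for the exponential, and absorption of logarithms into small negative powers of $t$---is exactly the approach the paper takes, and your treatment of the Lipschitz estimate is correct. However, your attribution of the individual hypotheses to specific terms is wrong in two places, and this matters because it is precisely where the subtlety of the tilde spaces enters.

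First, your explanation of \eqref{eq:finitediffF3} is incorrect. Between the two cross-terms you compare, $Dw^{(1)}\cdot 2kt^{2k}Q_{**}\sim t^{\alpha_1+2k}$ always dominates $Dw^{(1)}\cdot Dw^{(2)}\sim t^{\alpha_1+2k+\alpha_2}$ (since $\alpha_2>0$), independently of $k$; the first of these yields only the condition $\alpha_1-\alpha_2>\eps$. The extra piece $\min(0,2k-1)$ does \emph{not} come from here. It arises when you take $\partial_x^{m-1}$ of $F_2$ and hit the factor $t\partial_x w^{(1)}$ with all $m-1$ derivatives: in the tilde space one has only $t\partial_x^{m}w^{(1)}\sim t^{\alpha_1}$ (not $t^{1+\alpha_1}$), and pairing this with the leading factor $t\partial_xQ_*\sim t$ forces $\alpha_1+1>2k+\alpha_2+\eps$, i.e.\ $\alpha_1-\alpha_2>\eps+(2k-1)$.

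Second, the condition $\eps<1$ is not there to absorb logarithms; the logarithms are already handled by the \emph{strictness} of the other inequalities. The condition $\eps<1$ comes from the companion term in $\partial_x^{m-1}F_2$ where all derivatives fall on $t\partial_x w^{(2)}$: again the tilde weighting gives only $t\partial_x^{m}w^{(2)}\sim t^{2k+\alpha_2}$, and pairing with $t\partial_x P_*\log t\sim t\log t$ requires $1>\eps$.

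The underlying error is your final paragraph: the $t\partial_x$ structure of the source does \emph{not} make the tilde weighting ``automatically'' sufficient. In $\Xt_{\delta,\alpha,m}$ the quantity $t\partial_x^{m}w^{(i)}$ behaves like $t^{\alpha_i-\Re\lambda_2^{(i)}}$, which is one full power of $t$ worse than in $X_{\delta,\alpha,m}$. This deficit must be compensated by the other factor in the product, and it is this compensation that generates both the $2k-1$ correction in \eqref{eq:finitediffF3} and the bound $\eps<1$. If you redo the highest-derivative bookkeeping with this in mind, your argument goes through and matches the paper's.
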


In this lemma, since $P_*\in H^1(U)$, in particular, a standard Sobolev
inequality implies that $P_*$ can be identified with a unique bounded
continuous periodic function on $U$, and the inequality \eqref{eq:finitediffF1} makes sense pointwise.

\begin{proof} Consider the expression of $F$ given before. 
  Let $w\in\Xt_{\delta,\alpha,m}$ for some (so far unspecified)
  positive spatially dependent functions $\alpha_1,\alpha_2$, hence
  $w^{(1)}\in\Xt_{\delta,\alpha_1,m}^{(1)}$ and $w^{(2)}\in\Xt_{\delta,\alpha_2,m}^{(2)}$. 
  By a standard Sobolev inequality (since $m \geq 2$ and the 
  spatial dimension is $1$), 
  we get that $F[w](t,\cdot)\in H^{m-1}(U)$ for all
  $t\in(0,\delta]$.  Namely, if $m\ge 2$ we can control the nonlinear
  terms of $F[w](t,\cdot)$ in all generality for a given $t>0$ if any
  factor in any term of $F[w](t,\cdot)$, after applying up to $m-1$
  spatial derivatives, is an element in $L^\infty(U)$ -- with the
  exception of the $m$th spatial derivative of $w$ which is only
  required to be in $L^2(U)$. This is guaranteed by the Sobolev
  inequalities.  Having found that $F[w](t,\cdot)\in H^{m-1}(U)$ for
  all $t\in(0,\delta]$, it is easy to check that $F_1[w]\in
  X_{\delta,\alpha_1+\eps,0}^{(1)}$ if
  \begin{equation}
    \label{eq:firstconditionPre}
    \alpha_1(x)+\eps\le
    \min\Big( 2(P_*(x)+2k(x)),2(P_*(x)+1)\Big), \qquad  x\in U.
  \end{equation}
  Even more, condition \Eqref{eq:firstconditionPre} implies that
  $D^lF_1[w]\in X_{\delta,\alpha_1+\eps,0}^{(1)}$ for all $l\le
  m-1$. 
  
  Considering now spatial derivatives, we have to deal
  with two difficulties. The first one is that  
  logarithmic terms arise with each spatial derivative. We find 
  $\del_x^kD^lF_1[w]\in X_{\delta,\alpha_1+\eps,0}^{(1)}$ for
  all $l\le m-1$ and $k\le m-2$ and $k+l\le m-1$ (excluding first the case $k=m-1$, $l=0$)
   provided
  \begin{equation}
    \label{eq:firstcondition}
    \alpha_1(x)+\eps<
    \min\Big( 2(P_*(x)+2k(x)),2(P_*(x)+1)\Big),\qquad x\in U.
  \end{equation}

A second difficulty arises in the case $k=m-1$, $l=0$. Namely, since 
  $w\in\Xt_{\delta,\alpha,m}$ (and not in $X_{\delta,\alpha,m}$),
  it follows that in particular $t\del^m_x w^{(2)}\sim
  t^{2k+\alpha_2}$ (and not $t^{1+2k+\alpha_2}$); note that the
  function $\beta$ which determines the behavior of the characteristic
  speeds at $t=0$ is identically zero in the
  case of the Gowdy equations. The potentially problematic term is
  hence of the form $AB$ with 
  \begin{align*}
    A:= &t^{P_*} e^{P_{**}} e^{w^{(1)}}(t\del_x Q_*
    +2\del_x k t^{2k}t\log t Q_{**}
    + t^{2k}t\del_xQ_{**}+t\del_xw^{(2)}),
    \\
    B:=&t^{P_*} e^{P_{**}} e^{w^{(1)}}\bigl(\del_x^{m-1}(t\del_x Q_*
    +2\del_x k t^{2k}t\log t Q_{**}
    + t^{2k}t\del_xQ_{**})+t\del_x^{m}w^{(2)}\bigr),
  \end{align*}
  originating from taking $m-1$ spatial derivatives of $F_1[w]$. 
  To ensure $\del_x^{m-1}F_1[w]\in
  X_{\delta,\alpha_1+\eps,0}^{(1)}$, we need 
  \begin{equation}
    \label{eq:secondcondition}
    \alpha_1(x)+\eps< (P_*(x)+1)+(P_*(x)+2k(x)+\alpha_2(x)),
    \qquad x\in U.
  \end{equation}
  If \Eqref{eq:firstcondition} is satisfied, we have (for all $x$) 
$$
  \aligned
    \alpha_1(x)+\eps
    & <
    \min\Big( 2(P_*(x)+2k(x)),2(P_*(x)+1)\Big) 
 \\
    & \le (P_*(x)+1)+(P_*(x)+2k(x))
\endaligned
$$
  and, thus, \Eqref{eq:secondcondition} follows from
  \Eqref{eq:firstcondition}. In conclusion, 
  \Eqref{eq:firstcondition} is sufficient to guarantee that $F_1[w]\in
  X_{\delta,\alpha_1+\eps,m-1}^{(1)}$. 
  
  Let us proceed next with the analysis of the term 
  $F_2[w]$. If
  \begin{equation}
    \label{eq:thirdcondition}
    \alpha_1(x)-\alpha_2(x)\ge\eps,
    \quad\qquad 
    \alpha_2(x)+\eps<2(1-k(x)),\qquad x\in U,
  \end{equation}
  then $F_2[w]\in X_{\delta,\alpha_2+\eps,0}^{(2)}$. This
  inequality also implies that all time derivatives are in
  $X_{\delta,\alpha_2+\eps,0}^{(2)}$ as before. We have to deal
  with the same two difficulties as before when we consider spatial
  derivatives of $F_2[w]$. On the one hand, 
  equality in
  \Eqref{eq:thirdcondition} cannot occur due to additional
  logarithmic terms. On the other hand, we must be careful with the $(m-1)$-th
  spatial derivative of $F_2[w]$.  Here, the two problematic terms 
  are of the form $AB$ with either 
  \begin{align*}
   A:= &\del_x^{m-1}(t\del_xP_*\log t+t\del_x P_{**})
    +t\del_x^m w^{(1)},
    \\
    B:= &t\del_x Q_*
    +2\del_x k t^{2k}t\log t Q_{**}
    + t^{2k}t\del_xQ_{**}+t\del_xw^{(2)},
  \end{align*}
or else 
  \begin{align*}
    A:= & t\del_xP_*\log t+t\del_x P_{**}+t\del_x w^{(1)},
    \\
    B:= &\del_x^{m-1}(t\del_x Q_*
    +2\del_x k t^{2k}t\log t Q_{**}
    + t^{2k}t\del_xQ_{**})+t\del_x^mw^{(2)}.
  \end{align*}
  The first one is under control provided
  $\alpha_1(x)+1>2k(x)+\alpha_2(x)+\eps$, for all $x\in U$, while
  for the second one it is sufficient to require $\eps<1$.  The
  claimed Lipschitz continuity condition follows from the above arguments.
\end{proof}

Positive functions $\alpha_1$ and $\alpha_2$ and constants $\eps>0$
satisfying the hypothesis of \Lemref{lem:finitediffF} can obviously
exist only if $k(x)<1$ for all $x\in U$ (due to
\Eqref{eq:finitediffF2}). In \Lemref{444} below we identify a special
case where this limitation is avoided.  Hence, we make the assumption
that $0<k(x)<1$ for all $x$, which is consistent with our formal
analysis in
\Sectionref{sec21}.  As a consistency check for the case of interest
$P_*=-k$, let us determine under which conditions the inequalities
\Eqsref{eq:finitediffF} can be hoped to be satisfied at all. For this,
consider \Eqsref{eq:finitediffF1} and \eqref{eq:finitediffF3} in the
borderline case $\alpha_2=\eps=0$. This leads to the condition
$0<k<3/4$, which shows that \Lemref{lem:finitediffF} does not apply
within the full interval $0<k<1$. It is interesting to note that
Rendall was led to the same restriction in \cite{Rendall00}, but its
origin stayed unclear in his approach. Here, we find that this is
caused by the presence of the condition \eqref{eq:finitediffF3} in
particular which reflects the fact that $w$ is an element of the space
$\Xt_{\delta,\alpha,m}$ rather than of the smaller space
$X_{\delta,\alpha,m}$. Interestingly, we can eliminate this condition
and, hence, retain the full interval $0<k<1$, when we consider the
$C^\infty$-case, instead of finite differentiability. See also
\cite{Andersson2} for a detailed discussion of the different intervals
of $k$.

\begin{lemma}[Operator $F$ in the $C^\infty$ class. General theory]
  \label{lem:CinftyF}
  Fix any $\delta>0$ and any asymptotic data 
$P_*, P_{**},Q_*,Q_{**}\in C^\infty(U).
$
  Suppose there exist a constant $\eps>0$ and a continuous
  functions $\alpha=(\alpha_1,\alpha_2) :U\to (0,\infty)^2$ such  that, 
  at each $x\in U$, 
  \begin{subequations}
    \begin{gather}
      \label{eq:CinftyF1}
      \alpha_1(x)+\eps<
      \min\big( 2(P_*(x)+2k(x)),2(P_*(x)+1)\big),
      \\
      \label{eq:CinftyF2}
      \alpha_2(x)+\eps<2(1-k(x)),\\
      \label{eq:CinftyF3}
      \alpha_1(x)-\alpha_2(x)>\eps.
    \end{gather}
  \end{subequations}
  Then, for each integer $m \ge 1$, the operator $F$ maps 
  $X_{\delta,\alpha,m}$ into $X_{\delta,\alpha+\eps,m-1}$
   and satisfies the
  following Lipschitz continuity property:
  for each $r>0$ and some constant $C>0$ (independent of $\delta$), 
$$
  E_{\delta,\alpha+\eps,m-1}\big[ F[w] - F[\what]\big](t)
  \leq
   C \Et_{\delta,\alpha,m}[w-\what](t), \qquad t\in (0,\delta], 
$$ 
  for all $w,\what\in B_r \cap
  X_{\delta,\alpha + \eps,m}\subset\Xt_{\delta,\alpha+ \eps,m}$. 
\end{lemma}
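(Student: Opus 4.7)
The plan is to mirror the argument used in \Lemref{lem:finitediffF}, tracking each multiplicative factor in the explicit expressions for $F_1[w]$ and $F_2[w]$ displayed after \Eqref{eq:expansionGowdy}, but exploiting the improved control of the highest spatial derivative available when $w\in X_{\delta,\alpha,m}$ rather than in the larger space $\Xt_{\delta,\alpha,m}$. Since we are in the $C^\infty$ setting and $m\ge 1$ is arbitrary, we may freely apply the one-dimensional Sobolev embedding $H^{m'}(U)\hookrightarrow L^\infty(U)$ for $m'\ge 1$ to every factor appearing in the products making up $F[w]$, with the sole exception of a single factor carrying the highest spatial derivative, which is controlled in $L^2(U)$. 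This reduces all estimates to a bookkeeping of powers of $t$.

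First I would bound $F_1[w]$. By expanding the square and using $e^{w^{(1)}}\in L^\infty$ (whenever $w$ is bounded, which holds on the ball $B_r$ by Sobolev), the worst $t$-weight for the terms not involving the top spatial derivative of $w^{(2)}$ is dictated by the two leading pieces $(t^{P_*}\,2k\,t^{2k}Q_{**})^2$ and $(t^{P_*}\,t\,\del_xQ_*)^2$, which give the threshold $\min(2(P_*+2k),2(P_*+1))$; this is exactly \Eqref{eq:CinftyF1}. Time derivatives $D$ preserve the $t$-weight, so the same condition suffices for all $D^l F_1[w]$ with $l\le m-1$. Spatial derivatives only introduce harmless logarithmic factors (absorbed by the strict inequality in \Eqref{eq:CinftyF1}) until we hit the critical case $k=m-1$, $l=0$: the problematic product is $AB$ with $B$ containing the top-derivative factor $t\,\del_x^m w^{(2)}$. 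Here is where the $X$-versus-$\Xt$ distinction is decisive: since $w^{(2)}\in X_{\delta,\alpha_2,m}^{(2)}$ and the Gowdy characteristic-speed exponent is $\beta\equiv 0$, this factor carries the improved weight $t^{1+2k+\alpha_2}$ instead of $t^{2k+\alpha_2}$. Consequently the sufficient condition becomes $\alpha_1+\eps<(P_*+1)+(P_*+2k+\alpha_2+1)$, which follows immediately from \Eqref{eq:CinftyF1} and positivity of $\alpha_2$.

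Next I would bound $F_2[w]$ analogously. The two leading pieces $Dw^{(1)}\cdot 2k\,t^{2k}Q_{**}$ and $t\del_x P_{**}\cdot t\del_x Q_*$ force the threshold $\alpha_2+\eps<2(1-k)$, giving \Eqref{eq:CinftyF2}, together with the cross-product estimate $\alpha_1-\alpha_2>\eps$ from the coupling between a $w^{(1)}$-factor and a factor of the second leading term, giving \Eqref{eq:CinftyF3}. Logarithms are again absorbed by the strict inequalities. For the top spatial derivative, the two potentially problematic products (one with $t\del_x^m w^{(1)}$, one with $t\del_x^m w^{(2)}$) are controlled as follows: for the first, the improved $X$-weight $t^{1+\alpha_1}$ on $t\del_x^m w^{(1)}$ combined with the $t^{\min(1,2k)}$ behavior of the $Q$-derivative factor reduces the requirement to $\alpha_1+1>2k+\alpha_2+\eps$, which is implied by \eqref{eq:CinftyF2}--\eqref{eq:CinftyF3}; for the second, the improved $X$-weight $t^{1+2k+\alpha_2}$ on $t\del_x^m w^{(2)}$ together with the $P$-derivative factor (behaving like $t\log t$) only requires $\eps<1$, and this is automatic upon shrinking $\eps$. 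In particular, the condition with $\min(0,2k-1)$ appearing in \Eqref{eq:finitediffF3} is no longer needed, which is precisely the reason the full range $0<k<1$ is recovered.

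The Lipschitz continuity is then obtained by the standard trick of writing $F[w]-F[\what]$ as an integrated directional derivative: the resulting expressions are sums of products having the same structure as $F[w]$ itself, except that one factor is a difference of the form $w-\what$, $D(w-\what)$, or $t\del_x(w-\what)$. Each such product is estimated exactly as above, and the inequality
$$
E_{\delta,\alpha+\eps,m-1}\bigl[F[w]-F[\what]\bigr](t)\le C\,\Et_{\delta,\alpha,m}[w-\what](t)
$$
follows uniformly in $t\in(0,\delta]$, with $C$ depending only on $r$, $\delta$, and the asymptotic data. The main obstacle, and the central technical point distinguishing this result from \Lemref{lem:finitediffF}, is the careful verification that the $X$-norm (as opposed to the $\Xt$-norm) provides the extra power of $t$ on the top spatial derivative that eliminates the $\min(0,2k-1)$ correction; the rest is systematic bookkeeping.
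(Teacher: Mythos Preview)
Your proposal is correct and takes the same approach as the paper, which records the proof in a single sentence: it is analogous to that of \Lemref{lem:finitediffF}, the only change being that one works in $X_{\delta,\alpha,m}$ rather than $\tilde X_{\delta,\alpha,m}$, so that the top spatial derivative carries the extra factor $t^{\beta+1}=t$ and the restrictive correction term $\min(0,2k-1)$ disappears. You have identified exactly this mechanism.

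One small bookkeeping correction: the intermediate condition $\alpha_1+1>2k+\alpha_2+\eps$ you state for the first problematic $F_2$ product is not in general implied by \eqref{eq:CinftyF2}--\eqref{eq:CinftyF3} (take $k$ close to $1$ and $\alpha_1$ small). This is harmless, however, because with the improved weight $t\,\del_x^m w^{(1)}\sim t^{1+\alpha_1}$ the factor $A$ is dominated by its explicit piece $\sim t\log t$ (since $1+\alpha_1>1$), whence $AB\sim t^2\log t$ and the actual requirement is simply $2>2k+\alpha_2+\eps$, i.e.\ \eqref{eq:CinftyF2} itself. The analogous sharpening applies to the second problematic term, where the needed bound relaxes from $\eps<1$ to $\eps<2$; your overall conclusion is unaffected.
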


The proof is completely analogous to that of
\Lemref{lem:finitediffF}. Since only spaces $X_{\delta,\alpha,k}$ need
to be checked (i.e.\ without the tilde) in the $C^\infty$-case we
obtain stronger control than in the finite differentiability case.  In
consequence, the $C^\infty$-case does not require the condition
\eqref{eq:finitediffF3}. Thus $k$ can have values in the whole
interval $(0,1)$ as we show in detail later.  In a special case, which
will be of interest for the later discussion, however, we can relax
the constraints for $k$ even in the finite differentiability case.

\begin{lemma}[Operator $F$ in the finite differentiability class. A special case]
\label{444} 
Fix any $\delta>0$ and any asymptotic data 
$P_*, P_{**},Q_{**}\in H^m(U)$, $Q_*=const$, $m \geq 2$. 
Suppose there exist 
  $\eps>0$ and 	a continuous function
  $\alpha=(\alpha_1,\alpha_2) : U \to (0,\infty)^2$ such that, at each $x\in U$, 
  \begin{align*}
    &\alpha_1(x)+\eps<2(P_*(x)+2k(x)),
    \\
    &\alpha_2(x)+\eps<2,
\qquad \quad 
    \alpha_1(x)-\alpha_2(x)>\eps-1, 
\qquad \qquad \eps<1.
  \end{align*} 
  Then, the operator $F$ satisfies the conclusions of \Lemref{lem:finitediffF}.
\end{lemma}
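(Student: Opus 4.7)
The plan is to revisit the proof of \Lemref{lem:finitediffF} step by step and simply track which estimates become more generous once the hypothesis $Q_* = \mathrm{const}$ (hence $\del_x Q_* \equiv 0$) is imposed. The overall architecture -- Sobolev embedding in the spatial variable to control all but the top spatial derivative in $L^\infty$, followed by a term-by-term accounting of the worst powers of $t$ -- remains unchanged, so I would only re-derive the inequalities that change.

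First, in $F_1[w]$ the single squared factor that previously forced $\alpha_1+\eps<2(P_*+1)$ came from $(t\,\del_x Q_*)^2 \sim t^2$ combined with the overall prefactor $t^{2P_*}$. Killing this term leaves the second squared factor dominated by $(t^{2k+1}\log t\,\del_x k\,Q_{**})^2$ and $(t^{2k+1}\del_xQ_{**})^2$, both of order $t^{2P_*+4k+2}$ once the $t^{2P_*}$ prefactor is reinstated. Since this is less singular than the $t^{2P_*+4k}$ contribution from the first squared factor (which still involves $Q_{**}$ and $Dw^{(2)}$), the only surviving restriction on $\alpha_1$ is $\alpha_1+\eps<2(P_*+2k)$, exactly as stated. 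The same kind of accounting handles time derivatives and spatial derivatives up to order $m-2$ without generating any new condition, because the prefactor $e^{w^{(1)}}$ is controlled via Sobolev as in the original proof.

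Next, for $F_2[w]$ the binding term in the original argument was $(t\del_xP_*\log t)(t\del_xQ_*) \sim t^2\log t$, which produced $\alpha_2+\eps<2(1-k)$. Removing it, the next worst product is $(t\del_xP_*\log t)(2\,\del_x k\,t^{2k+1}\log t\,Q_{**})\sim t^{2k+2}\log^2 t$, which yields precisely $\alpha_2+\eps<2$. The lower-order derivatives of $F_2$ are controlled by the same mechanism. The crucial step I expect to be the main obstacle is the $(m{-}1)$th spatial derivative of $F_2$ in the ``Type~1'' configuration, where one factor is $t\del_x^m w^{(1)}$ (in $L^2$ only, of size $t^{\alpha_1}$) and the other is the whole vector $B$ that previously contained $t\del_xQ_*$. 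With $Q_*$ constant, $\|B\|_{L^\infty}$ drops from $O(t)$ down to $O(t^{2k+1}\log t)$, so the product lies in $L^2$ with size $t^{\alpha_1+2k+1}\log t$. Matching this against the required weight $t^{2k+\alpha_2+\eps}$ gives $\alpha_1+1>\alpha_2+\eps$, i.e.\ $\alpha_1-\alpha_2>\eps-1$, which is precisely the relaxed inequality in the statement. The ``Type~2'' configuration, where $B$ carries $t\del_x^m w^{(2)}$, is unaffected by the vanishing of $\del_xQ_*$ and reproduces $\eps<1$ exactly as before.

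Finally, the Lipschitz estimate follows from the same bilinear/trilinear calculus: writing $F[w]-F[\what]$ as a sum of differences, each difference contains at least one factor of the form $e^{w^{(1)}}-e^{\what{}^{(1)}}$, $Dw-D\what$, etc., which is bounded in $\Et_{\delta,\alpha,m}$-norm, while every remaining factor has been bounded above in $L^\infty$ by a uniform constant on $B_r$. Because none of those uniform bounds used $\del_xQ_*$, the arguments transfer verbatim and yield the stated Lipschitz inequality with the new weights. Hence the conclusions of \Lemref{lem:finitediffF} hold under the weaker hypotheses of the present lemma.
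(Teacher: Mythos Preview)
Your overall plan---rerunning the proof of \Lemref{lem:finitediffF} and tracking which inequalities relax when $\del_xQ_*\equiv 0$---is exactly what the paper intends; no separate proof is given for this lemma. Your treatment of $F_1$ is correct, as is your derivation of $\alpha_2+\eps<2$ from the second summand of $F_2$ and your handling of the two top-order spatial-derivative configurations.

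There is, however, a genuine gap. In analyzing $F_2$ you only revisit the second summand $2(t\del_xP_*\log t+\cdots)(t\del_xQ_*+\cdots)$ and the $(m{-}1)$th spatial derivative. You never return to the \emph{first} summand $-2\,Dw^{(1)}\bigl(2k\,t^{2k}Q_{**}+Dw^{(2)}\bigr)$, which contains no $Q_*$ whatsoever and is therefore entirely unaffected by the hypothesis $Q_*=\mathrm{const}$. In the original proof this term is precisely what produces the first half of \Eqref{eq:thirdcondition}: since $Dw^{(1)}=O(t^{\alpha_1})$ and $2k\,t^{2k}Q_{**}=O(t^{2k})$, the product is $O(t^{\alpha_1+2k})$, and placing it in $X^{(2)}_{\delta,\alpha_2+\eps,0}$ forces $\alpha_1-\alpha_2\ge\eps$ (strict once spatial derivatives of $t^{2k}$ introduce logarithms). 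Your relaxed inequality $\alpha_1-\alpha_2>\eps-1$ correctly captures what the ``Type~1'' top-derivative term now requires, but it does not supersede this unchanged zeroth-order constraint. You must address this term explicitly; as written, the argument does not close under the stated hypothesis $\alpha_1-\alpha_2>\eps-1$ alone.
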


In the special case $Q_*=const$, we have hence characterized the map
$F$ for $k$ being any positive function in the finite
differentiability case.  The analogous result for the $C^\infty$-case
can also be derived.


\subsection{Well-posedness theory}

Relying on \Theoremref{th:well-posednessSIVP} and the results in the
previous sections, we now determine conditions that ensure that the
singular initial value problem for the Gowdy equations is well-posed.
Besides the properties of the source-operator $F$ already discussed,
we have to check the positivity of the energy dissipation matrix.
This leads us to the matrix
\begin{equation*}
  N^{(1)} :=
  \begin{pmatrix}
    \alpha_1 & -\eta/2 & 0 \\
    -\eta/2 
    & \alpha_1 
    & 0 \\
    0 & 0 
    &  \alpha_1-1 
  \end{pmatrix} 
\end{equation*}
for the first component and to the matrix 
\begin{equation*}
  N^{(2)}  :=
  \begin{pmatrix}
    2 k + \alpha_2 & -\eta/2 & 0 \\
    -\eta/2 
    & \alpha_2 
    & - 2 \del_x k (t \log t) \\
    0 & - 2 \del_x k (t \log t) 
    & 2 k +\alpha_2-1 
  \end{pmatrix} 
\end{equation*}
for the second component.  For the matrix $ N^{(1)}$ to be positive,
it is necessary that $\alpha_1(x)>1$ for all $x\in U$. However, if
$P_*=-k$, then condition \eqref{eq:finitediffF1} in
\Lemref{lem:finitediffF} in the finite differentiability case (or the
corresponding one in \Lemref{lem:CinftyF} in the $C^\infty$-case)
implies that $\alpha_1(x)<1$. Hence, in the same way as in Rendall
\cite{Rendall00}, one does not arrive at a well-posedness result for
the singular initial value problem yet.  However, since the positivity
of the energy dissipation matrix is the only part of the hypothesis in
\Theoremref{th:well-posednessSIVP} which is is violated, we can use
instead \Theoremref{th:WellPosednessHigherOrderSIVP} to prove
well-posedness of the singular initial value problem with asymptotic
solutions of sufficiently high order $j$.

Let us be specific about what we mean by $j$ being ``sufficiently
large'', and we now make some choice for the parameters $\alpha_1$,
$\alpha_2$ and $\eps$, consistent with \Lemref{lem:CinftyF}, which
will allow us to estimate the required size of $j$. We make no
particular effort to choose these quantities optimally, but still the
goal is to choose $j$ ``reasonably'' small. Henceforth, we restrict to
the $C^\infty$-case and $P_*(x)=-k(x)$ with $0<k(x)<1$ for all $x\in
U$.  We introduce positive constants $\mu_1$ and $\mu_2$ (with further
restrictions later) and the function $\chi(x):=1-2|x-1/2|$. The
condition \eqref{eq:CinftyF1} states that we must choose $\alpha_1(x)$
and $\eps$ so that $\alpha_1(x)+\eps<\chi(k(x))$. We set
\begin{equation}
  \label{eq:choicealpha1}
  \alpha_1(x):=1-\sqrt{4(k(x)-1/2)^2+\mu_1^2},
\end{equation}
and find $\chi(k(x))-\alpha_1(x)>\sqrt{1+\mu_1^2}-1$ for all $x\in
U$, provided $0<k(x)<1$. Similarly, we set
\begin{equation}
  \label{eq:choicealpha2}
  \alpha_2(x):=1-\sqrt{4(k(x)-1/2)^2+\mu_2^2},
\end{equation}
and it follows that
$\alpha_1(x)-\alpha_2(x)>\sqrt{1+\mu_2^2}-\sqrt{1+\mu_1^2}$ for
$\mu_2>\mu_1$. For the conditions \eqref{eq:CinftyF1}
and \eqref{eq:CinftyF3} to hold true, we have to choose
\[0<\mu_1<\mu_2,\quad\text{and}\quad
0<\eps\le\min\Biggl(
\sqrt{1+\mu_1^2}-1,\sqrt{1+\mu_2^2}-\sqrt{1+\mu_1^2}
\Biggr).
\] 
Condition \eqref{eq:CinftyF2} is then satisfied automatically. 

Now, assume in what follows that $k(x)\in (1/2-\Delta
k,1/2+\Delta k)$ for all $x\in U$ for a constant $\Delta k\in
(0,1/2)$. Then it is clear that both functions $\alpha_1$ and
$\alpha_2$ are positive for all such $k(x)$ if and only if
\[\mu_1<\mu_2<\sqrt{1-4(\Delta k)^2}.\]
This assumption will be made in the following. In
\Theoremref{th:WellPosednessHigherOrderSIVP}, we could choose $j$ as
small as possible if we pick the maximal allowed value for
$\eps$. Hence, we set
\[\eps:=\min\Biggl(\sqrt{1+\mu_1^2}-1,\sqrt{1+\mu_2^2}-\sqrt{1+\mu_1^2}
\Biggr).
\]
We find easily that
\[\sqrt{1+\mu_1^2}-1\le \sqrt{1+\mu_2^2}-\sqrt{1+\mu_1^2},\]
provided
\[\mu_1^2\le \frac 14 (\mu_2^2+2\sqrt{1+\mu_2^2}-2),\]
and check that this is consistent with the condition $0<\mu_1<\mu_2$
made before. In order to make a specific choice, we assume this
inequality for $\mu_1$ and hence obtain that
\begin{equation}
  \label{eq:choiceepsilon}
  \eps=\sqrt{1+\mu_1^2}-1.
\end{equation}
Now, in order to make the energy dissipation matrix positive, we must
choose $j$ so that for all $x\in U$,
$$
\aligned
\alphat_1(x) & :=\alpha_1(x)+(j-2)\kappa\eps>1,
\\
\alphat_2(x) & :=\alpha_2(x)+(j-2)\kappa\eps> 1 - 2k(x);
\endaligned
$$
cf.\ \Theoremref{th:WellPosednessHigherOrderSIVP}. These two
inequalities are satisfied for all functions $k$ under our assumptions
if in particular
\begin{equation}
  \label{eq:estimatej}
  j>2+\frac{\sqrt{4(\Delta k)^2+\mu_2^2}}
  {\kappa(\sqrt{1+\mu_1^2}-1)}.
\end{equation}
In any case, we choose the maximal value for $\mu_1$
\begin{equation}
  \label{eq:choicemu1}
  \mu_1:=\frac 12 \sqrt{\mu_2^2+2\sqrt{1+\mu_2^2}-2}, 
\end{equation}
since this minimizes the value on the right side of
\eqref{eq:estimatej}.  We find that for this value of $\mu_1$, the
right side of \eqref{eq:estimatej} is monotonically decreasing in
$\mu_2$ and diverges to $+\infty$ for $\mu_2\to 0$ for all
values of $\Delta k$.

\begin{theorem}[Well-posedness theory for the Gowdy equations] 
  \label{th:well-posedness}
  Consider some asymptotic data $ P_*=-k, \, P_{**}, \, Q_*, \, Q_{**}
  \in C^\infty(U), $ where $k$ is a smooth function $U\to (1/2-\Delta
  k,1/2+\Delta k)$ for a constant $\Delta k\in (0,1/2)$. Then, the
  singular initial value problem with asymptotic solutions of order
  $j$ has a unique solution with remainder $w\in
  X_{\delta,\alpha+(j-2)\kappa\eps,\infty}$ for some sufficiently
  small $\delta>0$ and some $\kappa<1$. Here, the exponents
  $\alpha=(\alpha_1,\alpha_2)$ and $\eps$ are given in
  \Eqsref{eq:choicealpha1}, \eqref{eq:choicealpha2}, and
  \eqref{eq:choiceepsilon} explicitly in terms of the data and
  parameters $\mu_1, \mu_2$ chosen such that $\mu_1$ is an explicit
  expression in $\mu_2$ given in \Eqref{eq:choicemu1} while $\mu_2$ is
  a sufficiently close to (but smaller than) $\sqrt{1-4(\Delta k)^2}$,
  and the order of differentiation $j$ satisfies
  $$
  j > 2+\frac{2}{\sqrt{3-4(\Delta k)^2+2\sqrt{2-4(\Delta k)^2}}-2}.
  $$
\end{theorem}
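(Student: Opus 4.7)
The plan is to invoke \Theoremref{th:WellPosednessHigherOrderSIVP} applied to the rewritten Gowdy system \Eqref{eq:FuchsiangowdyManip} with $P_* = -k$, using the explicit parameter choices \Eqref{eq:choicealpha1}--\Eqref{eq:choicemu1} derived in the preceding discussion. Concretely, I would fix $\mu_2$ just below $\sqrt{1-4(\Delta k)^2}$, define $\mu_1$ by \Eqref{eq:choicemu1}, take $\alpha_1,\alpha_2,\eps$ as in \Eqref{eq:choicealpha1}--\Eqref{eq:choiceepsilon}, and a shift parameter $\kappa$ close to (but strictly less than) $1$. The theorem has four hypotheses to verify: the mapping property of $F$, the characteristic-speed condition, the Lipschitz estimate for $F$, and positive-definiteness of the energy dissipation matrix evaluated at $\alphat := \alpha + (j-2)\kappa\eps$.

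First I would verify the source-term conditions by applying \Lemref{lem:CinftyF} to the triple $(\alpha_1,\alpha_2,\eps)$. The relevant pointwise inequalities \Eqref{eq:CinftyF1}--\Eqref{eq:CinftyF3} reduce, using $P_* = -k$ and $|k-1/2|<\Delta k$, to $\min(2(P_*+2k), 2(P_*+1))-\alpha_1 = 1 - 2|k-1/2| - \alpha_1 \ge \sqrt{1+\mu_1^2}-1 = \eps$, $2(1-k)-\alpha_2 \ge \eps$, and $\alpha_1-\alpha_2 \ge \sqrt{1+\mu_2^2}-\sqrt{1+\mu_1^2} \ge \eps$, the last holding precisely because of the optimal choice \Eqref{eq:choicemu1}. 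The characteristic-speed condition $2(\beta+1) > \kappa\eps$ is automatic since $\beta\equiv 0$ for the Gowdy equations and $\kappa\eps<1$. \Lemref{lem:CinftyF} then delivers both the required mapping $F\colon X_{\delta,\alphat,m+1}\to X_{\delta,\alphat+\eps,m}$ for every $m\ge 1$ and the Lipschitz estimate on balls.

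The main obstacle is the positivity of the two energy dissipation matrices. Inspecting $N^{(1)}$ shows that positive-definiteness amounts to $\alphat_1 > 1$ (the off-diagonal $\eta/2$ terms being absorbed by choosing $\eta>0$ small), while $N^{(2)}$ requires $\alphat_2 > 1-2k(x)$ uniformly in $x$ (the $(2,3)$-entry proportional to $t\log t$ can be made arbitrarily small by shrinking $\delta$). These conditions are in direct tension with \Eqref{eq:CinftyF1}: with $P_*=-k$, that inequality forces $\alpha_1 < 1$, so the standard (order $j=1$) singular initial value problem cannot satisfy both simultaneously, exactly as observed by Rendall. This is precisely the situation for which \Theoremref{th:WellPosednessHigherOrderSIVP} was designed: replacing the canonical two-term expansion by the $j$-th order asymptotic solution $v_j$ produces, after $j-2$ extra applications of $H\circ F$, an additional decay factor of $(\kappa\eps)^{j-2}$, effectively shifting the exponent from $\alpha$ to $\alphat = \alpha + (j-2)\kappa\eps$.

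It then remains to choose $j$ large enough that $\alphat_1 > 1$ and $\alphat_2 > 1 - 2k(x)$ hold for every $x \in U$. Substituting \Eqref{eq:choicealpha1}--\Eqref{eq:choicealpha2} and using $|k-1/2|<\Delta k$, both inequalities reduce to $(j-2)\kappa\eps > \sqrt{4(\Delta k)^2+\mu_2^2}$. With the optimal $\mu_1$ from \Eqref{eq:choicemu1}, letting $\mu_2 \nearrow \sqrt{1-4(\Delta k)^2}$ (so that $\sqrt{4(\Delta k)^2+\mu_2^2}\to 1$) and $\kappa \nearrow 1$, while using $\eps = \sqrt{1+\mu_1^2}-1$ and the explicit form of $\mu_1$ in $\mu_2$, yields precisely the quantitative bound on $j$ stated in the theorem. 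All four hypotheses of \Theoremref{th:WellPosednessHigherOrderSIVP} are then satisfied, producing the unique solution $w \in X_{\delta,\alphat,\infty}$ for sufficiently small $\delta > 0$.
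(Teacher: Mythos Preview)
Your proposal is correct and follows essentially the same route as the paper: the theorem is the culmination of the parameter analysis preceding it, and the proof consists of verifying the four hypotheses of \Theoremref{th:WellPosednessHigherOrderSIVP} via \Lemref{lem:CinftyF} and the explicit choices \Eqref{eq:choicealpha1}--\Eqref{eq:choicemu1}, exactly as you outline.

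One small slip: after verifying \Eqref{eq:CinftyF1}--\Eqref{eq:CinftyF3} for $\alpha=(\alpha_1,\alpha_2)$, you write that \Lemref{lem:CinftyF} yields the mapping $F\colon X_{\delta,\alphat,m+1}\to X_{\delta,\alphat+\eps,m}$. It should read $F\colon X_{\delta,\alpha,m+1}\to X_{\delta,\alpha+\eps,m}$, i.e.\ at the \emph{base} exponent $\alpha$ (which plays the role of $\tilde\alpha$ in the notation of \Theoremref{th:WellPosednessHigherOrderSIVP}), not at the shifted exponent $\alphat=\alpha+(j-2)\kappa\eps$. The energy dissipation matrix, by contrast, is indeed evaluated at $\alphat$, as you correctly state. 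Also, the phrase ``both inequalities reduce to $(j-2)\kappa\eps>\sqrt{4(\Delta k)^2+\mu_2^2}$'' is slightly loose: this is a \emph{sufficient} condition for both (the first inequality only needs $\mu_1$ in place of $\mu_2$, and the second has an extra term $2\Delta k-1<0$), which is all that is required.
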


The above condition implies that to reach $\Delta k\to 0$ we need $j>7$, while
$\Delta k\to 1/2$ requires $j\to\infty$. Although our
estimates may not be quite optimal, the latter implication cannot be avoided.


\subsection{Fuchsian analysis for the function 
  \texorpdfstring{$\Lambda$}{Lambda}} 

So far we have considered the equations \Eqref{eq:originalgowdy} for
$P$ and $Q$. We can henceforth assume that these equations are solved
identically for all $t>0$ (and $t\le\delta$ for some $\delta>0$) and
that hence $P$ and $Q$ are given functions with leading-order behavior
\Eqref{eq:leadingorderPQ} and remainders in a given
$X_{\delta,\alpha,k}$. The equations which remain to be solved in
order to obtain a solution of the full Einstein's field equations are
\Eqref{eq:evolLambda} and \Eqref{eq:constraints}. In particular we are
interested in the function $\Lambda$ in order to obtain the full
geometrical information. We compute $\Lambda$ from a singular
initial value problem with ``data'' on the singularity analogously to
$P$ and $Q$. The following discussion resembles the previous one and
we only discuss new aspects now.

Clearly, the three remaining equations \Eqref{eq:evolLambda} and
\eqref{eq:constraints} for $\Lambda$ are overdetermined, and hence
solutions will exist only under certain conditions. Let us define the
following ``constraint quantities'' from \Eqref{eq:constraints}
\begin{gather*}
  C_1(t,x):=-\del_t\Lambda+t(P_x)^2+e^{2P}t(Q_x)^2+t(\del_tP)^2
  +e^{2P}t(\del_tQ)^2,\\
  C_2(t,x):=-\Lambda_x+2P_x DP+2e^{2P}Q_x DQ.
\end{gather*}
Moreover, we define
\[H(t,x):=-\Lambda_{tt}+\Lambda_{xx}+ P_x^2-P_t^2+e^{2P}(Q_x^2-Q_t^2)
\]
from \Eqref{eq:evolLambda}. From the evolution equations for $P$ and
$Q$, we find the subsidiary system
\begin{equation}
  \label{eq:constrprop}
  \del_t C_1=\del_x C_2+H,\quad \qquad \del_t C_2=\del_x C_1.
\end{equation}
These equations have the following consequences. Suppose that we use
\Eqref{eq:constraints2} as an evolution equation for $\Lambda$. This
implies that $C_1\equiv 0$ for all $t>0$. Moreover, suppose that we
prescribe data at some $t_0>0$ (indeed $t_0$ is allowed to be zero
later) so that $C_2(t_0,x)=0$ for all $x\in U$. Then the equations
imply that $H\equiv 0$ and $C_2\equiv 0$ for all $t>0$ and thus we
have constructed a solution of the full set of field
equations. Alternatively, let us use \Eqref{eq:evolLambda} as the
evolution equation for $\Lambda$, i.e.\ $H\equiv0$. Suppose that we
prescribe data so that $C_1(t_0,x)=C_2(t_0,x)=0$ at some $t_0$. It
follows that $C_1\equiv C_2\equiv 0$ for all $t>0$ because the
evolution system \Eqref{eq:constrprop} for $C_1$ and $C_2$ is
symmetric hyperbolic. Again, Einstein's field equations are
solved. 

Now, we want to consider the case $t_0=0$. First note that
\Eqsref{eq:constrprop} is regular even at $t=0$.  Suppose that $P$ and
$Q$ are functions with leading-order behavior
\Eqref{eq:leadingorderPQ} and remainders in a given
$X_{\delta,\alpha,k}$ with $k\ge 1$. If there exists a function $w_3$
so that
\begin{equation}
  \label{eq:expansionLambda}
  \Lambda(t,x)=\Lambda_*(x)\log t+\Lambda_{**}(x)+w_3(t,x)
\end{equation}
with $w_3$ converging to zero in a suitable norm at $t=0$ and
\begin{equation}
  \label{eq:asymptconstr}
  \Lambda_*(x)=k^2(x),\quad
  \Lambda_{**}(x)=\Lambda_0
  +2\int_0^x k(\tilde x)(-\del_{\tilde x}P_{**}(\tilde x) +2
  e^{2P_{**}(\tilde x)}Q_{**}(\tilde x)\del_{\tilde x} Q_*(\tilde x))\,d\tilde x,
\end{equation}
where $\Lambda_0$ is an arbitrary real constant,
then, in particular, 
\[\lim_{t\rightarrow 0} C_2=0.\]
It also follows that $\lim_{t\rightarrow 0} t C_1=0$.  Let us first
use \Eqref{eq:constraints2} as a singular evolution equation for
$\Lambda$. Since this is ``only'' a singular ODE, one can show easily
that there exists a unique solution for $\Lambda$ for $t>0$ which
obeys the two-term expansion above, and hence $C_1\equiv 0$. Our
discussion before implies that $H,C_2\equiv 0$. Hence we obtain a
solution of the full Einstein's field equation for all
$t>0$. Alternative, choose \Eqref{eq:evolLambda} as the evolution
equation for $\Lambda$ now. This equation can be written in
second-order hyperbolic Fuchsian form
\[D^2\Lambda-t^2\del_x^2\Lambda
=(t\del_xP)^2+(D\Lambda-(DP)^2)+e^{2P}((t\del_x
  Q)^2-(DQ)^2).
\] 
Indeed this equation is compatible with the leading-order expansion
\Eqref{eq:expansionLambda} at $t=0$ and we can show well-posed of this
singular initial value problem in the same way as we did for the
functions $P$ and $Q$ before (also going to sufficiently high-order in
$j$). In particular, for any asymptotic data $\Lambda_{*}$ and
$\Lambda_{**}$, not necessarily those given by
\Eqref{eq:asymptconstr}, there exists a unique solution of this
equation $\Lambda$ with remainder $w_3$ in a certain space
$X_{\delta,\alpha,k}$. By means of uniqueness we find that the
solution $\Lambda$ of this equation coincides with the solution for
$\Lambda$ obtained using \Eqref{eq:constraints2} as the evolution
equation. Hence, we must have $H,C_1,C_2\equiv 0$ for all
$t>0$, and thus also this method yields a solution of the full
Einstein's field equations.

Note that periodicity and \Eqref{eq:asymptconstr} implies that the
asymptotic data for $P$ and $Q$ must satisfy the relation
\[\int_0^{2\pi} k(\tilde x)(-\del_{\tilde x}P_{**}(\tilde x) +2
e^{2P_{**}(\tilde x)}Q_{**}(\tilde x)\del_{\tilde x} Q_*(\tilde
x))\,d\tilde x=0
\]
for smooth solutions.


\section{Numerical experiments}
\label{sec:numericalsolGowdy}

\subsection{Test~1. Homogeneous pseudo-polarized solutions}

We continue our discussion with the singular initial value problem for
the Gowdy equations. In all of what follows we consider the singular
initial value problem with two-term asymptotic data for the Gowdy
equations. As we show this works very well and we get good
convergence. This is a strong indication that the standard singular
initial value problem is well-posed. In contrast, recall from
\Theoremref{th:well-posedness} that our analytical techniques are only
sufficient to show that the initial value problem with asymptotic
solutions of sufficiently high order is well-posed for the Gowdy
equations.

Before we proceed with ``interesting'' solutions of the Gowdy equations,
let us start with a case for which we can construct an explicit
solution and hence test the numerical implementation.  Let $\Pt$ and
$\Qt$ be solutions of the polarized equations in the homogeneous case,
i.e.\ set $\Qt=0$ and $\Pt(t,x)=\Pt(t)$. In this case, it follows
directly that the exact solution of the Gowdy equations is
\[\Pt(t)=-k\log t+\Pt_{**},\]
where both $k$ and $\Pt_{**}$ are arbitrary constants. The
corresponding full solutions of Einstein's equations are Kasner
solutions whose parameters\footnote{In the conventions of
  \cite{Wainwright}, we have $p_1=(k^2-1)/(k^2+3)$,
  $p_2=2(1-k)/(k^2+3)$, $p_3=2(1+k)/(k^2+3)$, and 
  the three flat cases are
  realized by $k=1$, $k=-1$ and $|k|\rightarrow\infty$.} are
determined by $k$ exclusively ($\Pt_{**}$ is just a gauge
quantity). By a reparametrization of the Killing orbits of the form
\[\tilde x_2=x_2/\sqrt 2+x_3/\sqrt 2,\quad 
\tilde x_3=-x_2/\sqrt 2+x_3/\sqrt 2,
\] 
where $\tilde x_2$, and $\tilde x_3$ are the coordinates used to
represent the orbits of the polarized solution above, the same
solution gets re-expressed in terms of functions
\be
\label{hhhh}
P=\log\cosh (-k\log t+\Pt_{**}),\quad Q=\tanh (-k\log t+\Pt_{**}).
\ee
These functions $(P,Q)$ are again solutions of
\eqref{eq:originalgowdy}. Asymptotically at $t=0$, they satisfy 
\begin{equation*} 
  P=-k\log t+(\Pt_{**}-\log 2)+\ldots,\quad Q=1-2e^{-2\Pt_{**}}t^{2k}+\ldots,
\end{equation*}
from which we can read off the corresponding asymptotic data. 

\begin{figure}[t]  
 \centering
 \subfigure[$k=0.5$.]{%
   \includegraphics[width=0.49\textwidth]{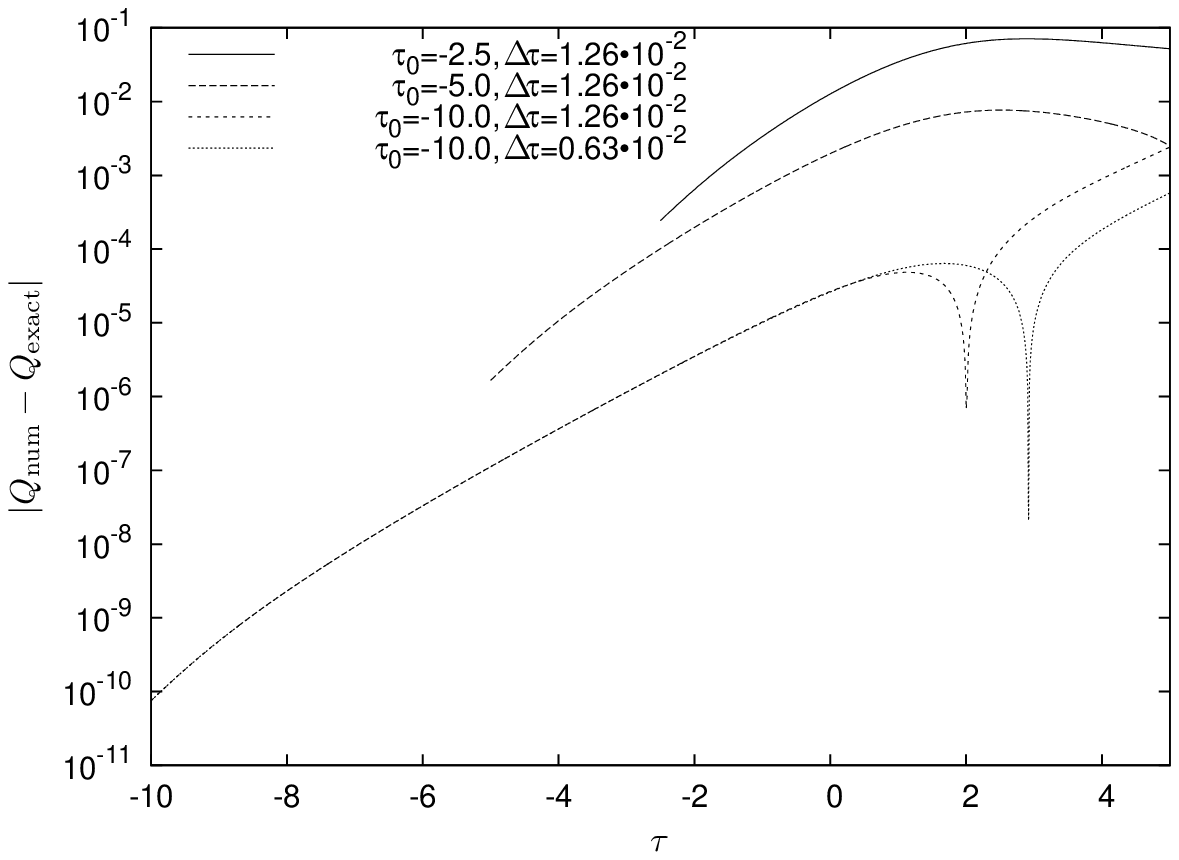}}
 \subfigure[$k=0.9$.]{%
   \includegraphics[width=0.49\textwidth]{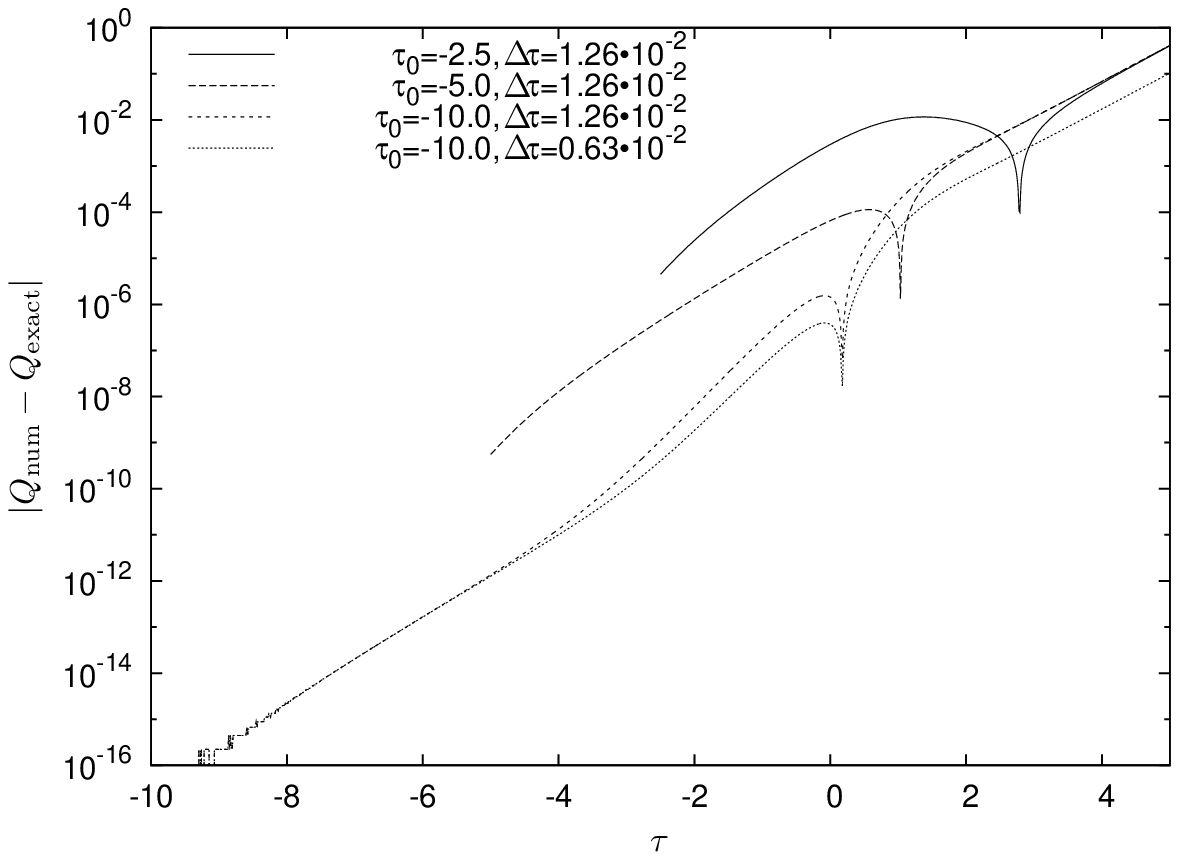}}
 \caption{Convergence of numerical solutions of the test case 1 (as
   explained in the text).}
 \label{fig:GowdyPseudoPolHomog}
\end{figure}
Now we compute the solutions corresponding to these asymptotic data
numerically and compare them to the exact solution \Eqref{hhhh}. We
pick $\Pt_{**}=1$, so that $P_{**}=1-\log 2$, $Q_{*}=1$ and $Q_{**}=-2
e^{-2}$. Since the solution is spatially homogeneous -- in fact this
is an ODE problem -- we only need to do the comparison at one spatial
point. 

The results are presented in \Figref{fig:GowdyPseudoPolHomog}
where we plot the difference of the numerical and the exact value of
$Q$ versus time for various values of $\tau_0$. In the first plot,
this is done for $k=0.5$ and in the second plot for $k=0.9$. The plots
confirm nice convergence of the approximate solutions to the exact
solution. The fact that each approximate solution diverges from the
exact solution almost exponentially in time is a feature of the
approximate solutions themselves and not of the numerical
discretization, as is checked by comparing two different values of
$\Delta\tau$ in these plots.  From our experience with the
Euler-Poisson-Darboux equation, we could have expected that the
convergence rate is lower in the case $k=0.9$ than in the case $k=0.5$
(note that $k$ plays the same role $\lambda/2$). In the case of the
Euler-Poisson-Darboux equation, the rate of convergence decreases when
$\lambda$ approaches $2$, due to the influence of the second-spatial
derivative term in the equation. In the spatially homogeneous case
here, however, this term is zero and hence this phenomenon is not
present. The ``spikes'' in \Figref{fig:GowdyPseudoPolHomog} are just a
consequence of the logarithmic scale of the horizontal axes and the
fact that the numerical and exact solutions equal for some instances
of time.

\subsection{Test~2. General Gowdy equations}
\begin{figure}[t]  
 \centering
 \subfigure[$A=0.2$.]{%
   \includegraphics[width=0.49\textwidth]{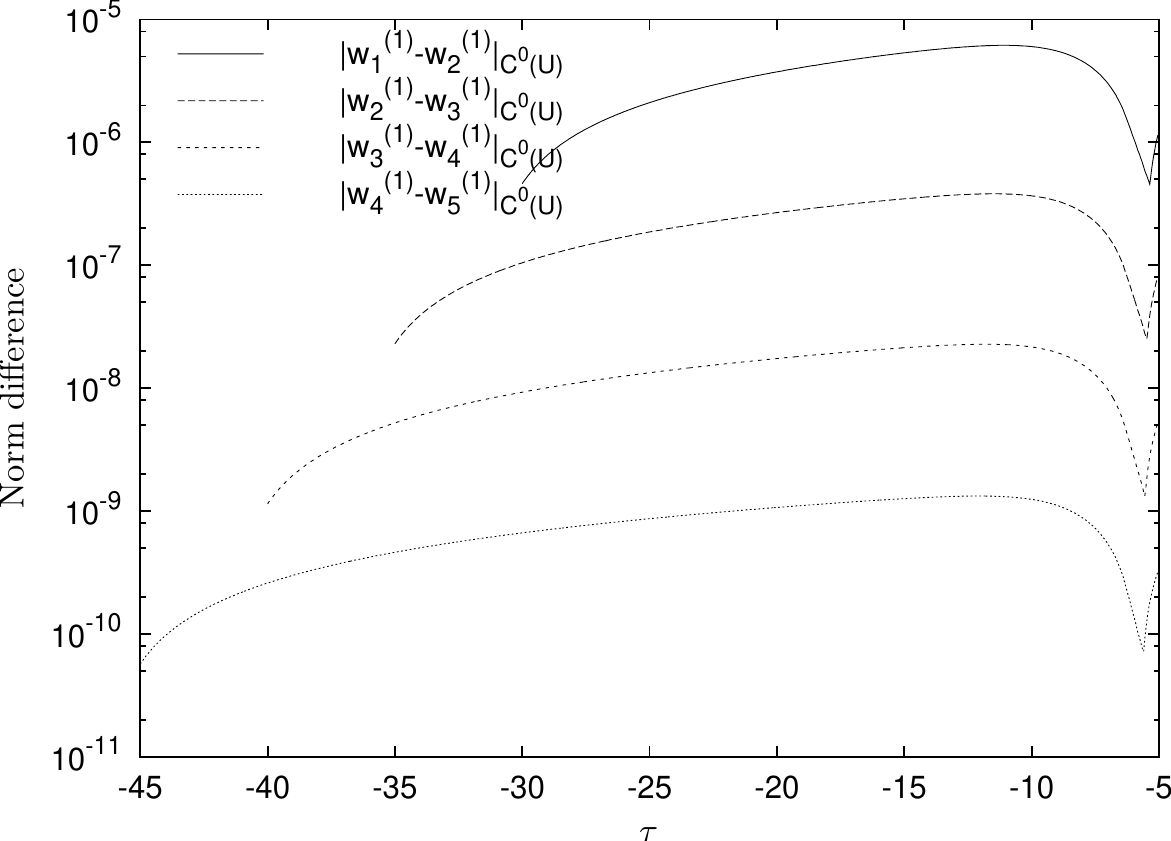}}
 \subfigure[$A=0.4$.]{%
   \includegraphics[width=0.49\textwidth]{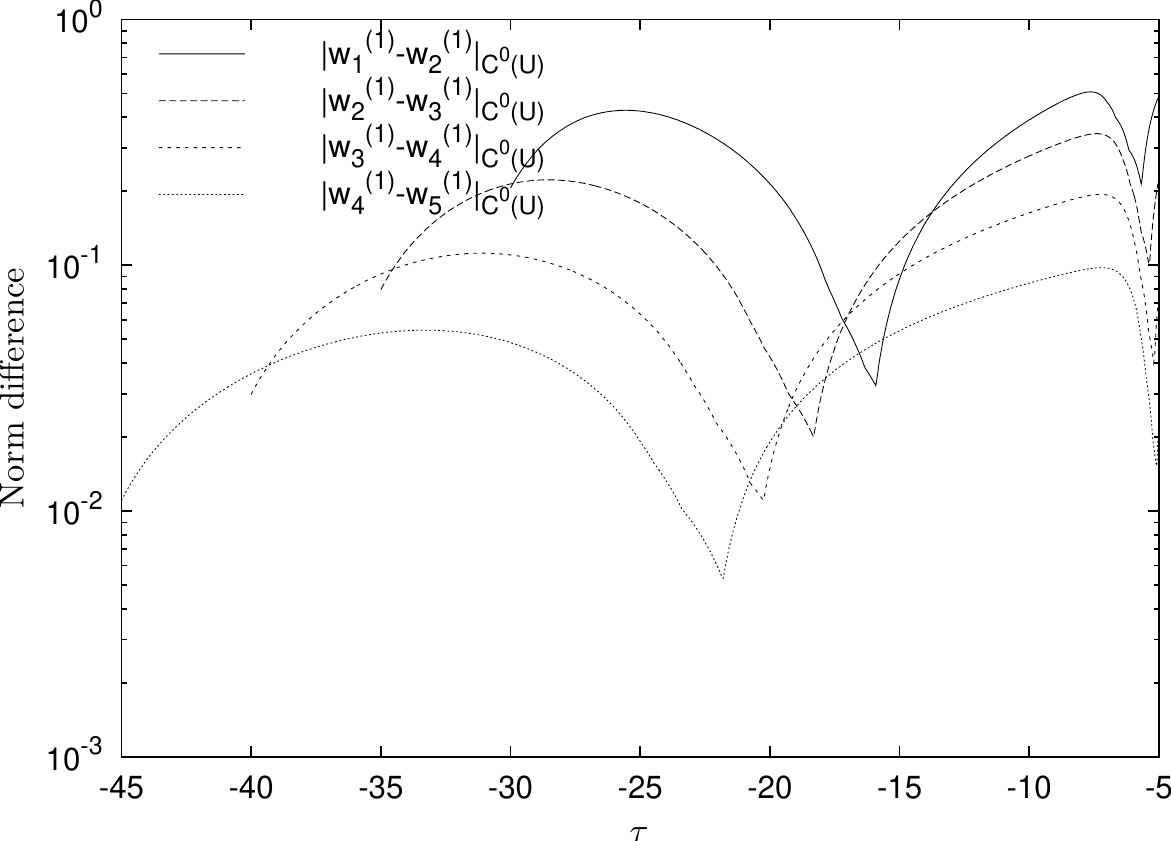}}
 \caption{Convergence of numerical solutions of the test case 2 as
   explained in the text.}
 \label{fig:GowdyGeneralConvergence}
\end{figure}
Now we want to study the convergence for a ``generic'' inhomogeneous
Gowdy case (still ignoring the equation for the quantity
$\Lambda$). Here we choose the following asymptotic data
$$
\aligned
& k(x)=1/2+A\cos(x),\qquad Q_*=1.0+\sin(x),
\\
& P_{**}=1-\log 2+\cos(x),\qquad Q_{**}=-2 e^{-2},
\endaligned
$$  
with a constant $A\in (-1/2,1/2)$. We do not know of an explicit
solution in this case. In \Figref{fig:GowdyGeneralConvergence}, we
show the following numerical results for $A=0.2$ and $A=0.4$,
respectively. For the given value of $A$, we compute five approximate
solutions with initial times $\tau_0=-30,-35,-40,-45,-50$ numerically,
each with the same resolution $\Delta\tau=0.01$ and $N=80$. The
resolution parameters have been chosen so that the numerical
discretization errors are negligible in the plots of
\Figref{fig:GowdyGeneralConvergence}. Then, for each time step for
$\tau\ge-30$, we compute the supremum norm in space of the difference
of the remainders $w^{(1)}$ of the two approximate solutions given by
$\tau_0=-30$ and $\tau_0=-35$. In this way we obtain the first curve
in each of the plots of \Figref{fig:GowdyGeneralConvergence}. The same
is done for the difference between the cases $\tau_0=-35$ and
$\tau_0=-40$ for all $\tau\ge-35$ to obtain the second curve
etc. Hence these curves yield a measure of the convergence rate of the
approximation scheme (without referring to the exact solution).  In
agreement with our observation for the Euler-Poisson-Darboux equation,
the convergence rate is high if $k$ is close to $1/2$ and becomes
lower, the more $k$ touches the ``extreme'' values $k=0$ and $k=1$.

Much in the same way as for the Euler-Poisson-Darboux equation we find
that double precision is sufficient for these computations despite of
the fact that $\exp(2\tau)$ is $10^{-44}$ for $\tau=-50$.


\subsection{Test 3. Gowdy spacetimes containing a Cauchy horizon}
\label{sec:GowdyCauchyhorizon}

The papers
\cite{ChruscielIsenbergMoncrief,ChruscielIsenberg,ChruscielLake,HennigAnsorg,IsenbergMoncrief}
were devoted to the construction and characterization of Gowdy
solutions with Cauchy horizons in order to prove the strong cosmic
censorship conjecture in this class of spacetimes.  Spacetimes with
Cauchy horizons are expected to have saddle and physically
``undesired'' properties, in particular they often allow various
inequivalent smooth extensions. This has the undesired consequence
that the Cauchy problem of Einstein's field equations does not select
one of them uniquely.  Some explicit examples are known, but most of
the analysis is on the level of existence proofs and asymptotic
expansions.

Hence, it is of interest to construct such solutions numerically and
analyze them in much greater detail than possible with purely analytic
methods. Constructing these solutions numerically, however, is
delicate since the strong cosmic censorship conjecture suggests that
they are instable under generic perturbations. It can hence often be
expected that numerical errors would most likely ``destroy the Cauchy
horizon''. This is so, in particular, when the singular time at $t=0$
is approached backwards in time from some regular Cauchy surface at
$t>0$, i.e.~for the ``backward approach''.

In the Gowdy case, where the strong cosmic censorship conjecture has
been proven \cite{Ringstrom7}, however, there are clear criteria for
the asymptotic data so that the corresponding solution of the singular
initial value problem has a Cauchy horizon (or only pieces thereof;
cf.~below) at $t=0$, as discussed in \cite{ChruscielIsenbergMoncrief}
for the polarized case and in \cite{ChruscielLake} for the general
case. Our novel method here allows us to construct such solutions with
arbitrary accuracy and it can hence be expected that this allows us to
study the saddle properties of such solutions.  Our main aim so far is
to compute such a solution and hence to demonstrate the feasibility of
our approach. A follow-up work will be devoted to the numerical
construction and detailed analysis of relevant classes of such
solutions.

Motivated by the results in \cite{ChruscielIsenbergMoncrief}, we choose
the asymptotic data as follows
\begin{align*}
  &k(x)=
  \begin{cases}
    1, & x\in [\pi,2\pi],\\
    1-e^{-1/x}e^{-1/(\pi-x)}, & x\in (0,\pi),
  \end{cases}&\quad &P_{**}(x)=1/2,\\
  &Q_*(x)=0,&\quad &Q_{**}(x)=\begin{cases}
    0, & x\in [\pi,2\pi],\\
    e^{-1/x}e^{-1/(\pi-x)}, & x\in (0,\pi),
  \end{cases}\\
  &\Lambda_*(x)=k^2(x),&\quad
  &\Lambda_{**}(x)=2.
\end{align*}
With these asymptotic data, the corresponding solution has a smooth
Cauchy horizon at $(t,x)\in \{0\}\times (\pi,2\pi)$ (namely where
$k\equiv 1$), and a curvature singularity at $(t,x)\in \{0\}\times
(0,\pi)$ (namely where $0<k<1$). Note that the function $k$ is smooth
everywhere (but not analytic). Our analysis
in \Sectionref{sec:gowdyequations2ndhypFuchs} shows that we are allowed to
set $k=1$ at some points since $\del_xQ_*=0$. This motivates our
choice of $Q_*$.  With this, our choice of $Q_{**}$ implies that the
solution is polarized on the ``domain of dependence''\footnote{The
  notion of ``domain of dependence'' for the singular initial value
  problem follows from the energy estimate.}
of the ``initial data'' interval $(\pi,2\pi)$. All data were chosen as
simple as possible to be consistent with the constraints.

\begin{figure}[t]  
 \centering  
 \includegraphics[width=0.49\textwidth]{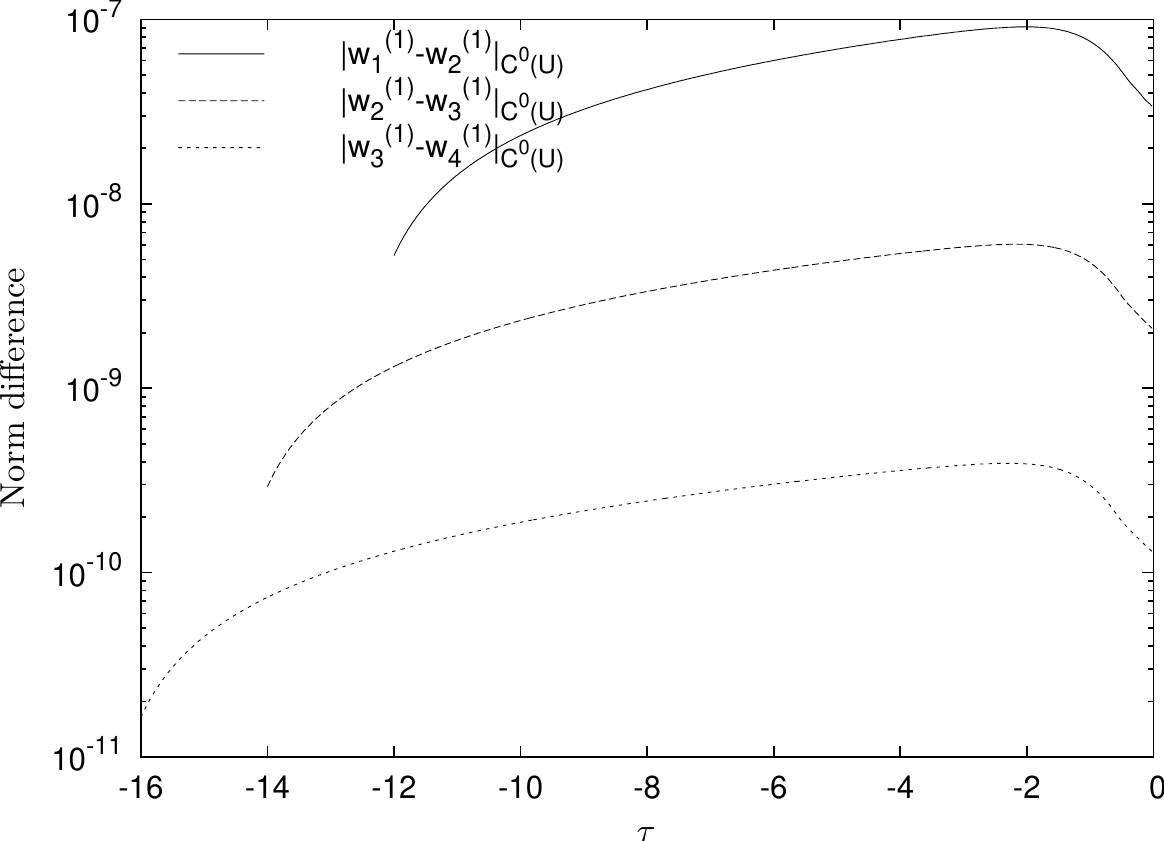}
 \caption{Convergence of numerical solutions corresponding to
   asymptotic data in \Sectionref{sec:GowdyCauchyhorizon}.}
 \label{fig:CHConvergence}
\end{figure}
First we repeated the same error analysis as for the previous Gowdy
case, see~\Figref{fig:CHConvergence}. For all the runs in the plots,
we choose $N=500$, $\Delta\tau=0.005$ which guarantees that
discretization errors are negligible in the plot. We find that our
numerical method allows us to compute the Gowdy solution very
accurately. Here, we solve the full system for
$(P,Q,\Lambda)$.

\begin{figure}[t] 
 \centering
 \subfigure[Kretschmann scalar at $\tau=-10.0$.]{%
   \includegraphics[width=0.49\textwidth]{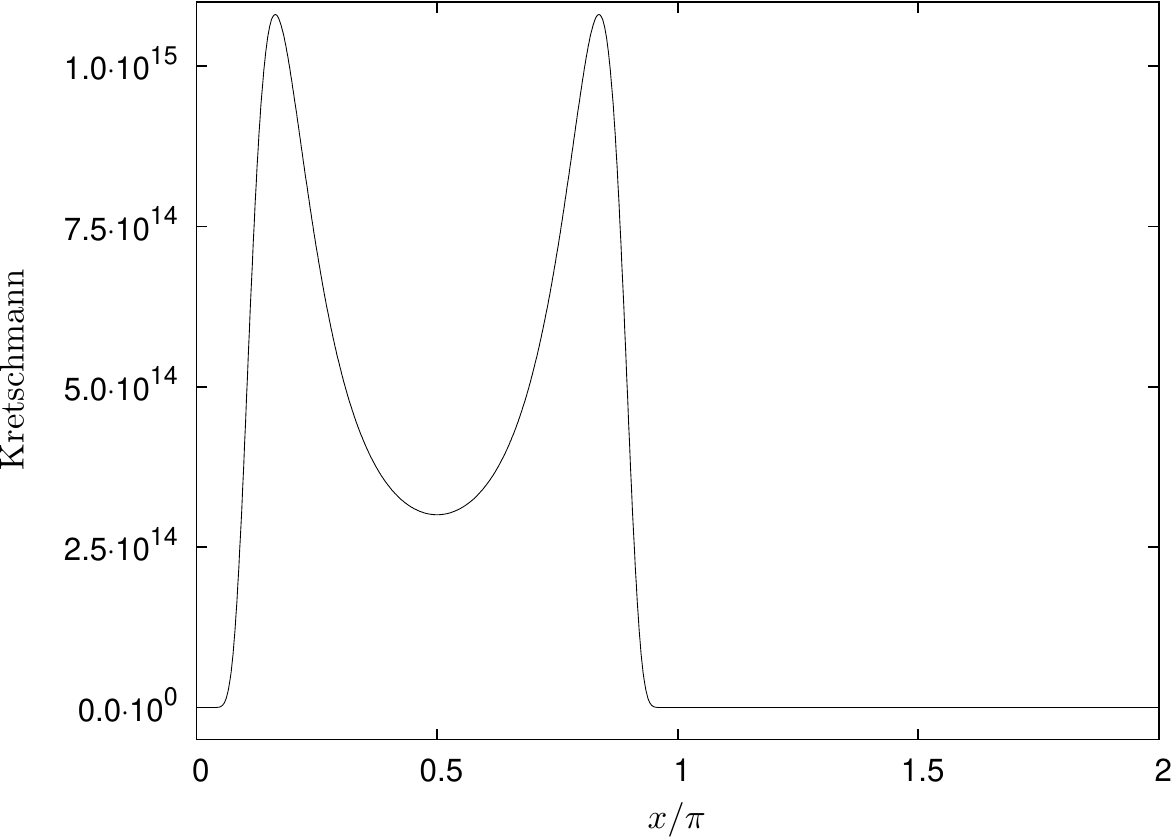}}
 \subfigure[Kretschmann scalar at $\tau=0.0$.]{%
   \includegraphics[width=0.49\textwidth]{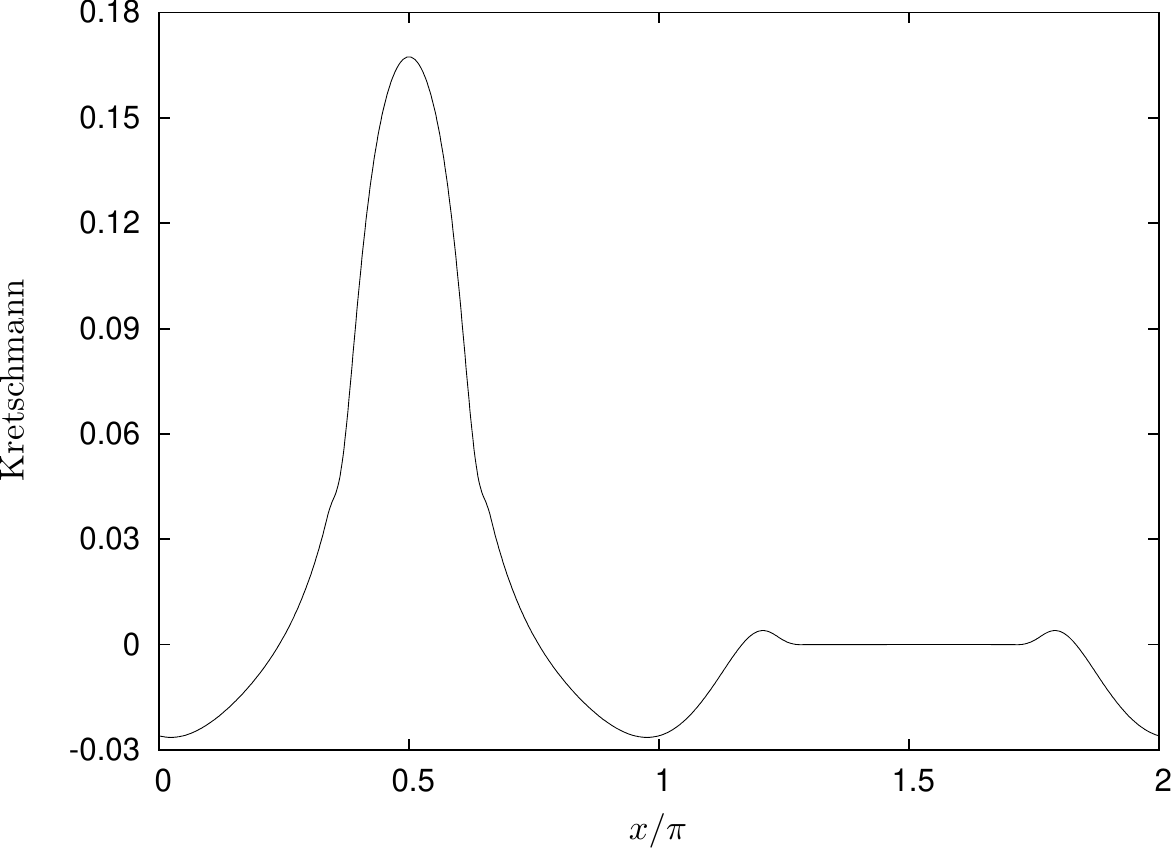}}

 \subfigure[Remainders of $P,Q$ at $\tau=-10.0$.]{%
   \includegraphics[width=0.49\textwidth]{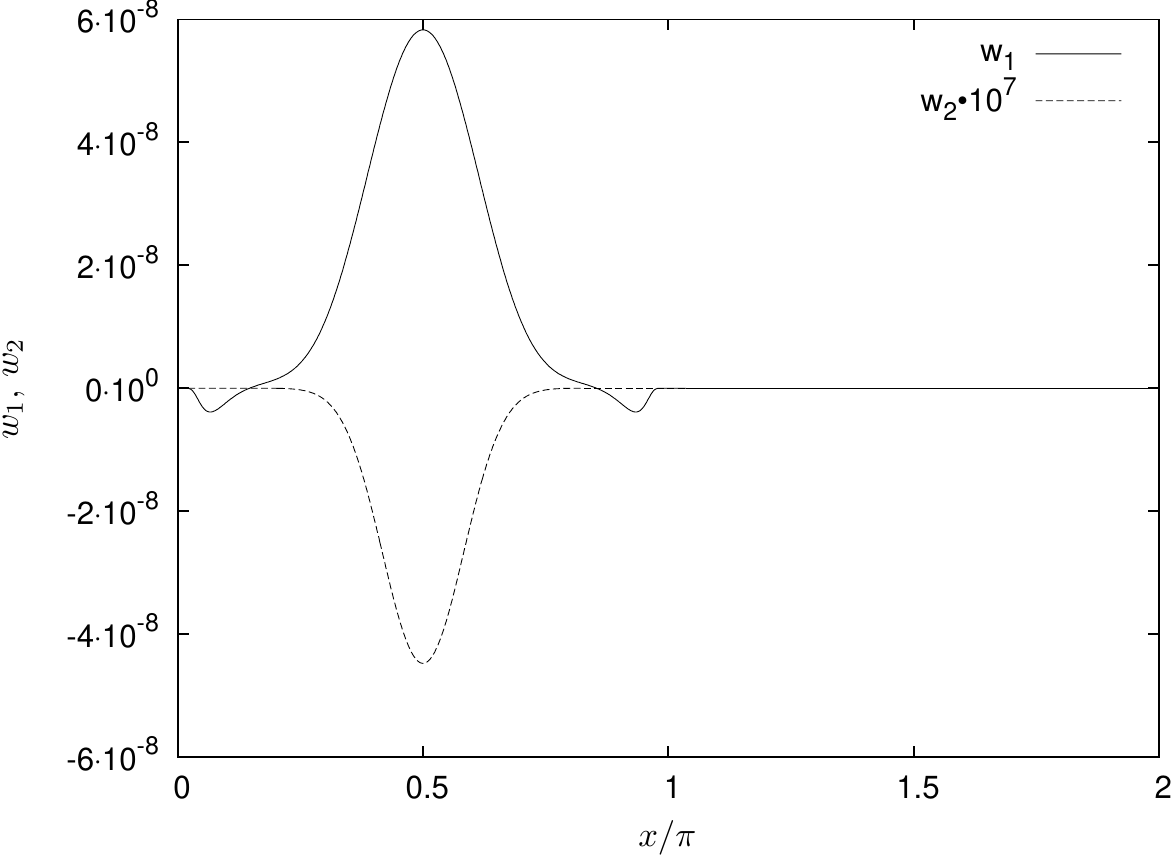}}
 \subfigure[Remainders of $P,Q$ at $\tau=0.0$.]{%
   \includegraphics[width=0.49\textwidth]{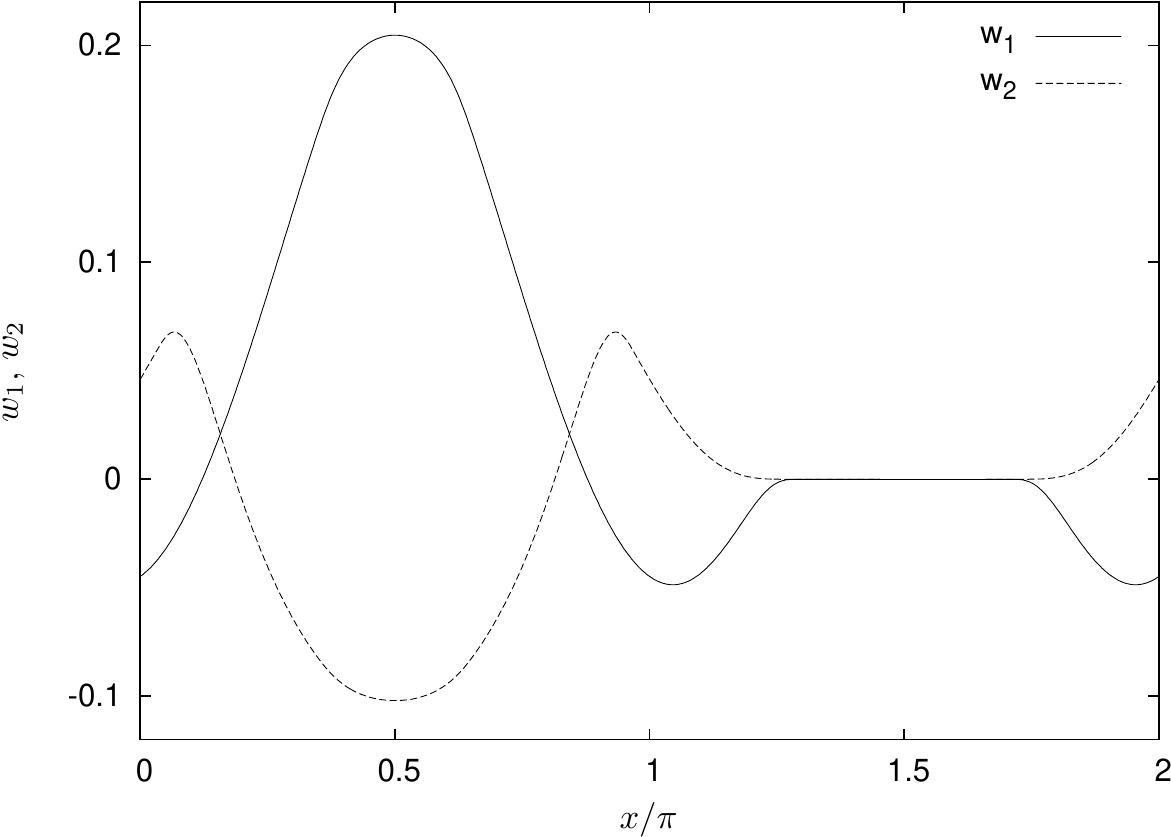}}
 \caption{Numerical solutions corresponding to asymptotic data
   in \Sectionref{sec:GowdyCauchyhorizon}.}
 \label{fig:CH1Dplots}
\end{figure}
In \Figref{fig:CH1Dplots}, we show the numerical
solution obtained from $N=1000$, $\Delta\tau=0.0025$ and $\tau_0=-18$.
We plot the Kretschmann scalar at two times $\tau=-10$ and
$\tau=0$. Hence, near the time $t=0$ (corresponding to
$\tau=-\infty$),
 the Kretschmann scalar is large on the
spatial interval $(0,\pi)$ while it stays bounded at $(\pi,2\pi)$. At
the later time, the curvature becomes smaller as expected. We also
plot the remainders $w^{(1)}$ and $w^{(2)}$ of $P$ and $Q$,
respectively. It is instructive to study how the polarized region inside
$(\pi,2\pi)$ gets ``displaced'' by the non-polarized solution.  


\section{Concluding remarks}

This paper presented a new approach to the singular initial value
problem for second-order Fuchsian-type equations. Our original motivation was to find a
reliable and accurate numerical scheme which, following \cite{ABL}, allowed us to evolve data 
numerically from the singularity. It turned out that the classical 
Fuchsian theory --despite its major
successes otherwise-- was not directly applicable to our puspose. 
This led us, on one hand, 
to develop a new approximation scheme, 
which is particularly natural to handle hyperbolic Fuchsian equations,
and, on the other hand, 
to revisit the classical theory and deduce a direct existence proof for
the singular initial value problem. 
Our scheme yields a reliable
and accurate numerical method, referred to here as
the {\sl Fuchsian numerical algorithm.}  
Importantly, we demonstrated that our method applies to
Gowdy-symmetric solutions to Einstein's field equations.

Our method should allow us to contribute to the understanding of
strong gravitational fields. In this direction, a particularly
interesting and outstanding problem in general relativity is the
strong cosmic censorship conjecture.  Our approach allows to
numerically construct, in particular, exceptional spacetimes, for
example solutions with Cauchy horizons.  This is of interest for two
reasons, at least. First, we can learn more about the ``solution
space'' of Einstein's field equations and, consequently, about the
validity of general relativity as a physical theory, due to the
unusual and sometimes physically ``undesired'' properties of these
exceptional solutions. We are currently investigating the geometries
of the solutions obtained in \Sectionref{sec:GowdyCauchyhorizon} in
greater detail \cite{BeyerLeFloch3}.


Second, our method allows one to study the stability of such
exceptional solutions, which was checked here,   
as a first step, by perturbing the induced data on a
spacelike hypersurface and computing the corresponding solution via 
the standard (backward) approach. This allows a systematic numerical
study of the strong cosmic censorship conjecture. Note, however, that
the strong cosmic censorship conjecture is known 
for Gowdy-symmetric solutions \cite{Ringstrom6,Ringstrom7}. Hence, in order to obtain new
interesting results about this conjecture we need to apply our theory
to more general classes of solutions; see \cite{BeyerLeFloch3}.

Our Fuchsian heuristics enables us
to distinguish between ``dominant'' and ``negligible'' terms in the equations as one approaches the ``singularity''. Sometimes these terms can be interpreted as physically interesting quantities 
like ``kinetic'' or ``potential energy''. Importantly, 
we discovered in the present paper 
that the 
principal part of the partial different system 
need not, by itself, 
determines the singular behavior of the solutions.  
Instead, nonlinear terms (classically treated as lower-order source-terms)  
often also play an important
role. This is so for Gowdy-symmetric  solutions, but it is even more
important for general solutions where mixmaster behavior is expected
according to the BKL conjecture. According to this
conjecture and more recent investigations \cite{Lim,Lim2}, spatial
derivative terms are expected to be insignificant except for exceptional points where spikes occur.


\section*{Acknowledgements}

The authors were partially supported by the Agence Nationale de la Recherche
(ANR) through the grant 06-2-134423 entitled {\sl ``Mathematical Methods in General Relativity''} (MATH-GR).
This paper was written during the year 2008--2009
when the first author (F.B.) was an ANR postdoctoral fellow at the Laboratoire J.-L. Lions.   



\begin{thebibliography}{1}

\small 

 \bibitem{ABL} 
 \auth{Amorim P., Bernardi C., and LeFloch P.G.,} 
 Computing Gowdy spacetimes via spectral evolution in future and past directions,  
 \jou{Class. Quantum Grav.} 26 (2009), 1--18.

\bibitem{Andersson}
 \auth{Andersson L.,}  
 The global existence problem in general relativity,
in: ``The Einstein equations and the large scale behavior of gravitational fields. 
50 Years of the Cauchy problem in general relativity'', Birkh\"auser, 2004, pp.~71--120. 

\bibitem{Andersson2} \auth{Andersson L., van Elst H., and Uggla C.},
  Gowdy phenomenology in scale-invariant variables,
  \jou{Class. Quantum Grav.} 21 (2004), S29--S57.  

\bibitem{Andersson3} \auth{Andersson L., van Elst H., Lim W.C., and Uggla
    C.}, Asymptotic silence of generic cosmological singularities,
  \jou{Phys. Rev. Lett.} 94 (2005), 051101.

\bibitem{Berger97} \auth{Berger B.K., Garfinkle D.}, Phenomenology of
  the Gowdy Universe on $T^3 \times R$, \jou{Phys. Rev. D} 57 (1998), 4767--4777.

\bibitem{BergerMoncrief} 
\auth{Berger B.K. and Moncrief V.,}
 Numerical investigations of cosmological singularities, 
 \jou{Phys. Rev. D} 48 (1993), 4676.

\bibitem{BeyerLeFloch1} \auth{Beyer F. and LeFloch P.G.,}
Second-order hyperbolic Fuchsian systems. General theory, 
ArXiv:1004.4885. 

\bibitem{BeyerLeFloch2} \auth{Beyer F. and LeFloch P.G.,}
Second-order hyperbolic Fuchsian systems. 
Gowdy spacetimes and the Fuchsian numerical algorithm, 
ArXiv:1006.2525.

\bibitem{BeyerLeFloch3} \auth{Beyer F. and LeFloch P.G.,}
in preparation.

\bibitem{Choquet52} \auth{Choquet-Bruhat Y.,}
Th\'eor\`eme d'existence pour certains syst\`emes d'\'equations aux d\'eriv\'ees partielles non lin\'eaires,
\jou{Acta Math.} 88 (1952), 141--225. 

\bibitem{Choquet69} \auth{Choquet-Bruhat Y. and and Geroch  R.,}
Global aspects of the {C}auchy problem in general relativity,
\jou{Comm. Math. Phys.} 14 (1969), 329--335. 

\bibitem{ChoquetIsenberg} \auth{Choquet-Bruhat Y. and Isenberg J.,}
Half-polarized ${\rm U}(1)$ symmetric vacuum spacetimes with AVTD behavior,
\jou{J. Geom. Phys.} 56 (2006), 1199--1214. 

 \bibitem{Choquet08} \auth{Choquet-Bruhat Y.,}
 Fuchsian partial differential equations, ``WASCOM 2007''---
 ``14th Conference on waves and stability in continuous media'', 
 World Sci. Publ., Hackensack, NJ, 2008, pp.~153--161. 

 \bibitem{Choquet09} \auth{Choquet-Bruhat Y.,}
 {\sl General relativity and the Einstein equations,}
 Oxford Math. Monographs, Oxford Univ. Press, Oxford, 2009.

\bibitem{Chrusciel} 
\auth{Chru\'sciel P.,} 
On spacetimes with $U(1) \times U(1)$ symmetric compact Cauchy surfaces,
\jou{Ann. Phys.} 202 (1990), 100--150.
 
 \bibitem{ChruscielIsenbergMoncrief} 
 \auth{Chru\'sciel P., Isenberg J., and Moncrief V.,} 
 Strong cosmic censorship in polarized Gowdy spacetimes, 
 \jou{Class. Quantum Grav.} 7 (1990), 1671--1680. 

\bibitem{ChruscielIsenberg} 
\auth{Chru\'sciel P. and Isenberg J.,} 
Nonisometric vacuum extensions of vacuum maximal globally hyperbolic spacetimes, 
\jou{Phys. Rev. D} 48 (1993), 1616--1628.

\bibitem{ChruscielLake} 
\auth{Chru\'sciel P. and Lake K.,} 
Cauchy horizons in Gowdy spacetimes,
\jou{Class. Quantum Grav.} 21 (2004), S153--S169.

\bibitem{Damour} \auth{Damour T., Henneaux M., and Nicolai H.}
Cosmological billards, \jou{Class. Quant. Grav.} 20 (2003), R145--R200. 


\bibitem{Eardley71} \auth{Eardley D., Liang E., and Sachs R.},
  Velocity-dominated singularities in irrotational dust cosmologies,
  \jou{J. Math. Phys.} 13 (1972), 99--106.

\bibitem{Geroch1}
\auth{Geroch R.}, A method for generating solutions of Einstein's equations,
\jou{J. Math. Phys.} 12 (1971), 918.

\bibitem{Geroch2} \auth{Geroch R.}, A method for generating new
  solutions of Einstein's equation. 2, 
  \jou{J. Math. Phys.} 13 (1972), 394--404.

\bibitem{Gowdy73}
 \auth{Gowdy R.H.,} 
 Vacuum space-times with two parameter spacelike isometry groups and
 compact invariant hypersurfaces: Topologies and boundary conditions,
 \jou{Ann. Phys.} 83 (1974), 203--241.

\bibitem{hawking} \auth{Hawking S.W. and Ellis G.F.R.,}
 {\sl The large scale structure of space-time,}
 Cambridge Univ. Press, Cambridge, 1973.

\bibitem{Heinzle}
\auth{Heinzle J.M., Uggla C., and R\"ohr N.}, The cosmological billard attractor,
\jou{Adv. Theor. Math. Phys.} 13 (2009), 293-407.


\bibitem{HennigAnsorg} 
\auth{Hennig J. and Ansorg M.,} 
Regularity of Cauchy horizons in $S^2 \times S^1$ Gowdy spacetimes,
\jou{Class. Quantum Grav.} 27 (2010), 065010.

 \bibitem{IsenbergMoncrief} 
 \auth{Isenberg J. and Moncrief V.,}
 Asymptotic behavior of the gravitational field and the nature of singularities in Gowdy spacetimes, 
 \jou{Ann. Phys.} 99 (1990), 84--122.

 \bibitem{KichenassamyRendall}
 Kichenassamy S. and Rendall A.D., 
 Analytic description of singularities in Gowdy spacetimes,
 \jou{Class. Quantum Grav.} 15 (1998), 1339--1355.

\bibitem{Kichenassamy} \auth{Kichenassamy S.,}
 {\sl Fuchsian reduction. Applications to geometry, cosmology and mathematical physics,}
 Birkh\"auser, Boston, 2007.

\bibitem{Kreiss2}
\auth{Gustafsson B., Kreiss H.O., and Oliger J.},
{\sl Time dependent problems and difference methods,} 
Wiley-Interscience, New York, 1995.

\bibitem{Kreiss} 
  \auth{Kreiss H.O., Petersson N.A., and Ystrom J.,}
  Difference approximations for the second-order wave equation, 
  \jou{Siam J. Numer. Anal} 40 (2002), 1940--1967.

\bibitem{Moncrief} 
 \auth{Moncrief V.,}
  Global properties of Gowdy spacetimes with $T^3 \times \RR$ topology, 
  \jou{Ann. Phys.} 132 (1981), 87--107.

\bibitem{LeVeque} 
\auth{LeVeque R.J.,}
{\sl Finite difference methods for ordinary and partial differential equations. 
Steady-state and time-dependent problems,}
Soc. Indust. Applied Math. (SIAM), Philadelphia, PA, 2007. 

\bibitem{Lim}
\auth{Lim W.C.}, The dynamics of inhomogeneous cosmologies, 
\jou{Ph.~D. thesis,} 
University of Waterloo (Canada), 2004. See arXiv:gr-qc/0410126.

\bibitem{Lim2} \auth{Lim W.C.}, 
New explicit spike solutions.
  Non-local component of the generalized mixmaster attractor,
  \jou{Class. Quantum Grav.} 25 (2008), 045014.

 \bibitem{Rendall00}
 \auth{Rendall A.D.,}
 Fuchsian analysis of singularities in Gowdy spacetimes beyond analyticity,
 \jou{Class. Quantum Grav.} 17 (2000), 3305--3316.

\bibitem{RendallWeaver} 
\auth{Rendall A.D. and Weaver M.,}
Manufacture of Gowdy spacetimes with spikes,
\jou{Class. Quantum Grav.} 18 (2001), 2959--2975. 


\bibitem{Ringstrom4} 
\auth{Ringstr\"om H.,}  
Curvature blow up on a dense subset of the singularity in T3-Gowdy,
\jou{J. Hyper. Diff. Equa.} 2 (2005), 547--564. 

 \bibitem{Ringstrom6} 
 \auth{Ringstr\"om H.,}  
 Strong cosmic censorship in $T^3$-Gowdy spacetimes, 
 \jou{Ann. of Math.} 170 (2009), 1181--1240.

 \bibitem{Ringstrom7} 
 \auth{Ringstr\"om H.}, Cosmic censorship for Gowdy spacetimes,
\jou{Living Reviews in Relativity,}
http://rela\-tivity.living\-reviews.org\-/articles/lrr-2010-2. 

\bibitem{Uggla}
\auth{Uggla C., van Elst H., Wainwright J., and Ellis G.F.R}, 
The past attractor in
inhomogeneous cosmology, \jou{Phys. Rev. D} 68 (2003), 103502.

\bibitem{Wainwright} \auth{Wainwright J. and Ellis G.F.R.,}
 {\sl Dynamical systems in cosmology,}
 Cambridge Univ. Press, Cambridge, 1997.

\end{thebibliography}
\end{document}